\tikzset{circ/.style={circle,draw, solid, inner sep=0pt,minimum size=8mm}}
\newtheorem{theorem}{Theorem}
\theoremstyle{definition}
\newtheorem{definition}[theorem]{Definition}
\newtheorem{example}[theorem]{Example}
\newtheorem{corollary}[theorem]{Corollary}
\newtheorem{lemma}[theorem]{Lemma}
\newenvironment{customex}[1]
  {\innercustomex}
  {\endinnercustomex}
\crefname{definition}{Definition}{Definitions}
\crefname{figure}{Figure}{Figures}
\let\vec\mathbf
\newcommand{\Maj}{\mathit{Maj}}
\newcommand{\LIQUID}{\textsc{Liquid}}
\newcommand{\calI}{\mathcal{I}}
\newcommand{\calL}{\mathcal{L}}
\newcommand{\calF}{\mathcal{F}}
\newcommand{\agg}{\mathit{agg}}
\newcommand{\MinMax}{\textsc{MinMax}\xspace}
\newcommand{\MinSum}{\textsc{MinSum}\xspace}
\newcommand{\LexiMin}{\textsc{LexiMin}\xspace}
\title{Unravelling Expressive Delegations: Complexity and Normative Analysis}
\author {
    Giannis Tyrovolas\textsuperscript{\rm 1},
    Andrei Constantinescu\textsuperscript{\rm 2},
    Edith Elkind\textsuperscript{\rm 3}
}
\begin{document}

\maketitle

\begin{abstract}
We consider binary group decision-making under a rich model of liquid democracy recently proposed by \citet{grandi}:~agents submit ranked delegation options, where each option may be a function of multiple agents' votes; e.g., ``I vote yes if a majority of my friends vote yes.'' Such ballots are unravelled into a profile of direct votes by selecting one entry from each ballot so as not to introduce cyclic dependencies. We study delegation via monotonic Boolean functions, and
two unravelling procedures:~\MinSum, which minimises the sum of the ranks of the chosen entries, and its egalitarian counterpart, \MinMax. We provide complete computational dichotomies:~\MinSum is hard to compute (and approximate) as soon as any non-trivial
functions are permitted, and polynomial otherwise; for \MinMax the easiness results extend to arbitrary-arity logical ORs and ANDs taken in isolation, but not beyond. 
For the classic model of delegating to individual agents, we give asymptotically near-tight algorithms for carrying out the two procedures, and efficient algorithms for finding optimal unravellings with the highest vote count for a given alternative. These algorithms inspire novel tie-breaking rules for the setup of voting to change a status quo. We then introduce a new axiom, which can be viewed as a variant of the participation axiom, and use algorithmic techniques developed earlier in the paper to show that it is satisfied by \MinSum and a lexicographic refinement of \MinMax (but not \MinMax itself). 
\end{abstract}

\section{Introduction}

Liquid democracy, also referred to as delegative democracy, is a decision-making mechanism that allows the voters more flexibility than representative democracy:~While every voter may vote directly on an issue, voters may also delegate their vote to other voters.
Crucially, delegations are \emph{transitive}: if~Alice delegates to Bob, and Bob delegates to Claire, then Claire votes with the combined power of all three.

Variants of this idea have been explored by many authors; see, e.g., the survey by \citet{origins_of_liquid}. In particular, \citet{ford2002delegative} argues that large-scale direct democracy is infeasible and likely undesirable, whereas
representative democracy is rather rigid: It requires holding elections every so often, with winners representing their entire constituency and losers gaining no representative power.
Further, there is a hard limit on the total number of elected representatives, and every voter can pick from a limited number of candidates.
Delegative democracy can then be seen as a balance of the two, by challenging the premise that the number of representatives needs to be kept small.
In this model, voters can vote directly on issues if they wish to do so. Passive voters can delegate to
representatives called {\em delegates}.
Delegates are not chosen at fixed time points, but need to canvass voters continually,
and acquire the combined power of all voters that delegate to them.
They can vote on issues directly, or delegate to other 
delegates.
Importantly, delegates need not win competitive elections, and their votes are public for the sake of accountability. 
Ford introduces several strengthenings of liquid democracy, such as voters delegating fractions of their votes to different delegates.
A particularly interesting variant is to allow agents to submit ``multiple delegation choices in order of preference.'' This is done partly to deal with the case of cycles, where Alice delegates to Bob, but also Bob delegates to Alice.

\citet{grandi} further extend this model, by allowing each delegation choice to be an arbitrary function of other agents' selections. That is, instead of delegating to Bob, Alice can declare that she supports option X if at least one of Bob, Charlie and Diana supports it (for a formal treatment of this model, see Section~\ref{section:prelims}).
An agent's ballot is then a ranked list of such functions, in order of preference.
This enhancement of the basic model means that cycles can be introduced in much more complicated ways.

Colley et al.~propose two ``unravelling'' procedures to resolve these cycles. \MinSum is a utilitarian method that minimises the sum of preference levels used, and \MinMax is an egalitarian method that minimises the maximum of the preference levels used.
They prove that \MinSum is NP-hard if agents are allowed to use arbitrary-arity logical ANDs and \MinMax is NP-hard for arbitrary Boolean functions.
Also, they give polynomial-time algorithms to unravel instances where agents can only delegate directly to other agents.
\subsection{Our Contribution}
We focus on binary issues and strengthen the complexity results of Colley et al.~to
complete computational dichotomies:~\MinSum is hard to compute (and approximate) as soon as any non-trivial
functions are permitted, and polynomial otherwise; for \MinMax the easiness results extend to arbitrary-arity logical ORs and ANDs taken in isolation, but not beyond. For the standard model of delegation to individual agents, we design algorithms that are faster than those of Colley et al.~(indeed, nearly optimal). Then, we consider the problem of determining if there exists an unravelling where a particular alternative is selected, giving efficient algorithms in both cases. For \MinSum, we do this by leveraging Fulkerson’s primal-dual algorithm for computing minimum-cost arborescences \citep{fulkerson}.
Our results suggest novel tie-breaking rules for both unravelling procedures, which are particularly appealing when one of the two input alternatives is seen as a status quo. Interestingly, Fulkerson's ideas also turn out to be useful
from a normative perspective: we put forward a new axiom, which can be seen as an 
adaptation of the classic participation axiom to the setting of liquid democracy, 
and use our algorithmic techniques to show that it is satisfied by \MinSum and a lexicographic refinement of \MinMax, but not \MinMax itself.

\subsection{Related Work}
The formal model of liquid democracy with ranked delegations was put forward by \citet{kotsialou}, who were interested in identifying unravelling procedures
that avoid cycles and have good normative properties. They study the properties of breadth-first search and depth-first search in this context. Their model inspired a number of recent papers \citep{grandi,brill,popular_branchings,utke2023anonymous}; in particular,
the work of Colley et al. extends the model of \citet{kotsialou} to more expressive ballots and is the direct predecessor of our work. \citet{brill} define the concept of delegation graphs, for which unravelling procedures are called delegation rules. They study two important subclasses:~sequence rules and confluent rules; the latter corresponding to selecting an arborescence of the graph. \MinSum and \MinMax are both confluent rules. \citeauthor{brill} provide a wide range of normative and experimental results. Of importance to us, \textsc{MinSum},\footnote{Called BordaBranching in their paper to distinguish it from a different rule called MinSum.} corresponding to minimum-cost arborescences, satisfies two attractive axioms---guru-participation and copy-robustness (the latter is similar to our new axiom)---if tie-breaking is performed lexicographically with respect to a fixed ordering of the agents. In contrast, our axiom is formulated for irresolute \MinSum, and can be satisfied by a resolute anonymous rule (i.e., treating all agents symmetrically), at the expense of
neutrality.
Moreover, building on the popular arborescences framework of \citet{popular_branchings}, the authors experimentally show that on most instances a majority of voters prefer \MinSum arborescences to other arborescences.

There is also a substantial body of work on liquid democracy in the setting
with ground truth \citep{kahng2021liquid,golz2021fluid,caragiannis2019contribution,becker2021can,alouf2022should}, as well as on game-theoretic aspects of delegation 
\citep{bloembergen2019rational,escoffier2020iterative,zhang2021power}.

\section{Preliminaries}
\label{section:prelims}
A finite set of agents/voters $N = \{1, \ldots, n\}$ aim to select an outcome from a finite set $D$ of alternatives. 
For most of our results, we focus on the \emph{binary} setting $D = \{0, 1\}.$

\subsection{Classic Liquid Democracy}\label{sect:classic-liquid}
We first describe the classic model of liquid democracy with ranked delegations; then, in the next section, we will delve into its generalisation proposed by Colley et al.~(\citeyear{grandi}). We will refer to the latter model simply as \emph{liquid democracy} whenever clear from the context. 

The decision-making process proceeds as follows. First, each agent $a \in N$ casts an ordered ballot with size $k_a$: $B_a = B_a(0) \succ \ldots \succ B_a(k_a - 1) \succ \tau_a$,
where $B_a(i) \in N \setminus \{a\}$ for each $i=0, \dots, k_a-1$ and $\tau_a \in D.$ We require  $B_a(i) \neq B_a(j)$ for $0 \leq i < j < k_a.$ For brevity, we write $B_a(k_a) = \tau_a$.  Intuitively, agent $a$ wants to delegate their vote first to $B_a(0),$ as a second option to $B_a(1),$ and so on; their final backup option is to vote directly for $\tau_a \in D.$ When $k_a = 0,$ the ballot of agent $a$ is $B_a = \tau_a,$ which indicates that $a$ will vote directly for $\tau_a.$ After the ballots are cast, the electoral authority \emph{unravels} the preferences by turning the collection of ballots $\vec{B} = (B_a)_{a \in N}$ into a collection of \emph{concrete} votes $\vec{X} = (X_a)_{a \in N} \in D^N.$ Then, a deterministic aggregation rule $\agg : D^N \to D$ is applied to $\vec{X}$ to compute the final decision $\agg(\vec{X}).$ We will sometimes work with \emph{partial} collections of concrete votes $\vec{X} \in (D \cup \{\bot\})^N$ where some entries are unspecified (as indicated by $X_a = \bot$).

To unravel the preferences, one can select, for each agent $a\in N$, a position $0 \leq c_a \leq k_a$ from their ballot and use the respective entry $B_a(c_a)$ to compute the concrete votes. This approach has the attractive feature of explainability:
The collection of positions $\vec{c} = (c_a)_{a \in N}$ forms a \emph{certificate}, which can be reported together with the electoral outcome as an explanation for it. Unravelling procedures that work in this fashion are called \emph{confluent} \citep{brill}. They provide a high degree of accountability, 
as each agent can see whether their direct vote $\tau_a$ was used, or which other agent their vote was delegated to. 

Naturally, not all certificates are acceptable: e.g., if Alice and Bob want to delegate to each other as their first preference, the certificate $\vec{c} = (0, 0)$ leads to a delegation cycle, and hence to no concrete votes being computed for Alice and Bob. 
Following the terminology of Colley et al.~(\citeyear{grandi}), we call such certificates \emph{inconsistent}. It is convenient to formalise this concept in the language of graphs.

A tuple $(N, D, (B_a)_{a \in N})$ can be seen as a directed weighted graph called the \emph{delegation graph}. Its vertex set is  $N \cup D \cup \{r\}$, where $r$ is a special vertex (the {\em root}). There is an edge $d \rightarrow r$ of weight 0 for each $d \in D$ as well as an edge $a \rightarrow B_a(i)$ 
of weight/cost\footnote{We use these terms interchangeably.} 
$i$ for each $a \in A$ and $0 \leq i \leq k_a.$ A certificate $\vec{c}$ is \emph{consistent} if the set of edges $T_\vec{c} = \{(a, B_a(c_a)) \mid a \in A\} \cup \{(d, r) \mid d \in D\}$ is an $r$-\emph{arborescence} (i.e., a directed spanning tree with sink $r$) of the delegation graph. In particular, every node can reach $r$ along edges in $T_\vec{c}$, so there are no delegation cycles. For brevity, we shorten $r$-arborescence to simply arborescence. A consistent certificate $\vec{c}$ induces concrete votes $\vec{X}$ as follows: $X_a=d$ whenever $d$ is on the $a$--$r$ path in $T_\vec{c}$; we say that agent $a \in N$ \emph{votes} $d \in D$ in $\vec{c}$.

When $\vec{B}$ admits multiple consistent certificates, it is natural to prefer ones selecting options that are ranked as highly as possible in the ballots. In particular, one can select the certificates that minimise $\sum_{a \in N}c_a$ or $\max_{a \in N}c_a,$ i.e., maximise the \emph{utilitarian} or \emph{egalitarian} social welfare. We call such certificates \MinSum and \MinMax certificates, and extend this terminology to unravelling procedures that select certificates optimising the respective quantities.
Notice that these procedures are irresolute, as a minimiser of $\sum_{a \in N}c_a$ or $\max_{a \in N}c_a,$ need not be unique. In graph-theoretic terms, \MinSum and \MinMax certificates correspond to, respectively, minimum weight and minimum bottleneck arborescences.
Given a (potentially irresolute) unravelling procedure $F,$ e.g., \MinSum or \MinMax, and ballots $\vec{B},$ we write $F(\vec{B})$ for the set of collections of concrete votes $\vec{X}$ that can be output by $F$ when executed with input $\vec{B}.$ When $F$ is resolute, we simply write $F(\vec{B}) = \vec{X}$ instead of $F(\vec{B}) = \{\vec{X}\}.$

\subsection{Enhanced Liquid Democracy}\label{sect:prelims-enhanced-liquid}

We now present {\em smart voting},\footnote{Not to be confused with the Swiss voting advice application \texttt{www.smartvote.ch}.} as defined by Colley et al.~(\citeyear{grandi}): the extension of the classic delegation model that allows agents to delegate to complex functions of the other agents. For readability, we limit ourselves to the binary setting. We first take a short detour to introduce the notions of Boolean logic needed for our examples and results. 

A \emph{$k$-ary Boolean function} $f \colon \{0, 1\}^k \to \{0, 1\}$ is a mapping from $k$-tuples of binary values to binary values. Function $f$ is \emph{monotonic} if for all $k$-tuples $\vec{x}, \vec{y}$ such that $x_i \leq y_i$ for $1 \leq i \leq k$ it holds that $f(\vec{x}) \leq f(\vec{y}).$ Two $k$-ary Boolean functions $f$ and $g$ are equal if they are \emph{extensionally equal}, i.e., $f(\vec{x}) = g(\vec{x})$ for any $k$-tuple $\vec{x}.$ Given a $k$-ary Boolean function $f,$ substituting values for $\ell$ of its arguments yields a $k'$-ary function for $k' = k - \ell.$ In a slight abuse of notation, we use $0$ and $1$ to denote the two constant functions (of arity zero). We express Boolean functions via {\em Boolean formulas} using connectives such as 
$\land$, $\lor$ and $\neg$;
we often write $\neg x$ as $\overline{x}.$ 
For instance, the ternary majority function $\textsc{Maj}(x, y, z)$ can be written as $(x \land y) \lor (y \land z) \lor (z \land x).$ 
When a Boolean function is monotonic, it can be expressed by a \emph{monotone} Boolean formula, i.e., without negations.

We will work with $n$-ary Boolean functions with arguments indexed by the set of agents $N,$ i.e., functions $f : \{0, 1\}^N \to \{0, 1\}.$ 
When a function does not require some of its arguments, we may omit them, and consider an ``equivalent'' function of smaller arity. Formally, if $f$ only depends on its arguments indexed by $S \subsetneq N$ then we will view it as $f' : \{0, 1\}^S \to \{0, 1\}.$ Conversely, in contexts requiring it, a function $f' : \{0, 1\}^S \to \{0, 1\}$ denotes the function $f : \{0, 1\}^N \to \{0, 1\}$ with redundant arguments added for $N \setminus S.$ Given a (possibly partial) list of concrete votes $\vec{X} \in \{0, 1, \bot\} ^N,$ we write $f(\vec{X})$ for the function obtained from $f$ by substituting $a \mapsto X_a$ for all $a \in N$ with $X_a \neq \bot.$

In the enhanced liquid democracy model of Colley et al.~(\citeyear{grandi}),
for each $a \in N$ and $0 \leq i < k_a$ the $i$-th entry $B_a(i)$ in agent $a$'s ballot is a Boolean function with arguments indexed by $N \setminus \{a\}.$ The final entries $B_a(k_a) = \tau_a$ naturally correspond to one of the two constant functions.

\begin{figure}
    \centering
    \begin{tikzpicture}
        \tikzset{node distance=0.95cm and 1.35cm}
    
        \node[circ] (a){$a$};
        
        \node[circ, right= of a](b) {$b$};
        \node[circ, below= of b](c) {$c$};

        \node[circ, right= of b](d) {$d$};
        
        \node[circ, right= of c](e) {$e$};
        \node[circ, below= of e](f) {$f$};
        \node[circ, below= of f](g) {$g$};

        \node[circ, right= of e](zero) {0};
        \node[circ, below = of zero](one){1};

        \node[draw,  dashed, rounded rectangle, rotate fit = 90, fit={(b)(c)}](elli_bc) {};

        \node[draw, dashed, rounded rectangle, rotate fit = 90, fit={(f)(g)}](elli_fg) {};
        \node[draw, dashed, rounded rectangle, rotate fit = 90, fit={(f)(g)(e)(elli_fg)}](elli_fge) {};

        \draw[->](a) to node[midway,above] {$\lor$} (elli_bc.north);
        \draw[->, dashed](a) to [bend left] (b);

        \draw[->] (b) to [bend left] node[midway,above] {$\neg$}(d);
        \draw[->, dashed] (b) to [bend left] (c);

        \draw[->] (c) to node[midway,above, xshift = 2mm] {$\Maj$}(elli_fge);
        \draw[->, dashed] (c) to [bend left] (a);

        \draw[->] (d.north) to [bend right] (a.north);
        \draw[->, dashed] (d) to (c);
        \draw[->, dotted] (d) to (one);

        \draw[->, dashed] (e) to node[midway,right] {$\land$}(elli_fg);
        \draw[->] (e) to (d);
        \draw[->, dotted] (e) to (zero);

        \draw[->] (f) to [bend left] (c);
        \draw[->, dashed] (f) to (zero);

        \draw[->] (g) to (one);

    \end{tikzpicture}
    \caption{Preference profile in \cref{example:full-model}. Solid lines indicate first preferences, dashed lines indicate second preferences, and dotted lines indicate third preferences. To avoid clutter we have removed the third preferences of $a, b$ and $c$.}
    \label{fig:example}
\end{figure}%
\begin{example}\label{example:full-model}
Consider agents $N = \{a, b, c, d, e, f, g\}$ casting ballots $\vec{B} = (B_a)_{a \in N}$ as follows (\cref{fig:example}): 
\begin{align*}
    B_a &= b \lor c \succ b \succ 0
    &B_b &= \overline{d} \succ c \succ 1 \\
    B_c &= \Maj\left(e, f, g\right) \succ a \succ 1
    &B_d &= a \succ c \succ 1 \\
    B_e &= d \succ f \land g \succ 0
    &B_f &= c \succ 0 \\
    B_g &= 1
\end{align*}
\end{example}

The enhanced model generalises classic liquid democracy in that the latter only allows projection functions (direct delegations) and constant functions (direct votes). Most definitions from the classic model extend in a straightforward way, but a few important points need to be discussed. First, what language is used to express Boolean functions?
Note that the validation check $B_a(i) \neq B_a(j)$ for all $0 \leq i < j \leq k_a$\footnote{Note the last $\leq$ instead of $<.$} has to be replaced with (extensional) function equality: a simple textual comparison would not suffice, as a Boolean function can be expressed by 
multiple Boolean formulas. We want the function-definition language to allow for this check to be performed efficiently, i.e., in polynomial time; hence, the language of arbitrary Boolean formulas is not suitable. Perhaps most importantly, the definition of the 
delegation graph does not extend to the enhanced model, and hence we need a new definition
of a consistent certificate. As the enhanced model generalises the classic model, this definition should coincide with the classic one for instances that only use direct delegations and constants. It will be convenient to use the following reformulation of consistency for the classic model (the proof follows by considering arborescence traversals where vertices occur after their parents).

\begin{lemma} In the classic model, a certificate $\vec{c}$ is consistent iff there exists a linear order $\triangleleft$ on $N$ such that for all agents $a \in N$ it holds that either $B_a(c_a) \in D$ or $B_a(c_a) \triangleleft a.$
\end{lemma}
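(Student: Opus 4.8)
The plan is to reduce consistency to acyclicity and then invoke the standard equivalence between acyclicity and the existence of a topological (linear) order. First I would observe that in $T_\vec{c}$ every vertex other than $r$ has out-degree exactly one: each agent $a$ contributes the single edge $a \to B_a(c_a)$, each $d \in D$ contributes $d \to r$, and $r$ has no outgoing edge. Since $T_\vec{c}$ has exactly $|N| + |D| = |N \cup D \cup \{r\}| - 1$ edges, it is an $r$-arborescence if and only if it is acyclic: following the unique out-edge from any vertex in an acyclic functional graph must terminate at the unique sink $r$, yielding a spanning tree directed toward $r$, and conversely an arborescence is acyclic by definition.

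Next I would localise where cycles can occur. Because $r$ has out-degree zero it lies on no directed cycle, and because each $d \in D$ points only to $r$ (which cannot return to it), no $d$ lies on a cycle either. Hence every directed cycle of $T_\vec{c}$ is contained in $N$ and uses only the delegation edges $a \to B_a(c_a)$ with $B_a(c_a) \notin D$. Therefore $\vec{c}$ is consistent if and only if the subgraph $G_\vec{c}$ on vertex set $N$ with edge set $\{(a, B_a(c_a)) : B_a(c_a) \notin D\}$ is acyclic.

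For the ($\Leftarrow$) direction, given a linear order $\triangleleft$ with the stated property, every edge of $G_\vec{c}$ goes from $a$ to a strictly $\triangleleft$-smaller vertex $B_a(c_a)$, so traversing any directed path strictly decreases the $\triangleleft$-rank; a cycle would force a vertex to be strictly smaller than itself, which is impossible, so $G_\vec{c}$ is acyclic and $\vec{c}$ is consistent. For the ($\Rightarrow$) direction I would use the traversal suggested in the text: root the arborescence at $r$ and visit vertices outward from $r$ (against the edge directions), so that each vertex is visited strictly after its out-neighbour (its parent toward the root). Letting $\triangleleft$ order the agents of $N$ by the time they are first visited, each delegation edge $a \to B_a(c_a)$ has its parent endpoint $B_a(c_a)$ visited before $a$, i.e.\ $B_a(c_a) \triangleleft a$, while direct-vote entries $B_a(c_a) \in D$ satisfy the condition vacuously; equivalently, any topological order of the acyclic $G_\vec{c}$ works. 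I do not expect a genuine obstacle here — the only point requiring care is the opening equivalence ``arborescence $\iff$ acyclic'', which hinges on the fact that all out-degrees are already pinned to one by the structure of the certificate, so that the usual counting and functional-graph arguments apply cleanly.
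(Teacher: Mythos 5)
Your proof is correct and follows essentially the same route the paper sketches: the paper's one-line justification ("considering arborescence traversals where vertices occur after their parents") is exactly your forward direction, and your converse via "edges decrease $\triangleleft$-rank, hence no cycles, hence an acyclic functional graph is an arborescence" is the standard completion the paper leaves implicit. The only cosmetic slip is that in the last sentence you want a \emph{reverse} topological order of $G_{\vec{c}}$ (edges must point to $\triangleleft$-smaller vertices), but your primary traversal argument already handles this correctly.
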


In other words, a certificate is consistent iff the concrete votes can be computed by starting with $\vec{X} = (\bot, \ldots, \bot)$ and at each step picking an agent $a \in N$ with $X_a = \bot$ and resolving their vote, i.e., setting $X_a$ either to $B_a(c_a) \in D$ or to $X_{B_a(c_a)}$; for the latter choice we need $X_{B_a(c_a)} \neq \bot$, i.e., $B_a(c_a) \triangleleft a,$ which should be read as ``the vote of $B_a(c_a)$ is resolved before that of $a$''. 
Armed with the idea that ``concrete votes can be resolved sequentially,'' we extend the definition of consistency to the enhanced model as follows.

In the enhanced model, we say that a certificate $\vec{c}$ is consistent iff there is a linear order $\triangleleft$ on $N$ that is {\em valid}, in the sense that the concrete votes can be computed in the following way. We start with $\vec{X} = (\bot, \ldots, \bot)$ and go through the agents in order given by $\lhd$. Whenever an agent $a \in N$ is reached, let $f = B_a(c_a)$ be their assigned function in the certificate. Then it should be the case that $f$ simplifies to one of the two constant functions $0$ or $1$ when the values known for agents that appear earlier in $\lhd$ are substituted in. Formally, either $f(\vec{X}) = 0$ or $f(\vec{X}) = 1$ should hold, where equality denotes extensional equality.
We set $X_a \gets f(\vec{X})$ and move on to the next agent in order. That is, the way in which functions are evaluated is not unlike lazy function evaluation in some programming languages. 

For instance, let $N=\{a, b, c\}$. Suppose that the function $f = B_a(c_a)$ of agent $a$ assigned by the certificate is $b \lor c,$ and consider the order 
$b \triangleleft a \triangleleft c$. If, when $a$ is reached in the order, we have $\vec{X} = (\bot, 1, \bot)$, we can compute $f(\vec{X}) = f(\bot, 1, \bot) = 1 \lor c = 1$, even though the vote of $c$ is still unknown. However, for $\vec{X} = (\bot, 0, \bot)$ the order $\lhd$ is not valid: $f(\vec{X}) = f(\bot, 0, \bot) = 0 \lor c = c$ is not extensionally equal to a constant function. 

Crucially, the computed concrete votes $\vec{X}$ do not depend on the choice of the valid ordering $\triangleleft:$ This is because every assignment performed is, in a sense, ``forced.''\footnote{To decide if a valid $\triangleleft$ exists it suffices to attempt to construct it iteratively, by picking, at each step, a new agent $a$ whose vote can be determined based on the previous ones.} Hence, for a consistent certificate, the concrete votes $\vec{X}$ are uniquely determined and are the unique solution to the system of simultaneous equations $X_a = B_a(c_a)(\vec{X})$ for all $a \in N.$ One can ask whether the converse is also true. Namely, given a certificate $\vec{c}$ such that the system of simultaneous equations has a unique solution (i.e., the concrete votes are uniquely determined), is it consistent? If this were the case, this would be an argument in favor of consistent certificates as an attractive ground notion. The next theorem shows that is indeed true when delegations are restricted to monotonic functions (see the appendix for the proof, and all other omitted proofs), and we will argue that there are strong arguments for using monotonic functions from a normative perspective.

\begin{restatable}{theorem}{thfixedpoint}\label{th:fixed-point} Assuming delegation is only permitted to monotonic functions, a certificate $\vec{c}$ is consistent iff the system of simultaneous equations $X_a = B_a(c_a)(\vec{X})$ for all $a \in N$ has a unique solution.
\end{restatable}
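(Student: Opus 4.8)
The plan is to prove the two implications separately, using monotonicity only for the converse. Throughout, fix the certificate $\vec{c}$, abbreviate $f_a = B_a(c_a)$, and study the operator $\Phi(\vec{X})_a = f_a(\vec{X})$ on $\{0,1\}^N$, whose fixed points are exactly the solutions of the system. For the forward direction (consistency implies a unique solution), which does not need monotonicity, I would reuse the reasoning preceding the theorem: a valid order $\triangleleft$ resolves each agent $a$ once its predecessors are fixed, and at that moment $f_a$ already evaluates to a constant, so the computed $\vec{X}$ satisfies $X_a = f_a(\vec{X})$ and a solution exists. Uniqueness then follows by induction along $\triangleleft$: for any solution $\vec{Y}$, when $a$ is reached its predecessors agree with $\vec{X}$ by the inductive hypothesis, and since $f_a$ is forced to a constant by those predecessors alone, $Y_a = f_a(\vec{Y})$ must equal the same constant $X_a$. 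The only fact used here is that a function which evaluates to a constant on a partial assignment evaluates to that same constant on every extension, which holds for all functions and which I would isolate as a one-line preliminary lemma.

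For the converse, a unique solution implies consistency, I would attempt to construct a valid order greedily and argue that monotonicity forbids getting stuck. Starting from the all-$\bot$ assignment, repeatedly pick any still-undetermined agent $a$ whose $f_a$ simplifies to a constant under the values fixed so far, append $a$ to the order, and record that constant. If every agent is eventually determined, the order produced is valid by construction, witnessing consistency. So the crux is to rule out the process halting at a proper subset $S \subsetneq N$ of determined agents, with recorded values $\vec{X}_S$ and no remaining forceable agent; write $\overline{S} = N \setminus S$.

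The single place monotonicity enters is a characterisation of ``forced'': for a monotonic $f_a$ whose undetermined arguments lie in $\overline{S}$, the minimum of $f_a$ over all completions of the $\overline{S}$-coordinates is attained by setting them all to $0$ and the maximum by setting them all to $1$. Hence $f_a$ fails to be forced exactly when $f_a(\vec{X}_S, \vec{0}_{\overline{S}}) = 0$ and $f_a(\vec{X}_S, \vec{1}_{\overline{S}}) = 1$, and at a stuck configuration this holds for every $a \in \overline{S}$. I would then exhibit two distinct fixed points, contradicting uniqueness. Set $\vec{Y} = (\vec{X}_S, \vec{0}_{\overline{S}})$ and $\vec{Z} = (\vec{X}_S, \vec{1}_{\overline{S}})$; these differ on the nonempty set $\overline{S}$. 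Both are fixed points: for $a \in \overline{S}$ the stuck condition gives $f_a(\vec{Y}) = 0 = Y_a$ and $f_a(\vec{Z}) = 1 = Z_a$, while for $a \in S$ the value $X_a$ was forced by a partial assignment supported on a subset of $S$, so by the preliminary lemma every completion, in particular $\vec{Y}$ and $\vec{Z}$, evaluates $f_a$ to $X_a$. Two fixed points contradict the assumed unique solution, so the greedy process cannot get stuck and a valid order exists.

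I expect the main obstacle to be the bookkeeping for the set $S$: making precise that the recorded values are preserved under arbitrary completions (so that $\vec{Y}$ and $\vec{Z}$ genuinely solve the system) and that every undetermined agent is non-forced in the min/max sense. It is worth stressing that monotonicity is exactly what guarantees the extremal completions are the all-$0$ and all-$1$ ones, which is what turns a stuck configuration into two honest fixed points; without it the converse fails (for instance, with $N = \{1,2,3\}$ and $f_1 = X_2$, $f_3 = X_1$, $f_2 = X_1 \oplus X_3$, the system has the unique solution $(0,0,0)$ yet no valid order exists).
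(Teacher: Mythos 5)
Your proposal is correct and takes essentially the same route as the paper's proof: run the greedy/iterative forcing procedure, and if it stalls on a nonempty set of undetermined agents, use monotonicity to show that the all-$0$ and all-$1$ completions of the partial assignment are two distinct solutions of the system, contradicting uniqueness. The only cosmetic difference is that you characterise ``not forced'' directly via the extremal completions, whereas the paper reaches the same conclusion by a short contradiction argument.
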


We now return to \cref{example:full-model} to illustrate the new definitions, together with \MinSum and \MinMax, whose definitions remain unchanged.

\begin{customex}{\ref{example:full-model}}[continued] 
Observe that $\vec{c} = (0, \ldots, 0)$ is not a consistent certificate. Indeed, the iterative process for this certificate starts with $\vec{X} = (\bot, \ldots, \bot),$ sets $X_g = 1$ in the first step, but can not proceed further. This is because the knowledge of $X_g$ does not immediately uniquely identify any other value in $\vec{X},$ since, considering first preferences only, $g$ is only mentioned in the first preference of $c,$ and $\Maj(e, f, g) = \Maj(e, f, 1)$ is not uniquely determined without knowing $X_e$ and $X_f$ first.

Hence, if there is a consistent certificate using only the first two preferences of all agents, then it is a \MinMax certificate. One such certificate is $\vec{c} = (0, 1, 0, 0, 1, 1, 0).$ To see this, start again the iterative process with $\vec{X} = (\bot, \ldots, \bot).$ We immediately obtain $X_g \gets 1$ and $X_f \gets 0$. Subsequently, $X_e \gets X_f \land X_g = 0 \land 1 = 0.$ Then, $X_c \gets \Maj(X_e, X_f, X_g) = \Maj(0, 0, 1) = 0,$ which could not have been determined before $X_e$ since $\Maj(X_e, 0, 1)=X_e$. Afterwards, $X_b \gets X_c = 0.$ Then $X_a \gets X_b \lor X_c = 0 \lor 0 = 0$ and $X_d \gets X_a = 0.$ So, the iterative procedure has successfully determined the concrete votes $\vec{X} = (0, 0, 0, 0, 0, 0, 1),$ so $\vec{c}$ is a \MinMax certificate; i.e., $\max_{a \in N} c_a = 1$. Note that the order used to resolve the votes is $g \triangleleft f \triangleleft e \triangleleft c \triangleleft b \triangleleft a \triangleleft d.$ Notationally, the concrete votes satisfy $\vec{X} \in \MinMax(\vec{B}).$

The previous certificate had $\sum_{a \in N} c_a = 3.$ In contrast, a \MinSum certificate is given by $\vec{c} = (0, 0, 2, 0, 0, 0, 0),$ which satisfies $\sum_{a \in N} c_a = 2.$ Performing the unravelling in this case would yield the direct votes $\vec{X} = (1, 0, 1, 1, 1, 1, 1) \in \MinSum(\vec{B}).$
Note that $\vec{c}$ is not the only \MinSum certificate:~$\vec{c'} = (0, 0, 0, 2, 0, 0, 0)$ would be a different one.
\end{customex}

We end these preliminaries by returning to the expressivity of the language used to specify functions. For computational tractability, all functions that appear in the ballots should be polynomial-time computable. Moreover,  ballot validation, as well as checking whether a given certificate is consistent (say after it has been reported to the voters) both require that checking extensional function equality can be achieved in polynomial time. These are all rather demanding properties, and they notably fail unless P = NP even for functions expressed as monotone formulas in disjunctive normal form (DNF), as reported by Colley et al.~(\citeyear{grandi}). As a resolution, they propose restricting the class of allowable functions to minimised DNF formulas, for which the problems become tractable. We do not make this assumption, but instead note that for our results these conditions will be either trivially satisfied or not necessary (the latter holds for the hardness parts of dichotomy results).

\section{Complexity of the Enhanced Model}
\label{section:Complexity}

In this section, we provide complete computational dichotomies for \MinSum and \MinMax for the binary setting assuming that all allowable functions are monotonic. It is easy to see why non-monotonic delegations may cause issues that are more fundamental than computational complexity. E.g., suppose the supporters of a substantial-sized party delegate to the negation of the vote of a high-influence individual who votes directly for 1. This individual is then motivated to vote directly for 0 knowing that this flips the vote of many agents from $0$ to $1$, adding an undesirable strategic element to the setup. 

For $F \in \{\MinSum, \MinMax\}$ and a class of Boolean functions $\calF$ including projections and constants (i.e., direct delegations and direct votes), we write $F_\calF$ for $F$ restricted to delegations to functions in $\calF.$ Notable examples of $\calF$ include \textsc{Bool}, the class of all Boolean functions, \textsc{Mon}, the class of all monotonic Boolean functions, \textsc{Liquid}, which only allows projections and constants (i.e., classic liquid democracy), $\textsc{Or}_2$, which additionally allows binary disjunctions (similarly $\textsc{And}_2$), $\textsc{Or},$ which allows arbitrary-arity disjunctions (similarly $\textsc{And}$) and finally $\textsc{OrAnd}_2 = \textsc{Or}_2 \cup \textsc{And}_2.$ We assume $\calF$ to be invariant to permuting agents in $N;$ e.g., if agent $a$ can delegate to $(b \lor c) \land d,$ then $a'$ can delegate to $(b' \lor c') \land d'$ for any pairwise distinct $a', b', c', d'.$ This assumption is mild given that anonymity is seen as a desirable 
preference aggregation axiom.

For brevity, we adopt the following notational conventions. We say that $\MinSum_\calF$ is NP-hard if the decision problem ``Given $x$, decide if there is a consistent certificate with $\sum_{a \in N}c_a \leq x$'' is NP-hard. We say that it is poly-time computable if there exists a polynomial-time algorithm outputting a $\MinSum_\calF$ certificate. We say that $\MinSum_\calF$ is hard to approximate within a constant factor if, unless P = NP, for no constant $C \geq 1$ does there exist a polynomial-time algorithm outputting a consistent certificate $\vec{c}$ with $\sum_{a \in N}c_a$ at most $C$ times more than this sum for a \MinSum certificate. Moreover, we say that it is hard to approximate within an $n^{1 - \epsilon}$ factor if for no $\epsilon > 0$ the above holds for $C = n^{1 - \epsilon}.$ We extend this terminology to $\MinMax_\calF.$

Colley et al.~(\citeyear{grandi}) show that $\MinSum_\LIQUID$ and $\MinMax_\LIQUID$ are polynomial-time computable
while $\MinSum_\textsc{And}$ and $\MinMax_\textsc{Bool}$ are \textsc{NP}-hard. We strengthen these results by giving a full characterisation of when the problems admit polynomial-time algorithms.

\paragraph{Complexity of $\MinSum_\calF$.}

We start by showing that if either $\textsc{Or}_2 \subseteq \calF$ or $\textsc{And}_2 \subseteq \calF,$ then $\MinSum_\calF$ is NP-hard and hard to approximate within an $n^{1 - \epsilon}$ factor, even for ballots of maximum size two. 

\begin{restatable}{theorem}{thmorandtwohardtopolyapp}\label{thm:or2-and2-hard-to-poly-app} 
$\MinSum_{\textsc{Or}_2}$ and $\MinSum_{\textsc{And}_2}$ are hard to $n^{1 - \epsilon}$-factor approximate even for maximum ballot size 2.
\end{restatable}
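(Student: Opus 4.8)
The plan is to prove the statement for $\MinSum_{\textsc{Or}_2}$ and then obtain $\MinSum_{\textsc{And}_2}$ essentially for free by a duality argument. Globally flipping every concrete vote $0 \leftrightarrow 1$ turns, by De~Morgan, each disjunctive constraint $X_a = X_b \lor X_c$ into the conjunctive constraint $X'_a = X'_b \land X'_c$, sends projections to projections, each backup constant $\tau_a$ to $\overline{\tau_a}$, and leaves all ranks (hence costs) and ballot sizes untouched. Since consistency is characterised by uniqueness of the solution to $X_a = B_a(c_a)(\vec X)$ (\cref{th:fixed-point}), a property invariant under this bijection, every $\textsc{Or}_2$ instance maps to an $\textsc{And}_2$ instance with identical achievable and optimal costs; the set of allowed functions $\{\text{projections}, \text{constants}, \text{binary }\textsc{Or}\}$ maps exactly onto $\{\text{projections}, \text{constants}, \text{binary }\textsc{And}\}$, so the two inapproximability claims are equivalent and it suffices to treat $\textsc{Or}_2$.

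For $\MinSum_{\textsc{Or}_2}$ I would give a gap-preserving reduction from a problem already known to be $n^{1-\epsilon}$-inapproximable. Since $\MinSum$ is a \emph{minimisation} whose objective counts the ranks of the agents pushed off their top choice, the right source is a minimisation whose optimum is itself $n^{1-\epsilon}$-hard: Graph Colouring / Minimum Clique Cover (equivalently, the promise version of Maximum Independent Set separating $\alpha(G)\ge n^{1-\epsilon}$ from $\alpha(G)\le n^{\epsilon}$, shown hard by H{\aa}stad and by Zuckerman). The target is a dichotomous gap: a \textsc{Yes} instance yields a consistent certificate of small cost $s$, whereas every consistent certificate of a \textsc{No} instance costs at least $s\cdot n^{1-\epsilon}$. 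As costs lie in $[0,O(\abs{N})]$ with $\abs{N}$ polynomial in the source size, this is exactly the feasible regime, and the $(1-\epsilon)$ in the exponent is inherited from the source rather than engineered by hand.

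Concretely, I would introduce one vertex-agent per vertex of $G$, plus a few auxiliary and backup agents. Each vertex-agent's rank-$0$ (cost-$0$) entry is a disjunction letting its vote resolve for free as soon as a suitable already-resolved neighbour exists, so that a whole ``class'' of mutually compatible vertices can cascade to resolution off a single cheap trigger, while the edge structure prevents incompatible classes from cascading together. Higher-arity disjunctions are simulated by chaining binary $\textsc{Or}$s through fresh intermediate agents, keeping every delegation in $\textsc{Or}_2$; the only costly ranks are a small number of backup options (a direct vote $\tau_a$, or a single projection), keeping each ballot of size at most two. One then argues that the cheapest way to make the entire system resolvable is to pay once per class of an optimal partition of $G$, so that the optimal $\MinSum$ cost equals, up to constants, the source optimum and the $n^{1-\epsilon}$ separation transfers.

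Consistency of the intended cheap certificate is verified through \cref{th:fixed-point}, by exhibiting a valid order $\lhd$ in which each chosen disjunction simplifies to a constant once its predecessors are substituted (equivalently, the induced system has a unique fixpoint). The main obstacle, and where the technical care concentrates, is the converse \emph{soundness} direction: under lazy evaluation a disjunction $X_b \lor X_c$ resolves not only when some input is $1$ but also when \emph{both} inputs are $0$, so the gadget must be built so that this second escape route cannot be abused to resolve an incompatible class cheaply, i.e.\ so that no unintended certificate beats the class-partition bound. Guaranteeing this while simultaneously (i) using only arity-two disjunctions, (ii) respecting ballot size two, and (iii) preserving the multiplicative $n^{1-\epsilon}$ gap is the crux of the argument; the bare NP-hardness claim then follows a fortiori from the same gap.
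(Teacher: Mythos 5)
Your Or-to-And duality is sound and is essentially what the paper itself does (it re-runs the $\lor$ construction with $\land$ and inverted constant backups, which is exactly the global-negation bijection you describe), so reducing everything to the $\textsc{Or}_2$ case is fine. The problem is that for the $\textsc{Or}_2$ half your proposal stops short of a proof. The reduction is never actually constructed: the vertex gadget, the auxiliary agents, the placement of the costly backup options, and above all the soundness argument (that no certificate can resolve two incompatible ``classes'' off a single payment, in particular via the both-inputs-resolve-to-$0$ route of a lazily evaluated $\lor$) are all deferred, and you yourself label them ``the crux of the argument.'' Identifying the crux without carrying it out does not establish the theorem. For comparison, the paper's gadget is a concrete two-voter cycle per clause, $B_a = x_j \lor b \succ 0$ and $B_b = x_i \lor a \succ 0$, engineered so that when $x_i = x_j = 0$ the two first preferences collapse to a delegation cycle and the only escape is to pay for a backup.

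Independently of the gadgets, your choice of source problem undermines the quantitative claim. The factor $n^{1-\epsilon}$ is measured against the size of the \emph{produced} $\MinSum_{\textsc{Or}_2}$ instance, not against the number of vertices of $G$. Colouring/clique cover gives a gap of roughly $n^{\epsilon_0}$ versus $n^{1-\epsilon_0}$ on $n$-vertex graphs, a ratio of $n^{1-2\epsilon_0}$; but chaining binary $\lor$'s over neighbourhoods requires a fresh intermediate agent per (vertex, neighbour) pair, so the instance has $N = \Omega(n+m)$ agents, which is $\Omega(n^2)$ on the dense instances for which colouring hardness is known (and each intermediate agent needs its own rank-$1$ backup to keep ballots well-formed at size two, further polluting the cost accounting). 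The inherited ratio is then only about $N^{(1-2\epsilon_0)/2}$, i.e.\ roughly $\sqrt{N}$, not $N^{1-\epsilon}$. This is precisely why the paper does not try to inherit the exponent from an inapproximable source: it reduces from the NP-hard decision version of Vertex Cover and manufactures the gap by replicating each clause gadget polynomially many times, so that the instance size and the cost of a violated clause grow together while the yes-case optimum stays at $K \le N^{\epsilon}$. Some amplification of this kind (or a target instance genuinely linear in the source) is needed to reach exponent $1-\epsilon$; your proposal contains neither.
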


We prove this result by reducing from the decision version of \textsc{VertexCover}. We construct voter gadgets for each vertex and edge. Edge gadgets can be resolved using the voters' first preferences if and only if the edge is covered.
We then use gap amplification to show hardness of approximation. The $n^{1 - \epsilon}$ bound is tight for ballots of size at most $2$ since there exists a simple $n$-approximation: unless everyone can get their first (cost $0$) choice, the optimal solution has cost at least $1$, so giving everyone their second choice is an $n$-approximation.

\paragraph{Complexity of $\MinMax_\calF$.}

Next, we show that if $\textsc{OrAnd}_2 \subseteq \mathcal{F}$ then $\MinMax_\calF$ is NP-hard even for ballots of maximum size three,\footnote{Note the contrast with $\MinSum_\calF,$ where two sufficed for hardness. For maximum size two, finding a \MinMax certificate reduces to checking whether $\vec{c} = (0, \ldots, 0)$ is consistent, whose difficulty is given solely by the difficulty of testing function equivalence, which is easy for $\textsc{OrAnd}_2.$} and hard to approximate within a constant factor. In contrast, we prove that $\MinMax_\textsc{Or}$ and $\MinMax_\textsc{And}$ are polynomial-time computable, showing the need for both $\land$ and $\lor$ to obtain hardness.
We begin with the easiness result.

\begin{restatable}{theorem}{thmminmaxbooleanpolytime}\label{thm:minmax-Boolean-poly-time} $\MinMax_\textsc{Or}$ and $\MinMax_\textsc{And}$ are polynomial-time computable (for arbitrary ballot sizes).
\end{restatable}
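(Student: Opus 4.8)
The plan is to reduce the optimisation problem to a sequence of feasibility questions, one per candidate bottleneck value. For a threshold $t \in \{0, 1, \ldots, \max_a k_a\}$, say that a certificate is \emph{$t$-bounded} if $c_a \le t$ for all $a$, i.e.\ every agent uses one of its first $t+1$ options. A \MinMax certificate of value $t^\ast$ exists iff $t^\ast$ is the smallest $t$ for which a consistent $t$-bounded certificate exists, so it suffices to decide, for each $t$, whether a consistent $t$-bounded certificate exists (and to produce one); a linear scan or binary search over $t$ then yields the optimum. Before doing this, I would dispatch the \textsc{And} case by duality: negating all concrete votes turns the fixed-point system $X_a = B_a(c_a)(\vec X)$ for \textsc{And}-functions into the corresponding system for \textsc{Or}-functions (by De Morgan, $\bigwedge$ becomes $\bigvee$, constants swap, projections stay projections), while leaving the chosen positions $\vec c$ untouched. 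Since this is a bijection on assignments that preserves uniqueness of the solution, \cref{th:fixed-point} shows it preserves consistency, so the two problems have identical \MinMax value and it suffices to solve $\MinMax_\textsc{Or}$.

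For the \textsc{Or} feasibility test, I would first read off a combinatorial description of consistency from \cref{th:fixed-point}. Fixing a $t$-bounded certificate, the monotone system $X_a = B_a(c_a)(\vec X)$ has a least and a greatest fixed point, computable by iterating from all-$0$ and all-$1$ respectively, and consistency is equivalent to these coinciding. For \textsc{Or}-functions the least fixed point sets $X_a = 1$ exactly for agents that can \emph{reach} a constant-$1$ entry along chosen dependency edges, whereas the greatest fixed point additionally sets $X_a = 1$ for agents that can reach a directed cycle (a cycle can consistently carry all-$1$). Hence a $t$-bounded certificate is consistent iff, in its chosen dependency graph, every agent that can reach a cycle can also reach a constant-$1$.

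This suggests a two-pass algorithm for a fixed $t$. First compute $W$, the forward closure from the constant-$1$ sources under the $t$-bounded options: repeatedly add any agent that has a $t$-bounded option which is the constant $1$, or an OR/projection containing an already-added agent; this is the maximal set of agents that can possibly vote $1$, and by construction admits an acyclic $1$-justification. Then, on the complement $N \setminus W$, compute the $0$-closure $Z$ from the constant-$0$ sources: repeatedly add any remaining agent that has a $t$-bounded option which is the constant $0$, or an OR \emph{all} of whose inputs already lie in $Z$. I would declare the instance feasible at $t$ iff $W \cup Z = N$, and, when feasible, output the certificate where each agent in $W$ (resp.\ $Z$) selects the lowest-index option witnessing its membership, resolved in closure order $W$ first then $Z$; this order is valid, so the certificate is consistent. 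Each closure is a standard (hyper)graph reachability computation running in polynomial time, and there are only $\max_a k_a + 1$ thresholds, giving overall polynomial time.

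The hard part is proving this greedy ``maximise the $1$-voters'' rule is without loss of generality, since at first sight forcing an agent to $1$ could destroy a $0$-resolution of some OR that contains it. The crux is a structural lemma that dissolves this tension: if an agent $u \notin W$, then $u$ has no $t$-bounded option hitting $W$, so \emph{every} $t$-bounded OR-option of $u$ has all of its inputs in $N \setminus W$. Consequently, in the all-$W$-to-$1$ assignment those inputs are all $0$, so $u$'s only obstacle to voting $0$ is acyclicity, which is precisely what the $Z$-closure checks. For the converse I would take any consistent $t$-bounded certificate with $1$-set $W'$; the reachability reading gives $W' \subseteq W$, and then, processing the $0$-voters in their valid resolution order and using the structural lemma to guarantee each chosen OR's inputs stay inside $N \setminus W$, an induction shows every agent of $N \setminus W$ enters $Z$. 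Thus $W \cup Z = N$ exactly when some consistent $t$-bounded certificate exists, establishing correctness. I expect the bulk of the write-up to be this exchange/induction argument and the careful verification that the emitted ordering is valid.
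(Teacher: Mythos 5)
Your proposal is correct and follows essentially the same route as the paper's proof: a threshold scan over the bottleneck value, the \textsc{And} case by symmetry, a consistency characterisation for \textsc{Or} (your least/greatest-fixed-point reading is equivalent to the paper's ``agents that cannot reach a constant $1$ must induce an acyclic subgraph''), and the same two-phase algorithm that first greedily maximises the set of agents reachable to a constant $1$ and then topologically resolves the remainder to $0$. Your ``structural lemma'' (every option of an agent outside $W$ has all inputs outside $W$) is exactly the observation that makes the paper's second phase sound, so the correctness argument matches as well.
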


We now turn to the hardness results. The proofs are similar in spirit to those for \MinSum. We reduce from a version of \textsc{3SAT}.

\begin{restatable}{theorem}{thhardnessminmax}\label{th:hardness-min-max} $\MinMax_{\textsc{OrAnd}_2}$ is NP-hard even for maximum ballot size 3, 
and hard to constant-factor approximate.
\end{restatable}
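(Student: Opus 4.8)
The plan is to reduce from a suitable variant of \textsc{3SAT} in which we must decide the satisfiability of a formula $\phi$ with clauses $C_1, \ldots, C_m$ over variables $x_1, \ldots, x_n$, building a ballot profile $\vec{B}$ such that a \MinMax certificate has small bottleneck value if and only if $\phi$ is satisfiable. Since we may use only binary ORs and ANDs (the class $\textsc{OrAnd}_2$) but are allowed ballots of size up to three, the key design principle is that the bottleneck cost --- the maximum rank used --- should encode the truth-value choices. First I would introduce for each variable $x_i$ a pair of ``literal'' agents whose first preferences point to each other (or to a shared gadget) so that exactly one of them can be resolved at cost $0$ and the other is forced to a higher-ranked option; this cyclic first-preference trap is what forces a genuine binary choice and is the role the second and third preferences play. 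The guiding intuition, as the paper hints, is the same as for \MinSum: we want the cost-$0$ (first-preference) resolution of an edge/clause gadget to be possible exactly when the corresponding combinatorial object is ``covered'' or ``satisfied.''

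Next I would build clause gadgets. A clause $C_j = \ell_{j1} \lor \ell_{j2} \lor \ell_{j3}$ should be represented by a small arrangement of agents that can resolve all of their votes at rank below the threshold precisely when at least one of its three literals has been set to the satisfying value. Here binary disjunctions let an agent fire as soon as one input is known to be $1$ (lazy evaluation of $\lor$), and binary conjunctions let an agent stall until an input is known to be $0$; composing two binary ORs gives the effect of a ternary OR over the clause's literals, so a clause agent can be resolved cheaply iff the clause is satisfied. The variable-choice gadgets feed their committed truth values into these clause gadgets. I would set the target bottleneck $x$ so that a certificate of cost at most $x$ exists iff every clause agent can be resolved without any agent being pushed to its ``failure'' preference; the final preference $\tau_a$ of each clause agent is the expensive fallback that raises the bottleneck above $x$ when the clause is unsatisfied.

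The main obstacle, and the place where most of the care goes, is guaranteeing global consistency of the certificate via \cref{th:fixed-point}: I must ensure the gadgets do not interact to create spurious low-cost solutions (soundness) and that a satisfying assignment really does yield a globally consistent, low-bottleneck certificate (completeness). Because consistency requires a valid linear order $\lhd$ in which every agent's assigned function simplifies to a constant given earlier votes, I would verify that the variable gadgets can always be resolved first, that their committed values propagate deterministically into the clause gadgets, and that no cyclic dependency among clause agents arises --- using monotonicity so that \cref{th:fixed-point} certifies consistency from uniqueness of the fixed point. The delicate point is calibrating ranks so that a single unsatisfied clause strictly forces the bottleneck above the target while every satisfied clause stays at or below it; this is what makes the reduction gap-preserving.

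For the inapproximability claim I would amplify this gap by a standard constant-factor blow-up: placing independent ``padding'' copies or rescaling the fallback ranks so that the bottleneck is either a fixed small value $s$ (satisfiable case) or at least $Cs$ for a chosen constant $C$ (unsatisfiable case). Since ballot size must remain bounded by three, the amplification cannot lengthen individual ballots, so I would instead introduce chains of auxiliary agents whose forced resolution order multiplies the achievable bottleneck in the unsatisfiable case, yielding hardness of constant-factor approximation while preserving the maximum ballot size of three.
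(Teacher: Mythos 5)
Your high-level plan (reduce from a 3SAT variant, encode truth values in the rank used, make clause gadgets resolvable below a threshold iff satisfied, then amplify the gap) matches the paper's, but the construction you describe has two genuine gaps. First, your clause gadget does not work: you propose chaining two binary ORs to simulate ``a ternary OR over the clause's literals,'' but under lazy evaluation a ternary OR of three literals \emph{all known to be} $0$ simplifies to the constant $0$, so the clause agents are still resolvable at rank $0$ even when the clause is unsatisfied. Nothing forces the bottleneck up, and soundness fails. The paper's gadget avoids this by making the second argument of each binary OR another \emph{gadget agent} rather than a literal: voters $a,b,c$ get first preferences $x_k \lor c$, $x_i \lor a$, $x_j \lor b$, so when all literals are $0$ these collapse to the pure cycle $a \to c \to b \to a$; each of $a,b,c$ is additionally paired with a twin so that the second preferences also form $2$-cycles, and only a third preference (rank $2$) can break the deadlock. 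Second, and more decisively, everything you describe --- direct-delegation cycles for variables and ORs for clauses --- lies inside $\textsc{Or}_2$, and the paper proves $\MinMax_{\textsc{Or}}$ is polynomial-time solvable (\cref{thm:minmax-Boolean-poly-time}); so a reduction of the shape you give cannot establish NP-hardness. Both connectives must be used essentially. The paper achieves this by reducing from the Schaefer-hard variant of 3SAT in which every clause is all-positive or all-negative, using the OR gadget for positive clauses and the symmetric AND gadget for negative ones; your proposal never pins down where the ANDs enter.

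Your amplification step also overclaims. Since the rank selected for an agent is bounded by their ballot size, any instance with ballots of size at most $3$ has bottleneck at most $2$, so the satisfiable/unsatisfiable gap is capped at a factor of $2$; you cannot get hardness of approximation for every constant while ``preserving the maximum ballot size of three.'' The theorem only asserts ballot size $3$ for the NP-hardness claim. The paper's amplification necessarily lengthens ballots: it replaces each twin by $k$ twins and gives $a, b, c$ ballots of size $k+2$, so that unsatisfiability forces bottleneck $k+1$ while satisfiability keeps it at $1$, defeating any fixed constant-factor approximation as $k$ grows.
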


\paragraph{Computational Dichotomies.}

So far, we have proven results concerning $\lor$ and $\land$. We now extend our results to full computational dichotomies establishing tractability/intractability for every class of monotonic functions $\calF \subseteq \textsc{Mon}.$ We implicitly assume $\LIQUID \subseteq \calF$ since $\LIQUID$ alone entails tractability anyway. 

\begin{restatable}{theorem}{thmdich}\label{thm:dich} Assume $\calF \subseteq \textsc{Mon}.$ If $\calF = \LIQUID,$ $\MinSum_\calF$ is poly-time computable. Otherwise, 
$\MinSum_\calF$ is NP-hard and hard to $n^{1 - \epsilon}$-approximate even for ballots of size at most 2. Also, If $\calF \subseteq \textsc{Or}$ or $\calF \subseteq \textsc{And},$ $\MinMax_\calF$ is poly-time computable. Otherwise, $\MinMax_\calF$ is NP-hard even for ballots of size at most 3, and hard to constant-factor approximate.
\end{restatable}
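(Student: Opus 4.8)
The plan is to reduce the dichotomy to the four special cases already established, bridging from arbitrary classes $\calF$ to the concrete connectives $\textsc{Or}_2$ and $\textsc{And}_2$ by a single structural fact about monotone Boolean functions. The two easiness directions are immediate. When $\calF = \LIQUID$, $\MinSum_\calF$ is poly-time by the minimum-cost arborescence algorithm of Colley et al.~\citep{grandi}. When $\calF \subseteq \textsc{Or}$ or $\calF \subseteq \textsc{And}$, every instance of $\MinMax_\calF$ is \emph{a fortiori} an instance of $\MinMax_\textsc{Or}$ or $\MinMax_\textsc{And}$, so \cref{thm:minmax-Boolean-poly-time} applies directly. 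The work lies in the hardness directions, where I would show that any $\calF$ outside the easy regime can \emph{simulate} the hard connectives.

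The key step (and the main obstacle) is a substitution lemma: if a monotone $f$ is not a disjunction (equivalently $f \notin \textsc{Or}$), then fixing some of its arguments to the constants $0,1$ yields the binary conjunction $x_i \land x_j$; dually, if $f$ is not a conjunction, one obtains $x_i \lor x_j$. In particular, if $f$ is neither a constant nor a projection, at least one case applies, so $f$ restricts to $\land$ or to $\lor$. I would prove this via prime implicants. If $f$ has a minimal true point $S$ with $|S| \ge 2$, pick $i,j \in S$, set the remaining coordinates of $S$ to $1$ and all coordinates outside $S$ to $0$; minimality of $S$ forces the restriction to be exactly $x_i \land x_j$. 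If instead every minimal true point is a singleton, then $f = \bigvee_{l \in A} x_l$ is a disjunction, so $f \notin \textsc{Or}$ rules this out. The conjunction statement follows by applying the same argument to the dual function $f^d(x) = \overline{f(\overline x)}$, whose minimal true points are the prime implicates of $f$: a restriction of $f^d$ equal to $x_i \land x_j$ corresponds, after complementing the fixed values, to a restriction of $f$ equal to $x_i \lor x_j$.

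With the lemma in hand, I would realise each substitution inside the existing reductions by introducing \emph{helper agents} with constant ballots $B_a = 0$ or $B_a = 1$ of size $0$. Wherever a reduction uses a binary gadget $x_i \lor x_j$ or $x_i \land x_j$ at some preference level, I replace it by the chosen $f \in \calF$ with its two live arguments pointing at $i$ and $j$ and its remaining arguments pointing at fresh helper agents voting the constants prescribed by the lemma (anonymity of $\calF$ lets me use distinct fresh agents per gadget). Since the helpers' votes are forced immediately and $f$ restricted to them is extensionally the intended connective, every valid resolution order of the original instance extends to one here and conversely; hence by the lazy-evaluation semantics (equivalently \cref{th:fixed-point}) consistency and the induced concrete votes are preserved, and each certificate keeps exactly the same $\sum_a c_a$ and $\max_a c_a$. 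Ballot sizes are unchanged ($\le 2$ for \MinSum, $\le 3$ for \MinMax), and $n$ grows only by a constant factor, as each gadget adds $O(1)$ helpers.

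Putting the pieces together: for \MinSum with $\calF \ne \LIQUID$, the class contains a monotone $f$ that is neither a constant nor a projection, so the lemma yields $\land$ or $\lor$, and the matching half of \cref{thm:or2-and2-hard-to-poly-app} transfers through the cost-preserving reduction, giving NP-hardness and $n^{1-\epsilon}$-inapproximability (the constant blow-up in $n$ merely rescales $\epsilon$). For \MinMax with $\calF \not\subseteq \textsc{Or}$ and $\calF \not\subseteq \textsc{And}$, the lemma supplies one function realising $\land$ and another realising $\lor$, so I can embed the whole $\textsc{OrAnd}_2$ instance of \cref{th:hardness-min-max}, yielding NP-hardness for ballots of size $\le 3$ and constant-factor inapproximability. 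I expect the only delicate points to be verifying the consistency-and-cost correspondence of the helper construction and confirming that the required equality/evaluation checks remain polynomial, both of which are routine in the monotone setting.
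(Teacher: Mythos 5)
Your proposal is correct and follows essentially the same route as the paper: the same substitution lemma (your minimal-true-point argument is the paper's minimal-monotone-DNF argument in different clothing, with the same dualisation via $\overline{f(\overline{x})}$), the same simulation of $\land$/$\lor$ via constant helper voters, and the same reduction to the four base hardness/easiness results.
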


As an immediate corollary, if agents are allowed to delegate to some odd majority then \MinSum and \MinMax are hard, resolving a question left open by Colley et al.~(\citeyear{grandi}).

\section{Complexity of the Classic Model}

We now revisit the classic model of liquid democracy with ranked delegations, as described in \cref{sect:classic-liquid}. In particular, we study the \MinSum and \MinMax unravelling procedures, corresponding to minimum cost and minimum bottleneck arborescences in the delegation graph.

\subsection{Efficient Computation of \MinSum and \MinMax Certificates}\label{sect:efficient-minsum-minmax}

Our first results concern efficiently computing an arbitrary optimal certificate. Compared to previous works, our algorithm for \MinMax is asymptotically tight, while the one for \MinSum is tight up to a logarithmic factor. In this section only, we do not make the assumption that $D=\{0, 1\}$, since our analysis applies to general domains with no additional complications. We work with a modified delegation graph: we remove the vertices that correspond to $D \subseteq V$, and redirect the edges pointing to them to point directly to the root $r.$\footnote{This gives slightly tighter and cleaner time bounds, whilst not impacting the certificate-arborescence correspondence.} We then have $|V| = n + 1$; we write $m = |E| = n + \sum_{a \in N}k_a$ for the total number of edges and $\ell$ for the maximum delegation ballot size. The algorithms proposed by \citet{grandi} run in time $\mathcal{O}(nm)$ for \textsc{MinSum} and $\mathcal{O}(n^2\ell^2)$ for \textsc{MinMax}.

\begin{restatable}{theorem}{thmefficientminsumminmax}\label{thm:efficient-minsum-minmax} 
A \MinSum certificate can be computed in time $O(n \log n + m)$, and
a \MinMax certificate can be computed in time $O(n + m)$.
\end{restatable}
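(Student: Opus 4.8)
The plan is to lean entirely on the graph-theoretic reformulation from \cref{sect:classic-liquid}. After the modification described just before the theorem, the delegation graph has $|V| = n+1$ vertices and $m$ edges, every edge weight is an integer in $\{0, 1, \ldots, \ell\}$ with $\ell \le n$, and every agent $a$ retains an edge $a \to r$ of weight $k_a$ (its direct vote), so a consistent certificate always exists. A \MinSum certificate is then exactly a minimum-weight $r$-arborescence and a \MinMax certificate a minimum-bottleneck $r$-arborescence, so it suffices to compute these two objects within the stated bounds and then read off $\vec{c}$ from the chosen edges.

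For \MinMax I would compute, for every vertex $v$, the value $b(v)$ equal to the minimum over all $v$--$r$ paths of the largest edge weight on the path (the bottleneck distance to $r$). This is a single-sink bottleneck shortest-path problem, solved by Dijkstra's algorithm on the reverse graph started at $r$, with the metric $\max$ in place of $+$: relaxing an edge $a \to w$ proposes $b(a) \gets \min\{b(a),\, \max\{\mathrm{wt}(a\to w),\, b(w)\}\}$, relaxations are monotone non-decreasing, and vertices are finalized in non-decreasing order of $b$. The only twist needed for the linear bound is that all weights, and hence all $b$-values, are integers in $\{0, \ldots, \ell\}$, so I replace the heap by an array of $\ell+1$ buckets (Dial's implementation); this touches each vertex and each edge $O(1)$ times and scans $O(\ell) = O(n)$ buckets, for $O(n+m)$ total. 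The parent edge through which each agent $a$ is finalized has weight at most $b(a)$ and points to a vertex finalized strictly earlier, so these edges are acyclic and form an $r$-arborescence; setting $c_a$ to that weight yields a certificate with $\max_{a} c_a \le \max_{a} b(a) =: t^*$. Since in any arborescence agent $a$'s path to $r$ has bottleneck at least $b(a)$, the value $t^*$ also lower-bounds the \MinMax optimum, so the certificate is optimal.

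For \MinSum the object to compute is a minimum-weight arborescence, for which I would run the Chu--Liu/Edmonds algorithm. The ranked structure makes its first phase transparent: the cheapest out-edge of every agent is its rank-$0$ edge $a \to B_a(0)$, so the initial selection is the ``everyone takes their top choice'' functional graph, and all remaining work is resolving the cycles it contains by contraction and reduced-cost updates. Implemented naively this gives the $O(nm)$ bound of Colley et al.; to reach $O(m + n\log n)$ I would use the standard efficient realisation of Edmonds' algorithm due to Gabow, Galil, Spencer and Tarjan, keeping each contracted component's candidate edges in mergeable (Fibonacci) heaps carrying lazy weight offsets, the contractions in a union--find structure, and then expanding the contractions in reverse to extract one edge per agent. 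As our graph has $n+1$ vertices and $m$ edges, this yields the claimed $O(m + n\log n)$ bound directly.

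The conceptually routine but technically delicate part, which I expect to be the main obstacle, is the \MinSum implementation: correctness of Edmonds' algorithm is classical, so the effort lies entirely in the bookkeeping that attains $O(m + n\log n)$—lazy offsets on the heaps so that contracting a cycle updates all of a component's edge weights in $O(1)$ amortised time, and recovering an actual arborescence (not merely its weight) during uncontraction. For \MinMax the analogous care is only in verifying that the bucket queue runs in $O(n+m)$ and that the finalization-order argument produces a genuinely acyclic certificate; both are comparatively light.
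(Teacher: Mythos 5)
Your \MinSum half coincides with the paper's: both invoke the Gabow--Galil--Spencer--Tarjan implementation of Chu--Liu/Edmonds as a black box for minimum-cost arborescences, so there is nothing to compare there. For \MinMax you take a genuinely different route. The paper never computes per-vertex bottleneck distances; it works on the transposed graph and sweeps the threshold $x = 0, 1, 2, \ldots$ upward, maintaining the set $S$ of vertices reachable from $r$ in $G_{\leq x}$ by launching a DFS into $V \setminus S$ each time a newly inserted edge crosses from $S$ to $V \setminus S$; since every vertex is discovered once and every edge is examined a constant number of times, this is $O(n+m)$ with no priority queue of any kind, and it directly yields the optimal threshold $w^*$ together with an arborescence of $G_{\leq w^*}$. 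Your approach instead runs bottleneck Dijkstra from $r$ on the reverse graph under the $(\min,\max)$ semiring, using Dial-style buckets to exploit the fact that all $b$-values lie in $\{0,\ldots,\ell\}$ with $\ell \leq n-1$. This is correct: the $\max$ metric is monotone along paths so the greedy finalization order is valid, parent pointers always point to earlier-finalized vertices so the parent edges form an $r$-arborescence, and $\max_a b(a)$ lower-bounds the bottleneck of any arborescence while your tree attains it; the bucket queue gives $O(n+m+\ell) = O(n+m)$. The trade-off is that your algorithm computes strictly more information (the individual bottleneck distances $b(v)$, which could be reused, e.g., to enumerate all minimum-bottleneck arborescences or to reason about which agents can be served at which preference level), at the cost of slightly heavier machinery; the paper's incremental-reachability sweep is more elementary and is the version the authors highlight as a possibly-new linear-time minimum-bottleneck-arborescence algorithm. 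Both are complete and attain the stated bounds.
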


The algorithm we propose for \MinMax can be described in purely graph-theoretic terms:~given a directed graph with integer edge weights between $0$ and $W,$ it computes a minimum bottleneck arborescence with a given root in time $O(n + m + W).$ Despite the simplicity of the algorithm, we have not been able to find it in prior work.

\subsection{Control for \MinSum and \MinMax}

Thus far, we have concerned ourselves with the irresolute variants of \MinSum and \MinMax, where the electoral authority can choose any arborescence minimising the respective quantity and then apply an aggregation function. But what if they want to favour a specific alternative $d \in D = \{0, 1\}$? Without loss of generality, assume they want to favour alternative 1. Recall that for any arborescence $T,$ the agents $a \in N$ that vote 1 in the concrete votes are the descendants of 1 in $T.$ For both unravelling procedures, we will show that, surprisingly, if $N_1$ is the set of agents who vote for 1 in at least one optimal arborescence, then there is an optimal arborescence where all agents in $N_1$ vote for 1, and such an arborescence can be computed in polynomial time. This has immediate implications for the electoral authority attempting to exercise control: they can compute and select such an arborescence. The respective concrete votes subsume all possible concrete votes for 1 found in any optimal arborescence, so this leads to the best possible outcome for the authority, no matter which monotonic aggregation function is applied to the concrete votes. For \MinMax, the proof is not difficult.

\begin{theorem}\label{th:favor1} 
Let $N_1 \subseteq N$ be the agents who vote 1 in some minimum bottleneck arborescence (i.e., \MinMax certificate). Then, we can compute in $O(n + m)$ time a minimum bottleneck arborescence where all agents in $N_1$ vote 1. 
\end{theorem}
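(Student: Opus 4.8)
The plan is to reduce the statement to a purely reachability-based construction on a thresholded subgraph. First recall that a \MinMax certificate is a minimum bottleneck arborescence, and that these are characterised cleanly: letting $\beta$ be the minimum bottleneck value (the smallest $\beta$ such that every vertex can reach $r$ using only edges of cost at most $\beta$), the minimum bottleneck arborescences are \emph{exactly} the spanning $r$-arborescences of the threshold subgraph $G_{\leq\beta}$ consisting of all edges of cost at most $\beta$. I keep the domain vertex $1$ distinct (merging $0$ into $r$ if convenient), so that an agent $a$ votes $1$ in an arborescence if and only if its $a$--$r$ path passes through vertex $1$, i.e.\ iff $a$ can reach $1$ inside that arborescence. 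The value $\beta$ itself is obtained from the $O(n+m)$ minimum bottleneck routine of \cref{thm:efficient-minsum-minmax}.

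Next I would characterise $N_1$. Let $R_1$ be the set of all vertices that can reach vertex $1$ in $G_{\leq\beta}$, so $R_1 = \{1\}\cup N_1'$ for some $N_1'\subseteq N$. I claim $N_1 = N_1'$. The forward inclusion is immediate: if $a$ votes $1$ in some minimum bottleneck arborescence $T\subseteq G_{\leq\beta}$, the $a$--$r$ path in $T$ passes through $1$, exhibiting an $a$--$1$ path in $G_{\leq\beta}$. The reverse inclusion follows from the construction below, which realises every vertex of $R_1$ as a descendant of $1$ simultaneously.

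The construction, which is the crux, proceeds in two independent pieces that I then glue. Within $R_1$ I build a spanning arborescence toward vertex $1$: this is possible because reachability-to-$1$ is closed under taking subpaths (every vertex on a $v$--$1$ path itself reaches $1$, hence lies in $R_1$), so the induced subgraph $G_{\leq\beta}[R_1]$ still has every vertex reaching $1$; I append the cost-$0$ edge $1\to r$. For the remaining vertices $V\setminus R_1$ (which includes $0$ and $r$) I route each to $r$ within the induced subgraph $G_{\leq\beta}[V\setminus R_1]$. The key lemma making this legal is a non-interference statement: for any $b\notin R_1$, \emph{every} $b$--$r$ path in $G_{\leq\beta}$ avoids $R_1$ entirely, since if such a path first met $R_1$ at a vertex $c$, then $c$ would reach $1$ and hence $b$ would reach $1$, contradicting $b\notin R_1$. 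As some $b$--$r$ path exists in $G_{\leq\beta}$, such a path lies in $V\setminus R_1$, so every vertex of $V\setminus R_1$ reaches $r$ in the induced subgraph. The union of the two pieces is acyclic and spanning (vertices in $R_1$ reach $r$ via $1$, the rest reach $r$ directly), uses only edges of cost at most $\beta$, and makes every agent of $N_1$ a descendant of $1$; hence it is a minimum bottleneck arborescence in which all of $N_1$ votes $1$.

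Finally I would verify the running time. Computing $\beta$ is $O(n+m)$ by \cref{thm:efficient-minsum-minmax}; determining $R_1$ (and thus $N_1$) is a reverse graph search from vertex $1$ in $G_{\leq\beta}$; and the two spanning arborescences are reverse searches from $1$ and from $r$ respectively. Each is a single linear-time traversal, so the total is $O(n+m)$. I expect the main obstacle to be articulating the non-interference lemma cleanly enough to guarantee that committing all of $N_1$ to route through $1$ never strands a vertex outside $R_1$; once that up-closure argument is in place, everything else is routine bookkeeping and standard linear-time graph search.
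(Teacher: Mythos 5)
Your proposal is correct and follows essentially the same route as the paper: restrict to the threshold subgraph of edges with cost at most the optimal bottleneck value, observe that its arborescences are exactly the minimum bottleneck arborescences, identify $N_1$ with the vertices that can reach $1$ there, and build an arborescence making all of them descendants of $1$ while routing the rest to $r$ directly. The paper implements your two-piece construction (including the non-interference observation) as a single depth-first search from the root on the transposed graph that recurses along the edge to $1$ before the edge to $0$, but the underlying argument is the same.
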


\begin{proof} 
For convenience, we work on the transposed graph. Let $w^*$ be the maximum edge weight used in a minimum bottleneck arborescence. This can be computed in $O(n + m)$ time using 
\cref{thm:efficient-minsum-minmax}. Write $E_{\leq x} = \{e \in E \mid w(e) \leq x\}$ for the set of edges of weight at most $x$ and $G_{\leq x} = (V, E_{\leq x})$ for the graph restricted to such edges. Then, any arborescence of $G_{\leq w^*}$ is a minimum bottleneck arborescence, and vice-versa. For readability, assume we remove all other edges and set $G = G_{\leq w^*}.$ 
Consider the set $S$ of vertices reachable from 1 in $G.$ Note that in all arborescences vertices in $(V \setminus \{r\}) \setminus S$ will be descendants of 0, so, if we can find an arborescence where all vertices in $S$ are descendants of 1, then the conclusion follows. Such an arborescence can indeed easily be constructed by running a depth-first search starting from the root and recursing along the edge to 1 before the edge to 0. This takes $O(n + m)$ time. Note that we implicitly get that $N_1 = S \setminus \{1\}.$ 
\end{proof}

The proof for \MinSum, while similar in spirit, requires insights from Fulkerson's primal-dual algorithm for computing minimum cost arborescences \citep{fulkerson}. Modern accounts of the algorithm are available in \cite{kamiyama_survey,utke2023anonymous,blocking_arborescences}. The algorithm outputs a collection $E' \subseteq E$ of edges of the graph, called the ``tight'' edges, as well as a \emph{laminar family} $\calL$ of subsets of $V \setminus \{r\}.$ Laminarity means that for any $L_1, L_2 \in \calL$ either $L_1 \subseteq L_2,$ or $L_2 \subseteq L_1$ or $L_1 \cap L_2 = \varnothing.$ Laminar families can be understood through their associated forest, where each set $L \in \calL$ has a set of children sets $\mathit{ch}_\calL(L)$ consisting of those $L' \in \calL$ such that $L' \subsetneq L$ and there is no $L'' \in \calL$ satisfying $L' \subsetneq L'' \subsetneq L.$ The set of \emph{roots} of the forest is denoted $\mathit{rt}_\calL$ and consists of all maximal sets in $\calL.$ Fulkerson's algorithm is given in the appendix. For this section, it suffices to understand a few key properties of its output known from previous works. Most importantly, the relationship between minimum cost arborescences and $(E', \calL)$ is given by the following powerful lemma, for which given a set of vertices $S \subseteq V,$ we write $\rho(S) = \{(u, v) \in E \mid u \in S\text{ and }v \notin S\}.$

\begin{restatable}{lemma}{lemmafulkersoncharact}\label{lemma:fulkerson-charact} An arborescence $T$ has minimum cost iff the following two conditions hold:~(i) $T \subseteq E';$ (ii) for any $L \in \calL$ it holds that $|\rho(L) \cap T| = 1.$
\end{restatable}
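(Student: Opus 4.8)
The plan is to recognise Lemma~\ref{lemma:fulkerson-charact} as precisely the complementary-slackness conditions relating an integral primal solution (the arborescence $T$) to the dual solution produced by Fulkerson's algorithm. First I would recall the standard linear-programming formulation of the minimum-cost arborescence problem: minimise $\sum_{e \in E} w(e) x_e$ subject to the cut constraints $\sum_{e \in \rho(S)} x_e \geq 1$ for every nonempty $S \subseteq V \setminus \{r\}$ and $x_e \geq 0$. The indicator vector of any arborescence is feasible here, since every such $S$ contains a vertex whose $r$-path must leave $S$, so $\lvert\rho(S) \cap T\rvert \geq 1$. The dual is to maximise $\sum_S y_S$ subject to $\sum_{S : e \in \rho(S)} y_S \leq w(e)$ for every $e \in E$ and $y_S \geq 0$. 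I would then invoke the properties of the algorithm's output quoted from prior work: the family $\calL$ is exactly the support $\{S : y_S > 0\}$ of a \emph{dual-feasible} vector $y$, and $E'$ is exactly the set of \emph{tight} edges, i.e.\ those $e$ with $\sum_{S : e \in \rho(S)} y_S = w(e)$.

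The technical core is a single weak-duality computation. For any arborescence $T$ I would write
\[
w(T) = \sum_{e \in T} w(e) \ \geq\ \sum_{e \in T}\ \sum_{S : e \in \rho(S)} y_S \ =\ \sum_S y_S\, \lvert \rho(S) \cap T\rvert \ \geq\ \sum_S y_S ,
\]
where the first inequality uses dual feasibility edge-by-edge, the middle equality is a change in the order of summation, and the last inequality uses $\lvert\rho(S)\cap T\rvert \geq 1$ (primal feasibility of $T$) together with $y_S \geq 0$. The point I would stress is the \emph{exact} equality conditions: the first inequality is an equality iff every edge of $T$ is tight, i.e.\ $T \subseteq E'$, which is condition~(i); the last inequality is an equality iff $\lvert\rho(S)\cap T\rvert = 1$ for every $S$ with $y_S>0$, i.e.\ for every $L \in \calL$, which is condition~(ii). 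Hence (i) and (ii) hold \emph{simultaneously} precisely when $w(T) = \sum_S y_S$.

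It remains to pin down $\sum_S y_S$ as the optimal value. Here I would use the correctness guarantee of Fulkerson's algorithm: it terminates with an arborescence $T^\star \subseteq E'$ satisfying $\lvert\rho(L)\cap T^\star\rvert = 1$ for all $L \in \calL$, so by the chain above $w(T^\star) = \sum_S y_S$; combined with $w(T) \geq \sum_S y_S$ for \emph{every} arborescence, this certifies that $\sum_S y_S$ equals the minimum cost $\mathrm{OPT}$ (strong duality). The lemma then drops out: if $T$ has minimum cost then $w(T) = \mathrm{OPT} = \sum_S y_S$, forcing both equality conditions and hence (i) and (ii); conversely, if (i) and (ii) hold then the chain collapses to $w(T) = \sum_S y_S = \mathrm{OPT}$, so $T$ is of minimum cost. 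I expect the main obstacle to be bookkeeping rather than conceptual: I must verify that the quoted output conventions of the algorithm really do match ``tight edges $= E'$'' and ``positive-dual support $= \calL$,'' and that termination genuinely certifies $w(T^\star) = \sum_S y_S$; everything else is the standard primal-dual argument.
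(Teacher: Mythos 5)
The paper does not actually prove this lemma: it is imported as one of several ``key properties of the output $(E', \calL)$ of the algorithm known from previous works,'' with the proof deferred to the cited references on Fulkerson's algorithm. Your LP-duality argument is precisely the standard proof that those references contain, and it is essentially correct: the cut LP and its dual are set up properly, the weak-duality chain and its two equality conditions are exactly conditions (i) and (ii), and the optimality of $\sum_S y_S$ is certified by the arborescence produced in the algorithm's second phase (which the paper only mentions in a footnote, and which is constructed much as in its Lemma~\ref{lemma:recursive}). So relative to the paper you are supplying a proof where the paper supplies a citation, and the approach is the canonical one.

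The one point you flag as ``bookkeeping'' that genuinely needs care is the identification of $\calL$ with the support $\{S : y_S > 0\}$: this is not literally true. Whenever $w_{\mathit{min}} = 0$ in an iteration, the chosen set $L$ enters $\calL$ with dual value zero, and for such sets complementary slackness does \emph{not} force $|\rho(L) \cap T| = 1$ in the ``minimum cost $\Rightarrow$ (i) and (ii)'' direction. The characterisation survives because such sets are necessarily singletons: the first set containing a given vertex $u$ that the algorithm ever processes is $\{u\}$ (no edge leaving $u$ is tight before then, so $u$'s strongly connected component in $(V, E')$ is trivial), and at that moment every zero-residual edge in $\rho(\{u\})$ is added to $E'$; consequently, any later non-singleton $L \in \calL$ has $\rho(L) \cap E' = \varnothing$ with all residuals strictly positive, hence $w_{\mathit{min}} > 0$ and $y_L > 0$. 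For the remaining zero-dual singletons, condition (ii) reads $|\rho(\{u\}) \cap T| = 1$, which holds automatically for every arborescence since each non-root vertex has out-degree exactly one. With that observation supplied, your argument closes completely.
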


This gives a combinatorial characterisation of minimum-cost arborescences without mentioning edge weights, enabling us to adapt our approach for \MinMax to \MinSum.

\begin{restatable}{theorem}{thmmanipulationminsum}\label{thm:manipulation-minsum} Let $N_1 \subseteq N$ be the set of agents who vote 1 in some min-cost arborescence (i.e., \MinSum certificate). Then we can compute in polynomial time a min-cost arborescence where all agents in $N_1$ vote 1. Moreover, $a \in N_1$ iff 1 is reachable from $a$ along edges in $E'$ output by Fulkerson's algorithm.
\end{restatable}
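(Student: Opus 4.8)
The plan is to mirror the structure of the \MinMax argument in \cref{th:favor1}, but using Fulkerson's combinatorial characterisation (\cref{lemma:fulkerson-charact}) in place of the bottleneck-threshold graph $G_{\leq w^*}$. The key conceptual replacement is this: in the \MinMax case, the set of \emph{all} minimum bottleneck arborescences is exactly the set of arborescences living inside a single subgraph $G_{\leq w^*}$, so ``reachability from $1$'' in that subgraph captures precisely which agents can vote $1$. For \MinSum we need an analogous fixed ambient object. Fulkerson's output $(E', \calL)$ provides it: by \cref{lemma:fulkerson-charact}, the min-cost arborescences are exactly the arborescences $T \subseteq E'$ satisfying $|\rho(L) \cap T| = 1$ for every $L \in \calL$. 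So I would work entirely inside the subgraph $(V, E')$ of tight edges, with the laminar constraints as a side condition, and again transpose the graph so that ``$a$ votes $1$'' becomes ``$1$ is reachable from $a$'' (in the transposed graph, $1$ a descendant of $a$ along tree edges means $a$ is an ancestor, i.e.\ $a$ votes $1$).

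The first step is to prove the ``moreover'' clause, which I would treat as the heart of the theorem: $a \in N_1$ iff $1$ is reachable from $a$ along edges in $E'$. One direction is easy — if $a$ votes $1$ in some min-cost arborescence $T$, then $T \subseteq E'$ by condition (i), and the tree path from $a$ to $1$ witnesses reachability in $E'$. The substantive direction is the converse: given an $E'$-path from some agent $a$ to $1$, I must exhibit a \emph{single} min-cost arborescence in which $a$ (indeed, every agent with such a path) votes $1$. The natural approach is constructive: start from an arbitrary min-cost arborescence $T_0$ (which exists and satisfies both conditions of \cref{lemma:fulkerson-charact}) and perform edge swaps that reroute agents toward $1$ while preserving both tightness (staying in $E'$) and the laminar cut condition $|\rho(L) \cap T| = 1$. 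The delicate point is that rerouting a single vertex's out-edge can violate the laminar constraint for some $L \in \calL$, so I would argue the swap can be chosen to respect the laminar family — intuitively, reachability along $E'$ interacts well with the laminar structure because each $L$ has exactly one tight edge leaving it, so following tight edges never ``crosses'' a laminar set more than once.

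Rather than a swap-by-swap argument, a cleaner route is to directly build the desired arborescence by a greedy/DFS-style procedure as in \cref{th:favor1}: on the transposed tight-edge graph, grow an arborescence rooted at $r$ that always prefers to attach a vertex via a path through $1$ when the $E'$-structure permits, using the laminar family to certify that the greedy choices never get stuck (i.e.\ each partially built set of chosen edges can always be extended to satisfy $|\rho(L)\cap T|=1$). Once I have that every agent reachable to $1$ in $E'$ can \emph{simultaneously} be made a descendant of $1$ in one min-cost arborescence $T^\star$, the main statement follows immediately: $N_1$ is exactly this reachable set, and $T^\star$ is the sought arborescence in which all of $N_1$ votes $1$. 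Polynomial running time is then routine — computing $E'$ via Fulkerson's algorithm, computing reachability to $1$ in $E'$, and constructing $T^\star$ are all polynomial.

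The main obstacle I anticipate is establishing that the reachability set can be realised \emph{all at once} within the laminar cut constraints, rather than merely realising each agent individually. Handling a single agent is straightforward, but the laminar conditions couple the choices of out-edges across vertices (each set $L$ admits only one crossing edge), so I must ensure the greedy construction is globally consistent. I expect the resolution to hinge on a structural fact about $(E', \calL)$: following tight edges from any vertex eventually reaches the root, and within each laminar set the unique crossing edge is forced, so the ``reach $1$'' preference can be propagated outward from $1$ through the laminar forest without conflict. Making this precise — proving that preferring the $1$-ward tight edge at each vertex yields a valid arborescence satisfying \cref{lemma:fulkerson-charact} — is where the real work lies, and I would devote the bulk of the proof to it.
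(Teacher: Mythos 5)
Your overall plan coincides with the paper's: work in the tight-edge subgraph $(V,E')$, identify $N_1$ with the set $S$ of agents that can reach $1$ along tight edges, and build a single min-cost arborescence realising all of $S$ at once via a DFS that prefers $1$. However, the step you yourself flag as ``where the real work lies'' is left unresolved, and the mechanism you sketch for it --- ``preferring the $1$-ward tight edge at each vertex'' --- does not work as a vertex-by-vertex greedy rule. The laminar condition $|\rho(L)\cap T|=1$ is violated as soon as two vertices of the same $L\in\calL$ independently pick out-edges exiting $L$ (both pointing ``toward $1$''), so the choices within a laminar set cannot be made locally; they must be coordinated so that each $L$ is crossed by exactly one arborescence edge. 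Your intuition that ``following tight edges never crosses a laminar set more than once'' is not a property of tight-edge paths; it is a constraint you must engineer into the arborescence.

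The paper resolves this with two structural facts you do not supply. First, every $L\in\calL$ is strongly connected with respect to tight edges (\cref{lemma:scc}); this immediately yields that each $L$ lies entirely inside $S$ or entirely outside $S$, so the ``reach $1$'' preference is constant on each laminar set and never pulls a set in two directions. Second, a recursive contraction lemma (\cref{lemma:recursive}): for any $L\in\calL$ and any $v\in L$ there is a $v$-rooted arborescence of $G[L]$ using only tight edges and satisfying condition (ii) for all $L'\subsetneq L$. With these, the construction is hierarchical rather than greedy: contract the maximal sets of $\calL$, run DFS on the transposed contracted graph from $r$ recursing into $1$ before $0$ (which is sound precisely because each contracted node is wholly in or out of $S$), then expand each contracted set and recurse inside it rooted at the unique entry vertex determined by its single crossing edge. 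Without an argument of this shape --- in particular, without the strong-connectivity of laminar sets, which is what lets you freely choose the entry point and still fill in the interior --- your proposal establishes the easy direction of the ``moreover'' clause but not the converse, and hence not the theorem.
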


On the positive side, Theorems~\ref{th:favor1} and ~\ref{thm:manipulation-minsum} suggest biased tie-breaking rules for \MinMax and \MinSum, which we denote by $\MinMax^1$ and $\MinSum^1$, selecting the collection of concrete votes $(X_a)_{a \in N}$ with the highest possible number of ones. Our results show that these procedures are resolute and subsume all possible votes for 1 among \MinMax (\MinSum) certificates. One can also define $\MinMax^0$ and $\MinSum^0$ analogously. These biased procedures can find use in settings where the decision to be made is for changing a status quo, where one can select between biasing for or against the status quo prior to the vote. We note the following attractive corollary of our results, for which given two collections of concrete votes $\vec{X}, \vec{Y}$ we write $\vec{X} \leq \vec{Y}$ to mean that $X_a \leq Y_a$ for all $a \in N$:

\begin{corollary}\label{coro:min-max-elements}
For any ballot list $\vec{B}$ and $\vec{X} \in \MinMax(\vec{B})$, 
$\vec{X'} \in \MinSum(\vec{B})$,
we have $\MinMax^0(\vec{B}) \leq \vec{X} \leq \MinMax^1(\vec{B})$ and
$\MinSum^0(\vec{B}) \leq \vec{X'} \leq \MinSum^1(\vec{B})$.
\end{corollary}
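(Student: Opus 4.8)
The plan is to derive the corollary as a direct consequence of \cref{th:favor1} and \cref{thm:manipulation-minsum} together with the definitions of the biased rules, so almost no new work is needed. The first thing I would establish is that the arborescence produced by \cref{th:favor1} realises exactly the concrete votes $\MinMax^1(\vec{B})$: by that theorem there is a minimum bottleneck arborescence in which every agent of $N_1$ votes $1$, and since by definition no \MinMax certificate can make an agent outside $N_1$ vote $1$, this arborescence attains the largest possible number of ones among all \MinMax certificates. Hence it is the maximiser defining $\MinMax^1$, its set of $1$-voters is precisely $N_1$, and resoluteness follows because any collection of concrete votes with $|N_1|$ ones and $1$-voter set contained in $N_1$ must have $1$-voter set equal to $N_1$.

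For the upper bound $\vec{X} \le \MinMax^1(\vec{B})$, I would note that any $\vec{X} \in \MinMax(\vec{B})$ arises from some minimum bottleneck arborescence, so each agent with $X_a = 1$ votes $1$ in that arborescence and therefore lies in $N_1$. Thus $\{a : X_a = 1\} \subseteq N_1 = \{a : \MinMax^1(\vec{B})_a = 1\}$, which is exactly $\vec{X} \le \MinMax^1(\vec{B})$.

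For the lower bound $\MinMax^0(\vec{B}) \le \vec{X}$, I would invoke the symmetric form of \cref{th:favor1} obtained by exchanging the roles of $0$ and $1$ (the theorem is stated for favouring $1$ only without loss of generality). Writing $N_0$ for the set of agents voting $0$ in some \MinMax certificate, this gives a certificate in which exactly the agents of $N_0$ vote $0$, which is $\MinMax^0(\vec{B})$. The same counting argument yields $\{a : X_a = 0\} \subseteq N_0$, equivalently $X_a = 1$ whenever $\MinMax^0(\vec{B})_a = 1$, i.e.\ $\MinMax^0(\vec{B}) \le \vec{X}$. The two \MinSum inequalities follow by the identical argument with \cref{thm:manipulation-minsum} in place of \cref{th:favor1}, since that theorem likewise furnishes a min-cost arborescence in which all agents voting $1$ in some \MinSum certificate vote $1$, together with its $0$-analogue.

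The only real obstacle is the bookkeeping in the first step, namely confirming that the ``maximise the number of ones'' definition of $\MinMax^1$ genuinely coincides with the output of \cref{th:favor1} (and analogously for the other three rules); everything else reduces to a one-line containment of $1$-voter sets. Once that identification is made, resoluteness of the biased rules and all four inequalities drop out simultaneously.
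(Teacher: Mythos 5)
Your proposal is correct and matches the paper's own (implicit) argument: the paper states \cref{coro:min-max-elements} as an immediate consequence of \cref{th:favor1} and \cref{thm:manipulation-minsum}, relying on exactly the identification you make --- that the arborescence guaranteed by each theorem has $1$-voter set precisely $N_1$, hence realises the maximiser defining the biased rule, while every optimal certificate's $1$-voters form a subset of $N_1$ (and symmetrically for $0$). No gaps; the bookkeeping step you flag is the whole content of the corollary and you handle it correctly.
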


\section{Axiomatic Results}

In this section we study \MinSum and \MinMax from a normative standpoint. We introduce a natural axiom, which we call \emph{cast monotonicity}, and establish that it is satisfied by \MinSum and a lexicographic refinement of \MinMax (but not \MinMax itself). This holds both for the irresolute procedures and for their biased resolute variants. Unless stated otherwise, in this section we consider the classic model.

Fix a set of agents $N$ and $D = \{0, 1\}.$ Let $F$ be a (possibly irresolute) unravelling procedure mapping collections of ballots $\vec{B} = (B_a)_{a \in N}$ to collections of concrete votes $\vec{X} = (X_a)_{a \in N}.$ 
Given an 
aggregation function $\agg: \{0, 1\}^N \to \{0, 1\}$, the set of possible winning alternatives with respect to $F$ on input $\vec{B}$ is $\agg(F(\vec{B})) = \{\agg(\vec{X}) \mid \vec{X} \in F(\vec{B})\}.$ 
\begin{definition}
An unravelling procedure $F$ is \emph{cast-monotonic} with respect to a monotonic aggregation function $\agg$ if for every ballot list $\vec{B},$ every agent $a \in N,$ each alternative $d \in D$ and the ballot list $\vec{B'}$ obtained from $\vec{B}$ by changing $B_a$ to a direct vote for $d$ we have: 
(i) $d \in \agg(F(\vec{B}))$ implies $d \in \agg(F(\vec{B'}))$ and 
(ii) $1-d \notin \agg(F(\vec{B}))$ implies $1-d \notin \agg(F(\vec{B'})).$ 
We say that $F$ is \emph{(unconditionally) cast-monotonic} if it is cast-monotonic with respect to all monotonic aggregation functions $\agg.$
\end{definition}
Intuitively, cast monotonicity says that if an agent likes an alternative $d,$ then she is best off voting for it directly:~changing from an arbitrary ballot to a direct vote for $d$ 
can neither remove $d$ from the set of winners nor add $1-d$ to the set of winners.

Given two lists of concrete votes $\vec{X}, \vec{X'},$ recall that $\vec{X} \leq \vec{X'}$ if $X_a \leq X'_a$ for all $a \in N.$ We write $\leq_d$ to mean $\leq$ for $d = 1$ and $\geq$ for $d = 0.$ 
Then, we can give the following characterisation of unconditional cast monotonicity.

\begin{restatable}{lemma}{lemmaequivalpointwiseincrease}\label{lemma:equival-pointwise-increase} 
An unravelling procedure $F$ is cast-monotonic iff for every ballot list $\vec{B},$ every agent $a \in N,$ every alternative $d \in D,$ and the ballot list $\vec{B'}$ obtained from $\vec{B}$ by changing $B_a$ to a direct vote for $d,$ the following conditions hold: (iii) for every $\vec{X} \in F(\vec{B})$ there is an $\vec{X'} \in F(\vec{B'})$ with $\vec{X} \leq_d \vec{X'};$ (iv) for every $\vec{X'} \in F(\vec{B'})$ there is an $\vec{X} \in F(\vec{B})$ with $\vec{X} \leq_d \vec{X'}.$ Note that if $F$ is resolute, (iii) is equivalent to (iv).
\end{restatable}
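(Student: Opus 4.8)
The plan is to prove both implications, reducing throughout to the case $d = 1$ by symmetry: the case $d = 0$ is obtained by globally swapping the roles of $0$ and $1$, which sends $\leq$ to $\geq$ and $\leq_d$ to $\leq_{1-d}$, and sends monotonic aggregation functions to monotonic ones (flipping all inputs and the output of a monotone function preserves monotonicity). So I would fix $d = 1$, in which case $\leq_d$ is just $\leq$, and recover the $d=0$ statements from the $d=1$ ones by this relabelling.

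For the easy direction, I would assume (iii) and (iv) and verify cast monotonicity against an \emph{arbitrary} monotonic $\agg$. For clause (i): if $1 \in \agg(F(\vec{B}))$, pick $\vec{X} \in F(\vec{B})$ with $\agg(\vec{X}) = 1$; by (iii) there is $\vec{X'} \in F(\vec{B'})$ with $\vec{X} \leq \vec{X'}$, and monotonicity of $\agg$ gives $\agg(\vec{X'}) \geq \agg(\vec{X}) = 1$, so $1 \in \agg(F(\vec{B'}))$. For clause (ii), I would argue by contrapositive: if $0 \in \agg(F(\vec{B'}))$, pick $\vec{X'} \in F(\vec{B'})$ with $\agg(\vec{X'}) = 0$; by (iv) there is $\vec{X} \in F(\vec{B})$ with $\vec{X} \leq \vec{X'}$, whence $\agg(\vec{X}) \leq \agg(\vec{X'}) = 0$ and $0 \in \agg(F(\vec{B}))$. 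Since $\agg$ was arbitrary, $F$ is unconditionally cast-monotonic. This direction is essentially just ``apply monotonicity''.

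The hard direction is the crux, since cast monotonicity only constrains outputs \emph{after} aggregation, whereas (iii)/(iv) are pointwise statements; the idea is to feed cast monotonicity cleverly chosen monotonic aggregation functions that read off the order relation exactly. For (iii), given $\vec{X} \in F(\vec{B})$ I would take $\agg = \mathbb{1}[\,\cdot \geq \vec{X}\,]$, the indicator of the principal up-set generated by $\vec{X}$, which is monotonic; then $\agg(\vec{X}) = 1$, so $1 \in \agg(F(\vec{B}))$, and clause (i) yields some $\vec{X'} \in F(\vec{B'})$ with $\agg(\vec{X'}) = 1$, i.e.\ $\vec{X} \leq \vec{X'}$. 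The less obvious gadget is needed for (iv): given $\vec{X'} \in F(\vec{B'})$ I would take $\agg = \mathbb{1}[\,\cdot \not\leq \vec{X'}\,]$, whose $1$-set is the complement of the principal down-set of $\vec{X'}$ and is therefore an up-set, so $\agg$ is again monotonic. Now $\agg(\vec{X'}) = 0$, so $0 \in \agg(F(\vec{B'}))$, and the contrapositive of clause (ii) forces $0 \in \agg(F(\vec{B}))$, producing $\vec{X} \in F(\vec{B})$ with $\agg(\vec{X}) = 0$, i.e.\ $\vec{X} \leq \vec{X'}$. The main obstacle is precisely recognising that this complement-of-down-set indicator is monotonic (hence an admissible test function) and that its zero-set is exactly $\{\vec{Y} : \vec{Y} \leq \vec{X'}\}$, which is what turns the aggregation-level guarantee back into a pointwise comparison.

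Finally, for the closing remark, when $F$ is resolute I would note that $F(\vec{B}) = \{\vec{X}\}$ and $F(\vec{B'}) = \{\vec{X'}\}$ are singletons, so both (iii) and (iv) collapse to the single assertion $\vec{X} \leq_d \vec{X'}$ and are therefore equivalent.
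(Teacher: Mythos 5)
Your proof is correct and follows essentially the same route as the paper: the easy direction is the identical monotonicity argument, and your two test functions (the principal up-set indicator for (iii) and the complement-of-principal-down-set indicator for (iv)) are exactly the conjunction $\bigwedge_{a : X_a = 1} a$ and disjunction $\bigvee_{a : X'_a = 0} a$ that the paper packages into its auxiliary Lemma~\ref{lemma:construct-monot} and applies once directly and once with the roles of $\vec{B},\vec{B'}$ and of $d, 1-d$ swapped. The only cosmetic difference is that you reduce to $d=1$ by a global $0/1$ relabelling while the paper carries the $\leq_d$ notation throughout; both are fine.
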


It is not hard to show that \MinMax and its biased variants are not cast-monotonic.

\begin{restatable}{theorem}{thcastmax}\label{th:cast-max}
    $\MinMax$ fails cast monotonicity. The same holds for its resolute variants $\MinMax^p$ with $p \in \{0, 1\}.$
\end{restatable}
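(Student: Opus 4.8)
The plan is to leverage the characterisation in \cref{lemma:equival-pointwise-increase}: since unconditional cast monotonicity is equivalent to conditions (iii) and (iv) holding for all $\vec B$, $a$, $d$, it suffices to exhibit a single classic-model instance, an agent $a$, and an alternative $d$ for which (iii) (or (iv)) fails. I would violate (iii) with $d = 1$: that is, find a ballot list $\vec B$ with a \MinMax certificate in which some agent $c$ votes $1$, such that after changing $B_a$ to a direct vote for $1$, agent $c$ votes $0$ in \emph{every} \MinMax certificate of the resulting $\vec{B'}$. Then the old certificate witnesses the failure of (iii), as every new certificate carries a $0$ in coordinate $c$.

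The counterexample exploits the egalitarian nature of \MinMax through a bottleneck drop. First I would install a two-agent cost-$0$ delegation cycle $a \rightleftarrows b$ (say $B_a = b \succ 1$ and $B_b = a \succ 0$), which cannot be resolved at cost $0$ and thus forces the minimum bottleneck to be at least $1$. Next I would add an agent $c$ whose cheapest (cost-$0$) delegation leads to a dummy agent $e$ that always votes $0$ (via $B_e = 0$), but whose cost-$1$ option delegates to $a$; concretely $B_c = e \succ a \succ 0$. Before the change the minimum bottleneck equals $1$, and one optimal certificate breaks the cycle by letting $a$ vote $1$ directly (so $b$ votes $1$ as well) while routing $c \to a$ at cost $1$, making $c$ vote $1$.

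The key point is what happens after setting $B_a$ to a direct vote for $1$: the cycle is destroyed, so every agent is servable at cost $0$ and the minimum bottleneck drops to $0$. At bottleneck $0$ each agent has a single admissible (first-preference) edge, so the \MinMax certificate is unique and forces $c \to e$, whence $c$ votes $0$. Taking the old certificate in which $c$ votes $1$ then witnesses the failure of (iii). Equivalently, the monotonic aggregator $\agg(\vec X) = X_c$ makes condition (i) of the definition fail, since $1 \in \agg(F(\vec B))$ but $1 \notin \agg(F(\vec{B'}))$. The same instance also breaks the resolute variant $\MinMax^1$: by \cref{th:favor1} its output is the optimal certificate subsuming all $1$-votes, which in $\vec B$ has $c = 1$ but in $\vec{B'}$ has $c = 0$, again violating the $\le_d$ requirement for $d = 1$.

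For $\MinMax^0$ I would invoke the $0 \leftrightarrow 1$ symmetry: applying the value-flip $\sigma(x) = 1 - x$ to all direct votes in the ballots maps \MinMax certificates to \MinMax certificates with flipped concrete votes (the bottleneck is insensitive to the values), swaps $\MinMax^1$ with $\MinMax^0$, and swaps the roles of $d = 1$ and $d = 0$; the mirrored instance thus breaks $\MinMax^0$ at $d = 0$. I expect the main obstacle to be getting the direction of the induced flip right: a naive design in which $c$ reaches its favoured alternative by a \emph{cheap} delegation to $a$ fails, because $a$ still votes $1$ after the change and so $c$ would keep voting $1$. The failure genuinely requires $c$'s favoured vote to hinge on a \emph{higher}-cost delegation available only at the larger pre-change bottleneck, so that lowering the bottleneck forces $c$ off it.
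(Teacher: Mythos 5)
Your proposal is correct and uses essentially the same construction as the paper: a cost-$0$ two-cycle forcing the bottleneck to $1$, a victim agent whose $1$-vote is only reachable via a cost-$1$ ballot entry, and the observation that casting a direct vote for $1$ destroys the cycle, drops the bottleneck to $0$, and forces the victim onto a $0$-voting first preference; the paper differs only in cosmetic details (it replicates the victim $n-3$ times so the majority aggregator witnesses the failure, whereas you use a projection aggregator, and it handles $\MinMax^0$ by the same $0$--$1$ symmetry you invoke). No gaps.
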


Next, we show that \MinSum and its variants are cast-monotonic. Our proof (see the appendix) relies on somewhat technical arguments regarding Fulkerson's algorithm.

\begin{restatable}{theorem}{thmminsumiscastmonotonic}\label{thm:min-sum-is-cast-monotonic} 
\MinSum, $\MinSum^0$, $\MinSum^1$ satisfy cast monotonicity.
\end{restatable}

The reader may wonder whether cast monotonicity results for \MinSum extend to the more general setting where agents can delegate to monotonic Boolean functions. Sadly, this is not the case, even if we only allow binary $\lor$ or $\land$.

\begin{restatable}{theorem}{thcastsumfailsandor}\label{th:cast-sum-fails-and-or} $\MinSum_\mathcal{F}$ fails cast monotonicity as soon as either $\textsc{Or}_2 \subseteq \mathcal{F}$ or $\textsc{And}_2 \subseteq \mathcal{F}.$ The same holds for the resolute variants $\MinSum^p_\mathcal{F}$ for $p \in \{0, 1\}.$
\end{restatable}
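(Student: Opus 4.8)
The plan is to exhibit one explicit $\textsc{Or}_2$-instance that breaks cast monotonicity for all three procedures simultaneously, and to obtain the $\textsc{And}_2$ case for free by De Morgan duality. First I would set up the duality: given any instance, replace each delegation function by its De Morgan dual ($\lor\mapsto\land$, each constant $c\mapsto 1-c$, projections unchanged), leaving all ballot positions untouched. A certificate $\vec c$ then produces, in the dual instance, exactly the coordinatewise complement of the outcome it produces in the original, with identical cost and preserved consistency (by \cref{th:fixed-point} the unique-solution property is preserved under complementation). Hence $\textsc{Or}_2$ maps to $\textsc{And}_2$, pinning $a$ to $d$ maps to pinning $a$ to $1-d$, irresolute $\MinSum$ is preserved while $\MinSum^0$ and $\MinSum^1$ are interchanged, and any monotone $\agg$ maps to a monotone dual aggregator. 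So a violation for $\textsc{Or}_2$ yields one for $\textsc{And}_2$, and it suffices to treat $\textsc{Or}_2$.

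Next I would present the gadget. Take agents $\{a,s,p_1,q_1,p_2,q_2\}$ with ballots $B_a=1$; $B_{p_i}=(q_i\lor a)\succ 0$ and $B_{q_i}=(p_i\lor s)\succ 0$ for $i\in\{1,2\}$; and $B_s=0\succ 1$. Let $\vec B'$ pin $a$ to $d=0$, i.e.\ $B'_a=0$. The mechanism is that when $a$ votes $1$ it injects a $1$ into both OR-cycles $p_i\leftrightarrow q_i$, resolving each at cost $0$, so the helper $s$ may stay at its preferred value $0$; but once $a$ is switched to $0$ it can no longer resolve either cycle, and the cheapest way to avoid paying a separate backup in \emph{each} cycle (total cost $2$) is to raise $s$ to $1$ (cost $1$), which resolves both cycles through the OR's partial evaluation. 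Thus I expect $\MinSum(\vec B)$ to be the single cost-$0$ outcome $(a,s,p_1,q_1,p_2,q_2)=(1,0,1,1,1,1)$ and $\MinSum(\vec B')$ the single cost-$1$ outcome $(0,1,1,1,1,1)$.

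The main verification step is to confirm these two optima and, crucially, their \emph{uniqueness}. For $\vec B$ this is immediate: with $a=1$ every first preference resolves, forcing cost $0$ and $s=0$. For $\vec B'$ the cost-$0$ attempt fails because with $a=s=0$ each cycle $X_{p_i}=X_{q_i}$, $X_{q_i}=X_{p_i}$ has the two fixed points $(0,0),(1,1)$ and is therefore inconsistent by \cref{th:fixed-point}; and among cost-$1$ certificates, raising either $p_i$ or $q_i$ to $0$ resolves its own cycle but leaves the other cycle with two fixed points (again inconsistent), so a single backup cannot resolve both cycles and the only consistent cost-$1$ certificate raises $s$. This forces $s=1$ uniquely. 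Since both profiles have a unique $\MinSum$ outcome, $\MinSum$, $\MinSum^0$ and $\MinSum^1$ coincide on these inputs, so a single violation handles all three variants. To conclude, take the monotone aggregator $\agg=$ projection onto $s$: then $0\in\agg(F(\vec B))$ while $0\notin\agg(F(\vec B'))$, violating clause~(i) of cast monotonicity for $d=0$ (equivalently, the unique outcomes are incomparable under $\leq_0$, violating \cref{lemma:equival-pointwise-increase}). Applying the duality then gives the $\textsc{And}_2$ counterexample (pinning $a$ to $1$) for all three variants.

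I expect the construction of the gadget to be the hard part, not the bookkeeping. The key conceptual obstacle is that, since all delegation functions are monotone, the outcome of any \emph{fixed} certificate is monotone in the pinned value; hence the non-monotonicity must arise \emph{entirely} from a cost-driven switch of the optimal certificate, which is precisely what cannot happen in the classic model (\cref{thm:min-sum-is-cast-optimal}). A second, subtler obstacle is the resolute variants: a naive one-cycle gadget produces a tie in $\vec B'$ between an $s$-raised optimum and an all-$0$ optimum, and $\MinSum^0$ then ``escapes'' to the all-$0$ outcome, restoring monotonicity. The two-cycle design is chosen exactly to make breaking to $0$ cost $2$ while the single helper flip costs $1$, thereby forcing the flip and keeping both optima unique; verifying this forced flip — i.e.\ that every single-backup certificate is inconsistent — is the technical heart of the argument.
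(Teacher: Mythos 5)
Your proposal is correct and takes essentially the same route as the paper's proof: an explicit $\textsc{Or}_2$ counterexample in which pinning one agent to a direct vote shifts which agent must pay the cost-$1$ deviation, uniqueness of the \MinSum outcome on both profiles so that the resolute variants fail for free, and De Morgan duality for $\textsc{And}_2$. Your two-cycle gadget (the helper $s$ forced up from $0$ to $1$ when $a$ is pinned to $0$) verifies correctly and differs only cosmetically from the paper's gadget (where pinning $a$ to $1$ releases $b$ from its deviation and drops $b, e, f$ from $1$ to $0$); the one formality to add is simulating the non-final constant entry in $B_s = 0 \succ 1$ via a constant voter \underline{zero}, exactly as the paper does in its hardness reductions.
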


To end the section, we observe that the lexicographic refinement of \MinMax satisfies cast monotonicity. In this refinement, we seek a certificate that minimises the number of agents $a \in N$ with $c_a = n - 1,$ breaking ties by the number of agents with $c_a = n - 2,$ and so on. Formally:
\begin{definition}[\LexiMin]
    Let $\vec{c}_1$ and $\vec{c}_2$ be two certificates. We say that $\vec{c}_1 \leq_{\mathit{lex}} \vec{c}_2$ if, after sorting the entries in $\vec{c}_1$ and $\vec{c}_2$ in non-increasing order, $\vec{c}_1$ is lexicographically no larger than $\vec{c}_2$.
    The unravelling procedure \textsc{LexiMin} returns a minimum consistent certificate with respect to $\leq_{\mathit{lex}}$.
\end{definition}

\LexiMin can be viewed as a variant of \MinSum where the edge weights $0, 1, 2, \ldots, n - 1$ in the delegation graph are replaced with weights $1, X, X^2, \ldots, X^{n - 1}$ for a large enough $X$ (e.g., $X = n + 2$ suffices). Indeed, our results for \MinSum do not require the edge weights to be drawn from  $\{0, \dots, n - 1\}$; rather, they hold whenever the weights form a strictly increasing sequence. Accordingly, the two biased versions of \LexiMin are resolute, easy to compute, and satisfy cast monotonicity. For more expressive function classes, \LexiMin inherits its intractability from that of \MinSum, as it coincides with it on ballots of maximum size 2. 

\section{Summary and Future Work}
We have extended the work of Colley et al.~(\citeyear{grandi}), by strengthening their complexity-theoretic results to a complete dichotomy for monotone Boolean functions. For the standard model, we provided near-optimal algorithms for the two unravelling procedures and efficient algorithms for computing an outcome where a given alternative wins, leading to attractive tie-breaking rules for the setup where one alternative is the status quo. We introduced the cast monotonicity axiom, which \MinSum and the lexicographic refinement of \MinMax satisfy, while traditional \MinMax does not; these results extend to the biased variants of the rules. 
 
One interesting direction for future work is to extend the model by allowing agents to assign cardinal values to their preferences, instead of just ordering them.
The cardinal voting model can be seen as a meaningful generalisation of the preferential voting model we have explored. Moreover, our results show that computing optimal unravellings is NP-hard for most reasonable function classes, but there is a case relevant to the real world that is left uncovered. Namely, one implementation 
requires agents to choose before the election if they want to be delegates or not.
Delegates have to disclose their votes while non-delegates can keep their vote secret.
Suppose we allow non-delegates to vote for a single Boolean function of delegates, but delegates can only delegate directly to other delegates. Then, hardness of unravelling no longer arises as cycles are contained in the pure fragment of the model, increasing expressive power while maintaining tractability.
An interesting question in this setting is under what conditions an agent prefers to join the delegates as opposed to remaining a voter.

\section*{Acknowledgments}
We would like to thank the anonymous reviewers for their constructive feedback and useful suggestions contributing to improving this paper. This work was supported by the AI Programme of The Alan Turing Institute. A preliminary version of this work has been presented at the Games, Agents, and Incentives Workshop of AAMAS'23, based on the first author's Master’s thesis, which would not have been possible without the third author’s supervision. We thank the workshop attendees for their interesting questions and remarks. The first author would additionally like to thank his good friends Victoria Walker, Dimitrios Iatrakis and Radostin Chonev for their assistance in this work.

{\fontsize{9.8pt}{10.8pt} \selectfont
\bibliography{aaai24}}

\clearpage
\appendix

\section{Technical Appendix to ``Unravelling Expressive Delegations: Complexity and Normative Analysis''}
\vspace{35pt}
\begin{abstract}
    This appendix contains the proofs omitted from the main body as well as additional supporting material. We moreover include a short section on axiomatic dichotomies and an observation concerning a hardness proof from previous work.
\end{abstract}

\subsection{Preliminaries}

\thfixedpoint*
\begin{proof} Consider the iterative procedure checking whether $\vec{c}$ is consistent. If it finishes with a complete collection of concrete votes $\vec{X},$ then this is a solution to the system, and it is the unique one given that all assignments were ``forced.'' Otherwise, assume the procedure terminates with an incomplete collection $\vec{X}$ where entries in $N_\bot = \{a \in N \mid X_a = \bot\}$ are undetermined. Then, create two completions of $\vec{X},$ one where the entries for $N_\bot$ are assigned to 0 and one where they are assigned to 1, and call them $\vec{X^0}$ and $\vec{X^1}$ respectively. Note that $\vec{X^0} \leq \vec{X^1}$ holds, which is defined to mean that $X^0_a \leq X^1_a$ for all $a \in N.$ Hence, because delegation is only allowed to monotonic functions, $B_a(c_a)(\vec{X}^0) \leq B_a(c_a)(\vec{X}^1)$ holds. We will show that $\vec{X^0}$ and $\vec{X^1}$ are two (distinct) solutions to the system. To do so, it is enough to check that for any $a \in N_\bot$ we have $B_a(c_a)(\vec{X}^0) = 0$ and $B_a(c_a)(\vec{X}^1) = 1.$ This amounts to checking that $B_a(c_a)(\vec{X}^0) = B_a(c_a)(\vec{X}^1)$ would lead to a contradiction. This is indeed the case, as it would imply by monotonicity that for any completion $\vec{X'}$ of $\vec{X}$ the value $B_a(c_a)(\vec{X'})$ only depends on $\vec{X},$ contradicting that $\vec{X}$ was the output of the algorithm.  
\end{proof}

\subsection{Complexity of $\MinSum_\calF$}

\thmorandtwohardtopolyapp*
\begin{proof}
    Follows from \cref{lemma:sumOrConstantFactorHardness,lemma:sumOrPolyHardness,lemma:sumAndPolyHardness} below.
\end{proof}

\begin{lemma}\label{lemma:sumOrConstantFactorHardness}
$\MinSum_{\textsc{Or}_2}$ is NP-hard and hard to constant-factor approximate, even for maximum ballot size 2.
\end{lemma}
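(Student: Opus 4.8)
The plan is to reduce from the decision version of \textsc{VertexCover}: given a graph $G=(V,E)$ and a bound $k$, I would build a smart-voting instance that uses only constants and binary disjunctions and whose cheapest consistent certificate has cost exactly the minimum vertex-cover number of $G$, so that $G$ has a cover of size $\le k$ iff there is a consistent certificate of cost $\le k$. The guiding principle, matching the informal description, is that the agents of each edge gadget should be resolvable through their first preferences precisely when one endpoint of the edge is placed into the cover, and should otherwise form a delegation cycle that costs at least $1$ to break.

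Concretely, for each vertex $v\in V$ I introduce an agent $x_v$ with ballot $B_{x_v} = 0 \succ 1$: taking the first preference (cost $0$) encodes ``$v\notin$ cover'' and forces $X_{x_v}=0$, while taking the fallback (cost $1$) encodes ``$v\in$ cover'' and forces $X_{x_v}=1$. For each edge $e=\{u,v\}$ I introduce two agents $y_e,y_e'$ forming a length-two cycle that is opened by an OR on the endpoints:
$$B_{y_e} = (x_u \lor y_e')\succ 0, \qquad B_{y_e'} = (x_v \lor y_e)\succ 0.$$
Using the valid-order characterisation of consistency for the enhanced model, I fix the vertex choices and inspect $e$. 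If $x_u=1$ (or $x_v=1$), then $y_e$ (resp.\ $y_e'$) short-circuits its disjunction to $1$ at cost $0$, after which the partner also resolves at cost $0$, so a covered edge is free. If $x_u=x_v=0$, each disjunction collapses to the single partner variable, giving the mutual dependency $y_e\leftrightarrow y_e'$ that admits no valid order, so one of the two must fall back to its constant at cost $1$.

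The key structural point is that the two gadget agents of $e$ read only the determined values $X_{x_u},X_{x_v}$ and each other, while each $x_v$ is a pure constant source read by nobody's delegation, so distinct edge gadgets never interact and no value can propagate back into a vertex. Hence every consistent certificate decomposes as a choice of a vertex set $S$ (the agents paying $1$ to vote $1$) plus a forced payment of $1$ for each edge left uncovered by $S$, with total cost $|S|+\lvert\{e:\ e\text{ uncovered by }S\}\rvert$; minimising over $S$ (any uncovered edge is fixed by adding one endpoint to $S$) gives exactly the minimum vertex-cover number. All ballots have size $2$ and every function lies in $\textsc{Or}_2$, so this yields \textsc{NP}-hardness. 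For the inapproximability part I would reuse the same correspondence but feed it a gap instance: since \textsc{VertexCover} is \textsc{APX}-hard, separating cover number $\le k$ from $\ge(1+\delta)k$ is \textsc{NP}-hard for a fixed $\delta>0$, transferring a factor-$(1+\delta)$ inapproximability to $\MinSum_{\textsc{Or}_2}$; to reach every constant I would then amplify the gap, e.g.\ by composing the construction with itself so that the cost separation multiplies across levels.

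I expect the gap amplification to be the main obstacle. The value-preserving reduction above pins the optimum to the vertex-cover number, which is itself $2$-approximable, so it cannot by itself rule out approximation beyond factor $2$; the genuine difficulty is to engineer a composition (or an alternative gap source) whose optimal cost blows up multiplicatively in the NO case while remaining small in the YES case, all the while keeping ballots of size $2$, every function in $\textsc{Or}_2$, and the consistency and function-equality checks polynomial-time.
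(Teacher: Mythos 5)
Your \textsc{NP}-hardness argument is essentially the paper's: the vertex gadget $B_{x_v} = 0 \succ 1$ (simulated via a constant voter) and the two-agent edge gadget whose mutual delegation cycle is opened by a disjunction on the endpoints are exactly the gadgets used there. The paper replicates each edge gadget $K+1$ times so that a consistent certificate of cost at most $K$ exists iff a cover of size at most $K$ does, but your single-copy version, which pins the optimum to the vertex-cover number $\tau(G)$, works equally well for the decision problem. This half is correct.

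The inapproximability half is a genuine gap: you name two routes (transferring \textsc{APX}-hardness of \textsc{VertexCover}, then ``composing the construction with itself'') but carry out neither, and you yourself flag the composition as the obstacle. Your diagnosis of why the value-preserving reduction is capped at a small constant is sound, but the intended way out is not composition; it is \emph{asymmetric replication}, which is what the paper does. For a target constant $c$, build $c(K+1)$ identical copies of each edge gadget. A covered edge still costs nothing, so a cover of size at most $K$ still yields a certificate of cost at most $K$; but any certificate whose induced assignment leaves some edge uncovered now pays at least $1$ in each of the $c(K+1)$ copies, hence at least $c(K+1) > cK$. The essential move is to abandon the formulation ``optimum $= \tau(G)$'' and to decouple the cover budget $K$ from the penalty for an uncovered edge, which can be inflated arbitrarily precisely because satisfied gadgets are free; the same replication with $c = n^{k}$ is then what drives the $n^{1-\epsilon}$ bound in the companion lemma. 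One point you would still need to handle carefully when writing this up (and which is glossed over in the paper as well): in the NO case a consistent certificate can avoid every edge penalty by encoding a vertex cover of size larger than $K$, paying only that cover's size; the lower bound on the NO-case optimum must account for these certificates too, not only for those leaving some edge uncovered, so the source instances and the claimed gap have to be chosen compatibly with this escape route.
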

    
\begin{proof}
    We reduce from the decision version of \textsc{VertexCover}. For consistency with some of our later proofs, we work with an equivalent formulation of Vertex Cover based on 2SAT. Namely, the input is a Boolean formula $\Phi = \bigwedge_{i = 1}^k C_i $ over variables $x_1, \ldots, x_n,$ where each clause $C_i$ is a disjunction of two positive literals, as well as a number $K$. We will construct an instance of $\MinSum_{\textsc{Or}_2}$ for which a consistent certificate of cost at most $K$ exists iff $\Phi$ has a satisfying assignment with at most $K$ variables set to 1.
    
    For each variable $x_i$ introduce a voter $x_i$ with voting profile $B_{x_i} = 0 \succ 1$. Note that this is not a valid smart ballot, but can be simulated by introducing a constant voter \underline{zero} with $B_{\text{\underline{zero}}} = 0$ and using $B_{x_i} = \text{\underline{zero}} \succ 1$ instead. Intuitively, variable $x_i$ will evaluate to $1$ precisely when voter $x_i$ gets preference level $1$.
    
    For each clause $C = x_i \lor x_j$ in $\Phi$ we construct $K + 1$ identical gadgets with the intuitive property that they incur no additional cost if $C$ is satisfied and a cost of at least $K + 1$ otherwise. One gadget consists of two fresh voters $a$ and $b$ with smart profiles $B_a = x_j \lor b \succ 0$ and $B_b = x_i \lor a \succ 0.$ This is illustrated in \cref{fig:MinSumHardness2}.
\tikzset{circ/.style={circle,draw, solid, inner sep=0pt,minimum size=10mm}}
    \begin{figure}
        \centering
        \begin{tikzpicture}[edge from parent/.style={draw,-latex}, align=center,node distance=1.75cm]
            \node[circ] (xi){$x_i$};
            \node[circ, right of=xi] (a){$a$};
            \node[circ, below of=xi] (xj){$x_j$};
            \node[circ, right of=xj] (b){$b$};
            
            \node[draw, dashed, rounded rectangle, fit={(xi)(a)}] (elli1) {};
            \node[draw, dashed, rounded rectangle, fit={(b)(xj)}](elli2) {};
            
            \draw [->] (b) -- (elli1) node[midway,left]{$\lor$};
            \draw [->] (a.east) to [bend left=90] node[right]{$\lor$} (elli2.east) {};
        \end{tikzpicture}
        \caption{Gadget of clause $x_a \lor x_b$ for $\MinSum_{\textsc{Or}_2}.$}
        \label{fig:MinSumHardness2}
    \end{figure}

    We now prove the correctness of the construction. First, assume $\vec{y} = y_1, \ldots, y_n$ is a satisfying assignment of $\Phi$ with at most $K$ variables set to $1.$ Then, we construct a certificate $\vec{c}$ of cost at most $K,$ as follows. For voters $x_i,$ set $c_{x_i} = y_i,$ while for other voters $v$ set $c_v = 0.$ We now show that $\vec{c}$ is consistent. To do so, we need to show that the preferences of gadget voters can be unravelled into votes according to $\vec{c}$. Consider a clause $C = x_i \lor x_j.$ Because $\vec{y}$ is a satisfying assignment, one of $y_i$ or $y_j$ is $1$. By symmetry, it suffices to consider $y_i = 1.$ Then, the vote of $b$ can be computed as $x_i \lor a = y_i \lor a = 1 \lor a = 1.$ Similarly, the vote of $a$ can be computed as $x_j \lor b = x_j \lor 1 = 1.$ Hence, $\vec{c}$ is consistent.

    Conversely, assume $\Phi$ is not satisfiable and let $\vec{c}$ be a consistent certificate. We will show that the cost of $c$ is at least $K + 1.$ From certificate $\vec{c},$ construct an assignment $\vec{y} = y_1, \ldots, y_n$ by setting $y_i = c_{x_i}.$ Because $\vec{y}$ does not satisfy $\Phi$, consider a clause $C = x_i \lor x_j$ which is not satisfied by $\vec{y};$ i.e., $y_i = y_j = 0.$ We now show that all $K + 1$ gadgets corresponding to $C$ each incur a cost of at least one. Consider one such a gadget and assume for a contradiction that it incurs a cost of zero; i.e., its voters $a$ and $b$ both get their first preference, namely $a$ gets $x_j \lor b$ and $b$ gets $x_i \lor a.$ However, since $y_i = y_j = 0,$ these preferences resolve to $0 \lor b = b$ and $0 \lor a = a,$ which is a preference cycle, contradicting the consistency of $c.$

    We now apply gap amplification to our proof to also get hardness of approximation. Suppose that in the reduction above instead of $K + 1$ copies of each gadget we used $c(K + 1)$ copies for some $c \geq 1.$ Then, by the same argument as above, if $\Phi$ is satisfiable with at most $K$ variables set to $1,$ there is a consistent certificate of cost at most $K,$ and otherwise all consistent certificates incur a cost of at least $c(K + 1).$ A polynomial-time $c$-approximation algorithm for $\MinSum_{\textsc{Or}_2}$ could be used to produce a solution of cost a most $cK < c(K + 1)$ when $\Phi$ is satisfiable with at most $K$ variables set to $1$, and hence can not exist unless P = NP. 
\end{proof}

\begin{lemma}\label{lemma:sumOrPolyHardness}
    $\MinSum_{\textsc{Or}_2}$ is hard to $n^{1 - \epsilon}$-factor approximate even for maximum ballot size 2.
\end{lemma}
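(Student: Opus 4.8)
The plan is to upgrade the constant-factor inapproximability of Lemma~\ref{lemma:sumOrConstantFactorHardness} to an $n^{1-\epsilon}$ bound by pushing the gap amplification much further. The previous lemma already establishes the key structural fact: from a 2SAT-style \textsc{VertexCover} instance $\Phi$ with parameter $K$, one builds a $\MinSum_{\textsc{Or}_2}$ instance whose optimum is at most $K$ when $\Phi$ is satisfiable with $\leq K$ ones, and at least $c(K+1)$ otherwise, where $c$ is the number of copies of each clause gadget. The only thing limiting the gap in that proof is the choice of $c$; since each extra copy of a gadget adds only $O(1)$ fresh voters, I am free to take $c$ as large as any polynomial in the input size without destroying polynomial-time computability.

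The main step is therefore a careful accounting of how $c$ relates to the total number of voters $n$ in the constructed instance. With $m$ clauses and $c$ copies per clause, the instance has $n = \Theta(cm) + (\text{variable voters})$ voters, so $c = \Theta(n/m)$. The satisfiable optimum is at most $K \leq (\text{number of variables})$, a quantity polynomial in the $\Phi$-instance size and independent of $c$, while the unsatisfiable optimum is at least $c(K+1) = \Omega(cK)$. To make the ratio grow like $n^{1-\epsilon}$, I would take $c$ to be a sufficiently large polynomial in the size of $\Phi$ — large enough that $c$ dominates the fixed factors $K$ and $m$ — so that $n \approx cm$ and the gap ratio $c(K+1)/K \approx c$ is of order $n^{1-\epsilon}$ for any desired $\epsilon>0$. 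Concretely, for a target $\epsilon$, choosing $c = \mathrm{poly}(|\Phi|)$ of the right degree yields $n = \mathrm{poly}(|\Phi|)$ with $c \geq n^{1-\epsilon}$ for all large enough instances, and then any $n^{1-\epsilon}$-approximation algorithm would distinguish the satisfiable from the unsatisfiable case, contradicting the NP-hardness of \textsc{VertexCover} unless P = NP.

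I expect the main obstacle to be the bookkeeping that makes the exponent come out exactly right: one must verify that the additive contributions to $n$ from the variable voters and from the single constant voter \underline{zero} are negligible compared to $cm$, and that the satisfiable-case optimum $K$ does not itself scale with $c$ (it does not, by construction, since no gadget voter is ever given a nonzero preference level in the satisfiable certificate). A clean way to handle this is to fix $\epsilon>0$, set $c = |\Phi|^{t}$ for a parameter $t = t(\epsilon)$ chosen after computing the polynomial dependence of $n$ on $c$ and $|\Phi|$, and then show that for this choice $n^{1-\epsilon} \leq c$ holds for all sufficiently large instances, so the $c$-factor gap already exceeds the claimed $n^{1-\epsilon}$ factor. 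Since the reduction remains polynomial-time for any polynomially-bounded $c$, this completes the argument.
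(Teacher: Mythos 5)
Your proposal is correct and follows essentially the same route as the paper's proof: both amplify the gap by taking the number of gadget copies $c$ to be a sufficiently large polynomial in the size of $\Phi$, express the approximation factor in terms of the number of voters in the constructed instance, and choose the polynomial degree so that $c$ dominates $n^{1-\epsilon}$ (the paper fixes $c = n^k$ with $k \geq 5\epsilon^{-1} - 4$, which is exactly the bookkeeping you defer to the parameter $t(\epsilon)$). No substantive difference.
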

\begin{proof}
To get the stronger inapproximability result, we will set $c = n^k$ in our previous proof, with $k$ to be chosen later. Care needs to be taken here because, to talk about a $c$-approximation, $c$ needs to be expressed as a function of the $\MinSum_{\textsc{Or}_2}$ instance size.\footnote{So we do \emph{not} get that an $n^k$-approximation does not exist.} To avoid notational clash, in this proof we use $n$ to denote the number of variables in $\Phi$ and $X$ to denote the number of voters in instances of $\MinSum_{\textsc{Or}_2}.$
Assume for a contradiction that $\MinSum_{\textsc{Or}_2}$ admits a polynomial time $X^{1 - \epsilon}$-approximation algorithm $\mathcal{A}$ for some $\epsilon > 0.$ Assuming $\Phi$ consists of $m$ clauses, the instance produced by our reduction consists of $X = n + 1 + 2n^k(K + 1)m \leq n + 1 + 2n^k(n + 1)n^3 = O(n^{k + 4})$ voters, which for some $C > 0$ is at most $Cn^{k + 4}.$ Therefore, for our instance algorithm $\mathcal{A}$ approximates within a factor of $X^{1 - \epsilon} \leq C^{1 - \epsilon}n^{(k + 4)(1 - \epsilon)}.$ To get a contradiction, we want that this is at most $c = n^k;$ i.e., $C^{1 - \epsilon}n^{(k + 4)(1 - \epsilon)} \leq n^k,$ which is the same as $C^{1 - \epsilon} \leq n^{k - (k + 4)(1 - \epsilon)} = n^{\epsilon(k + 4) - 4}.$ By picking $k \geq 5 \epsilon^{-1} - 4,$ the latter will hold for $n$ large enough, yielding a polynomial-time algorithm for deciding Vertex Cover, subject to $n$ being large enough, which is not possible unless P = NP.
\end{proof}

\begin{lemma}\label{lemma:sumAndPolyHardness}
    $\MinSum_{\textsc{And}_2}$ is hard to $n^{1 - \epsilon}$-factor approximate even for maximum ballot size 2.
\end{lemma}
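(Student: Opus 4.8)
The plan is to derive this result from the $\textsc{Or}_2$ inapproximability of \cref{lemma:sumOrPolyHardness} by exhibiting a cost-preserving duality between $\MinSum_{\textsc{Or}_2}$ and $\MinSum_{\textsc{And}_2}$ instances, rather than redoing the gadget construction from scratch. Given any ballot list over $\textsc{Or}_2$, I would form its \emph{De Morgan dual}: leave each delegation position in place but replace the function $f = B_a(i)$ by $f^d,$ where $f^d(\vec{X}) = \overline{f(\overline{\vec{X}})}$ and $\overline{\vec{X}}$ denotes componentwise negation. The key structural facts are that projections are self-dual ($\overline{\overline{X_b}} = X_b$), the constants $0$ and $1$ swap, and $(x \lor y)^d = x \land y$; hence the dual of an $\textsc{Or}_2$-ballot is an $\textsc{And}_2$-ballot with the same positions, the same maximum ballot size ($2$), and the same number of voters $n.$ Ballot validity is preserved since dualisation is a bijection on functions and thus maps distinct entries to distinct entries; where the original uses a constant-$0$ helper voter (as in the $\textsc{Or}_2$ construction), the dual uses a constant-$1$ helper.

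Second, I would show that a fixed certificate $\vec{c}$ is consistent for the original instance iff it is consistent for its dual, with cost preserved. Since all functions involved are monotonic, and the dual of a monotonic function is again monotonic (negating inputs and outputs preserves the order up to two reversals), \cref{th:fixed-point} lets me replace ``consistent'' by ``the system $X_a = B_a(c_a)(\vec{X})$ has a unique solution'' on both sides. Now $\vec{X} \mapsto \overline{\vec{X}}$ is a bijection of $\{0, 1\}^N$ carrying solutions of the original system to solutions of the dual one: indeed $B_a(c_a)^d(\overline{\vec{X}}) = \overline{B_a(c_a)(\vec{X})},$ so $\vec{X}$ solves the original system iff $\overline{\vec{X}}$ solves the dual. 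A bijection preserves the number of solutions, so the dual system has a unique solution exactly when the original one does. Hence $\vec{c}$ is consistent for the dual iff for the original, and since $\vec{c}$ is literally the same tuple of positions, $\sum_{a \in N} c_a$ is unchanged.

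Consequently the two optimisation problems coincide on corresponding instances: the sets of consistent certificates are identical and carry identical costs, so any polynomial-time $C$-approximation for $\MinSum_{\textsc{And}_2}$ on the dual instance is, verbatim, a $C$-approximation for $\MinSum_{\textsc{Or}_2}$ on the original. Because the number of voters $n$ is also identical, an $n^{1 - \epsilon}$-approximation for $\MinSum_{\textsc{And}_2}$ would yield one for $\MinSum_{\textsc{Or}_2},$ contradicting \cref{lemma:sumOrPolyHardness} unless P = NP. The main point to get right is the consistency correspondence; routing it through \cref{th:fixed-point} and the negation bijection is what makes it clean, and it relies crucially on all dualised functions remaining monotonic so that the fixed-point characterisation applies on both sides.
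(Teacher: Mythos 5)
Your proof is correct, but it takes a genuinely different route from the paper's. The paper re-instantiates the $\textsc{Or}_2$ gadget construction in dual form by hand: variable voters get $\text{\underline{one}} \succ 0$ in place of $\text{\underline{zero}} \succ 1$, clause gadgets use $\land$ in place of $\lor$, and the unravelling computations are then re-checked with the roles of $0$ and $1$ exchanged (the paper explicitly warns that ``caution needs to be observed'' when substituting values). You instead prove an instance-independent equivalence: De Morgan dualisation of every ballot entry is a bijection between $\MinSum_{\textsc{Or}_2}$ and $\MinSum_{\textsc{And}_2}$ instances with the same voter set, the same ballot sizes, and \emph{identical} sets of consistent certificates with identical costs, so the $n^{1-\epsilon}$-inapproximability of \cref{lemma:sumOrPolyHardness} transfers verbatim. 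The key step---that consistency is preserved---is handled soundly: all functions on both sides are monotonic (duals of monotonic functions are monotonic), so \cref{th:fixed-point} applies, and the componentwise-negation bijection on $\{0,1\}^N$ carries solutions of one fixed-point system to solutions of the other, preserving uniqueness; injectivity of dualisation also preserves ballot validity. What your approach buys is a black-box reduction that needs no re-verification of gadget dynamics and would equally transfer the plain NP-hardness (or any other statement about $\MinSum_\calF$) to the dual class. The only dependency it adds is on \cref{th:fixed-point}; that is harmless here, though you could avoid it by arguing directly that a linear order $\triangleleft$ is valid for $\vec{c}$ in the original instance iff it is valid in the dual, since $f(\vec{X})$ is extensionally constant iff $f^d(\overline{\vec{X}})$ is, and the two constants are negations of one another.
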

\begin{proof}
    We only outline the changes required compared to the proofs for $\lor.$ Namely, variable voters now have $B_{x_i} = \text{\underline{one}} \succ 0,$ where \underline{one} is a constant voter with $B_{\text{\underline{one}}} = 1.$ Moreover, the clause gadgets now consists of two voters $a$ and $b$ with smart ballots $B_a = x_j \land b \succ 0$ and $B_b = x_i \land a \succ 0.$ The rest of the proof goes through as written, except for when it comes to calculating votes, where $\land$ replaces $\lor,$ and caution needs to be observed when substituting in values for $x_i$ because of the now ``inverted'' smart ballots used for the variable voters. For instance, where before we had calculated the vote of $b$ in the first implication as $x_i \lor a = y_i \lor a = 1 \lor a = 1,$ we now do it as $x_i \land a = B_b(c_{x_i}) \land a = B_b(y_i) \land a = (1 - y_i) \land a = 0 \land a = 0.$
\end{proof}

\subsection{Complexity of $\MinMax_\calF$}

\thmminmaxbooleanpolytime*

\begin{proof} We show this for $\MinMax_\textsc{Or},$ the case of $\MinMax_\textsc{And}$ is completely symmetric. We binary search for the smallest $w$ such that a consistent certificate $\vec{c}$ with $\max_{a \in N}c_a \leq w$ exists. To check for a given $w$ whether such a certificate exists and compute one if so, let us restrict the ballots to the first $w$ preference levels, namely performing $k_a \gets \min\{k_a, w\}.$ Now, given the new ballots $\vec{B}$ we want to find any consistent certificate (which might not be possible if we deleted too many backup votes).

The key observation is that each certificate $\vec{c}$ corresponds to a directed graph with vertex set $N$ and nodes labeled with elements from the set $\{0, 1, \bot\},$ constructed as follows:~label all nodes with $\bot,$ except those $a \in N$ such that $B_a(c_a) = \tau_a,$ which are labelled with $\tau_a \in \{0, 1\}.$ For all other nodes, $B_a(c_a) = b_1 \lor b_2 \lor \ldots \lor b_k$ holds for some $k \geq 1.$ Add to the graph all the edges in the set $E_a(c_a) = \{(a, b_i) \mid 1 \leq i \leq k\}.$ Then, running the iterative algorithm computing the concrete votes $\vec{X}$ for $\vec{c}$ will lead to the following:~nodes $a \in N$ that can reach a node labelled with 1 will satisfy $X_a = 1;$ from the other nodes, those part of no cycle will satisfy $X_a = 0$ and the other ones $X_a = \bot.$ Hence, a certificate is consistent if and only if nodes that can not reach a node labelled 1 in the corresponding graph form an acyclic induced subgraph.

Consequently, to find a consistent certificate, we should first try to maximize the set of nodes reaching a node labelled 1. To do this, we begin by assigning the backup vote to all agents still having it, if it happens to be a 1 (hence maximizing the set of nodes labelled 1). Then, consider another graph with vertex set $N,$ the nodes already assigned labelled 1 and the rest labelled $\bot,$ and with the edge set containing $E_a(i)$ for all $a$ labelled $\bot$ and $0 \leq i \leq k_a.$ Find the nodes that can reach a node labelled 1 in this graph and compute a directed spanning forest $T$ of the subgraph induced by these nodes with sinks the nodes labelled 1 (this can be done by running a depth-first search on the transposed graph from all sinks and returning the tree edges). For each non-sink node $a$ in this subgraph, let $(a, x) \in T$ be the corresponding
parent edge in $T.$ Subsequently, find some $i$ for which $(a, x) \in E_a(i)$ and set $c_a = i.$ We have now ensured that the maximum set of nodes can reach a node labelled 1 in the certificate. These nodes $N_1 \subseteq N$ are precisely the nodes $a \in N$ satisfying $c_a \neq \bot$\footnote{We took the liberty in this proof to use $c_a = \bot$ to denote an unset value for $c_a.$} thus far. For the remaining nodes $N \setminus N_1$, we can only hope to make their induced subgraph in the certificate acyclic. To do this, assign the backup vote to agents in $N \setminus N_1$ still having it (which has to be a 0). Then, proceed iteratively:~whenever an agent with $c_a = \bot$ exists such that for some $0 \leq i \leq k_a$ all edges $(a, b) \in E_a(i)$ satisfy $c_b \neq \bot,$ assign $c_a = i.$ If this process exhausts all of $N \setminus N_1,$ then we have found a consistent certificate, otherwise none exist. Intuitively, this iterative process computes a topological sorting of the subgraph induced by $N \setminus N_1$ in the certificate, one agent at a time.
\end{proof}

\thhardnessminmax*

\begin{proof} Follows from \cref{lemma:hardness-min-max,lemma:inapprox-min-max} below.
\end{proof}

\begin{lemma}\label{lemma:hardness-min-max} $\MinMax_{\textsc{OrAnd}_2}$ is NP-hard even for maximum ballot size 3.
\end{lemma}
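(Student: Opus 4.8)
The plan is to reduce from a suitable variant of \textsc{3SAT} to $\MinMax_{\textsc{OrAnd}_2}$, mirroring the gadget-based approach used for \MinSum but exploiting that \MinMax penalises only the \emph{maximum} preference level used. Recall that for maximum ballot size $2$, finding a \MinMax certificate is trivial (one only checks whether the all-first-preferences certificate $\vec{c}=(0,\dots,0)$ is consistent), which is why we need ballots of size $3$: the extra backup slot gives us a ``penalty'' level that distinguishes satisfying from non-satisfying assignments. Concretely, I would encode a truth assignment using variable voters whose choice of preference level $0$ versus $1$ corresponds to setting a variable true or false, arranged so that each variable assignment is achievable at maximum level $1$, while clause gadgets built from binary $\lor$ and $\land$ can be resolved consistently at level $1$ if and only if the corresponding clause is satisfied. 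An unsatisfied clause should force some voter in its gadget to fall back to preference level $2$ (the third entry), pushing the bottleneck $\max_{a\in N} c_a$ from $1$ up to $2$.

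First I would fix the source problem. A convenient choice is the monotone/positive variant of \textsc{3SAT} used earlier (clauses being disjunctions of positive literals), or a variant where each clause mixes the roles needed to engage both $\lor$ and $\land$ gadgets; the point of the theorem is precisely that \emph{both} connectives are required, so the reduction must use $\land$ to create the forced-cycle behaviour that a pure-$\lor$ (or pure-$\land$) instance could always avoid. Second, I would build: (i) a variable gadget for each $x_i$ whose consistent unravellings at bottleneck $\le 1$ are in bijection with the two truth values, using the constant-voter trick (auxiliary voters \underline{zero}/\underline{one}) to simulate the degenerate ballots $0\succ 1$; and (ii) a clause gadget, analogous to the pair of mutually-delegating voters $B_a = x_j \lor b$, $B_b = x_i \lor a$ from \cref{lemma:sumOrConstantFactorHardness}, but now arranged so that a \emph{falsified} clause creates an unbreakable delegation cycle among the gadget's first and second preferences, forcing recourse to a third-preference constant backup at level $2$.

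Third, I would prove the two directions of correctness. For the forward direction, given a satisfying assignment I would exhibit a certificate with $\max_a c_a \le 1$: set each variable voter's level according to the assignment, and check via the iterative consistency procedure (\cref{th:fixed-point} and the valid-order formulation) that every clause gadget resolves using only first and second preferences, so no voter needs level $2$. For the reverse direction, I would argue that any certificate with bottleneck $\le 1$ induces an assignment (reading off variable voters' levels), and that a falsified clause gadget cannot be unravelled using only levels $0$ and $1$ because its first/second preferences collapse (via $\land$ resolving to its literal operand, or $\lor$ resolving to its literal operand) into a genuine delegation cycle, contradicting consistency by \cref{th:fixed-point}. Hence a bottleneck-$\le 1$ certificate exists iff $\Phi$ is satisfiable, giving \textsf{NP}-hardness; the inapproximability within a constant factor (the companion \cref{lemma:inapprox-min-max}) then follows by the standard trick of padding the weight scale so that the bottleneck gap between the satisfiable and unsatisfiable cases is amplified from $\{1,2\}$ to $\{1, t\}$ for an arbitrarily large constant $t$, e.g.\ by relabelling the third-preference penalty level.

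The main obstacle I anticipate is designing the clause gadget so that an unsatisfied clause genuinely \emph{forces} the bottleneck up, rather than merely permitting a level-$2$ resolution that a clever certificate could sidestep at level $\le 1$ by rerouting delegations elsewhere. In the \MinSum reduction this was handled by replicating each gadget $K+1$ times to blow up the \emph{sum}; here, since only the maximum matters, replication does not help, so the gadget itself must be self-contained: its only consistent low-bottleneck unravellings must be exactly those corresponding to the clause being satisfied, with a hard cycle otherwise. Getting the interaction between the $\lor$ and $\land$ layers right so that satisfaction is both necessary and sufficient for a cycle-free level-$1$ resolution is the delicate part, and is also where the need for ballot size $3$ (the backup escape hatch at level $2$) and for \emph{both} connectives becomes essential.
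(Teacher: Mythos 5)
Your high-level plan matches the paper's: reduce from a 3SAT variant, use variable voters with ballots $0 \succ 1$ (via a constant voter), and build clause gadgets in which a falsified clause traps its voters in a delegation cycle at preference levels $0$ and $1$, forcing some voter down to level $2$. However, you explicitly stop short of constructing the clause gadget, and you yourself identify that gadget as ``the delicate part'' -- so the proposal has a genuine gap precisely where the proof's content lies. Two specific points. First, the source problem must be pinned down more carefully than ``monotone/positive 3SAT'': if every clause is a disjunction of positive literals, the formula is trivially satisfied by the all-ones assignment. The paper reduces from the variant in which each clause consists of either three positive or three negative literals (NP-hard by Schaefer); positive clauses yield $\lor$-gadgets and negative clauses yield $\land$-gadgets, which is exactly why both connectives are needed. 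Within a single clause gadget only one connective appears -- there is no ``interaction between the $\lor$ and $\land$ layers'' to get right, contrary to what you anticipate.

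Second, the missing gadget. For $C = x_i \lor x_j \lor x_k$ the paper introduces \emph{six} voters $a, a', b, b', c, c'$ with, e.g., $B_b = x_i \lor a \succ b' \succ 0$ and $B_{b'} = x_i \lor a \succ b \succ 0$ (and cyclically for the other pairs, with $a, a'$ pointing at $x_k \lor c$ and $c, c'$ at $x_j \lor b$). The duplicated voters are the key device you are missing: they guarantee that \emph{both} level $0$ and level $1$ keep every gadget voter delegating inside the gadget, so when $x_i = x_j = x_k = 0$ every choice of levels in $\{0,1\}$ yields a functional graph on six vertices with all out-edges internal to the gadget, which necessarily contains a cycle; consistency then forces some $c_v = 2$. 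If the clause is satisfied, say $y_i = 1$, then $b, b'$ resolve at level $0$, after which $c, c'$ and then $a, a'$ resolve at level $0$ as well, so the whole gadget unravels at bottleneck $1$ (set by the variable voters). Finally, a side remark on your closing sentence about \cref{lemma:inapprox-min-max}: one cannot simply ``relabel the third-preference penalty level,'' since the cost of an entry is its index in the ballot; the paper instead inserts $k$ primed copies per gadget voter so that the constant backup genuinely sits at position $k+1$, all intermediate positions still pointing inside the gadget.
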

\begin{proof}
    We reduce from the version of 3SAT where all clauses consist of either three positive or three negative literals, which is NP-hard by Schaefer's Dichotomy Theorem. 
    The input is a Boolean formula $\Phi = \bigwedge_{i = 1}^k C_i $ over variables $x_1, \ldots, x_n,$ where each clause $C_i$ consists of either three positive or three negative literals. We will construct an instance of $\MinMax_{\textsc{OrAnd}_2}$ for which a consistent certificate $\vec{c}$ with $\max{\vec{c}} \leq 1$ exists iff $\Phi$ is satisfiable.
    
    For each variable $x_i$ introduce a voter $x_i$ with voting profile $B_{x_i} = 0 \succ 1$. Note that this is not a valid smart ballot but can be simulated by introducing a constant voter \underline{zero} with $B_{\text{\underline{zero}}} = 0$ and using $B_{x_i} = \text{\underline{zero}} \succ 1$ instead. Intuitively, variable $x_i$ will evaluate to $1$ precisely when voter $x_i$ gets preference level $1$.
    
    For each clause $C$ in $\Phi$ we construct a gadget as follows:~if $C = x_i \lor x_j \lor x_k,$ introduce six fresh voters $a, a', b, b', c, c'$ with preference profiles as follows, illustrated in \cref{fig:MinMax}:
    \begin{align*}
        B_a &= x_k \lor c \succ a' \succ 0
        &B_{a'} &= x_c \lor c \succ a \succ 0 \\
        B_{b} &= x_i \lor a \succ b' \succ 0
        &B_{b'} &= x_i \lor a \succ b \succ 0 \\
        B_{c} &= x_j \lor b \succ c' \succ 0
        &B_{c'} &= x_j \lor b \succ c \succ 0
    \end{align*}
    If, instead, $C = \overline{x_i} \lor \overline{x_j} \lor \overline{x_k}$, then the gadget is the same but with $\lor$ exchanged for $\land.$ Intuitively, the clause gadget can be unravelled with only first and second preferences iff $C$ is satisfied by the assignment of the variable voters. This is made more rigorous later.

\tikzset{circ/.style={circle,draw, solid, inner sep=0pt,minimum size=10mm}}
    \begin{figure}
    \vspace*{-80pt}
    \begin{tikzpicture}[edge from parent/.style={draw,-latex}, align=center,node distance=1.75cm]
    
    \node[circ] (xi){$x_i$};
    \node[circ, right of=xi] (a){$a$};
    \node[circ, right of=a] (a'){$a'$};    

    \node[circ, below of=xi] (xj){$x_j$};
    \node[circ, right of=xj] (b){$b$};
    \node[circ, right of=b] (b'){$b'$};
    
    \node[circ, below of=xj] (xk){$x_k$};
    \node[circ, right of=xk] (c){$c$};
    \node[circ, right of=c] (c'){$c'$};

    \node[draw, dashed, rounded rectangle, fit={(xi)(a)}] (elli1) {};
    \node[draw, dashed, rounded rectangle, fit={(b)(xj)}](elli2) {};
    \node[draw, dashed, rounded rectangle, fit={(c)(xk)}](elli3) {};
    
    \draw [->, dashed] (a) to [bend right] (a');
    \draw [->, dashed] (a') to [bend right] (a);
    \draw[->] (a) .. controls (-3, 4) and (-2, -7) .. (elli3.south)  node[midway,left]{$\lor$};
    \draw[->] (a') .. controls (5, -2) and (5, -6) .. (elli3.south)  node[midway,right]{$\lor$};;

    \draw[->, dashed] (b) to [bend right] (b');
    \draw[->, dashed] (b') to [bend right] (b);
    \draw [->] (b) -- (elli1) node[midway,left]{$\lor$};
    \draw [->] (b') -- (elli1) node[midway,right]{$\lor$};

    \draw[->, dashed] (c) to [bend right] (c');
    \draw[->, dashed] (c') to [bend right] (c);
    \draw [-> ] (c) -- (elli2) node[midway,left]{$\lor$};
    \draw [->] (c') -- (elli2) node[midway,right]{$\lor$};
    \end{tikzpicture}
    \vspace*{-37pt}
    \caption{ Gadget of clause $x_i \lor x_j \lor x_k$ for $\MinMax_{\textsc{OrAnd}_2}.$}
    \label{fig:MinMax}
    \end{figure}

    We now prove the correctness of the construction. First, assume $\vec{y} = y_1, \ldots, y_n$ is a satisfying assignment of $\Phi.$ Then, we construct a consistent certificate $\vec{c}$ with $\max{\vec{c}} \leq 1,$ as follows. For voters $x_i,$ set $c_{x_i} = y_i.$ Now, consider a clause $C = x_i \lor x_j \lor x_k$ (the negative literals case is symmetric). Because $\vec{y}$ is a satisfying assignment, one of $y_i, y_j, y_k$ is $1$. By symmetry, it suffices to consider $y_i = 1.$ Then, the votes of $b$ and $b'$ can be computed using their first preferences as $x_i \lor a = y_i \lor a = 1 \lor a = 1.$ Similarly, the votes of $c$ and $c'$ can be computed using their first preferences as $x_j \lor b = x_j \lor 1 = 1$. Finally, the votes of $a$ and $a'$ can be computed using their first preferences as $x_k \lor c = x_k \lor 1 = 1.$ The described reasoning has produced a consistent certificate $\vec{c}$ with $\max{\vec{c}} \leq 1.$

    Conversely, assume $\vec{c}$ is a consistent certificate with $\max{\vec{c}} \leq 1.$ From certificate $\vec{c}$ construct an assignment $\vec{y} = y_1, \ldots, y_n$ by setting $y_i = c_{x_i}.$ We will show that $\vec{y}$ satisfies $\Phi.$ Assume for a contradiction some clause $C = x_i \lor x_j \lor x_k$ (the negative literals case is symmetric) is not satisfied by $\vec{y}$; i.e., $y_i = y_j = y_k = 0.$ Then, the preferences of the voters in the corresponding gadget resolve to the following by substituting $x_i, x_j, x_k$ with $0$:
    \begin{align*}
        B_a &= c \succ a' \succ 0
        &B_{a'} &= c \succ a \succ 0 \\
        B_{b} &= a \succ b' \succ 0
        &B_{b'} &= a \succ b \succ 0 \\
        B_{c} &= b \succ c' \succ 0
        &B_{c'} &= b \succ c \succ 0
    \end{align*}

    Clearly, unless there exists $v \in \{a, a', b, b', c, c'\}$ such that $c_v = 2,$ trying to unravel the preferences according to $\vec{c}$ will lead to a preference cycle, making $\vec{c}$ not consistent. However, this means that $\max{\vec{c}} \geq 2,$ a contradiction.
\end{proof}

\begin{lemma}\label{lemma:inapprox-min-max} $\MinMax_{\textsc{OrAnd}_2}$ is hard to constant-factor approximate.
\end{lemma}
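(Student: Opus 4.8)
The plan is to amplify the gap produced by the NP-hardness reduction of \cref{lemma:hardness-min-max}. That reduction turns a (positive/negative) 3SAT instance $\Phi$ into an instance of $\MinMax_{\textsc{OrAnd}_2}$ in which a consistent certificate with $\max \vec{c} \le 1$ exists iff $\Phi$ is satisfiable, while the no-instance forces $\max \vec{c} \ge 2$. The appearance of $2$ is structural: in the gadget of an \emph{unsatisfied} clause no constant $1$ ever flows in, so all first and second preferences collapse to delegations among the six gadget voters, turning the gadget into a closed cyclic system whose only escape is a direct vote sitting at preference level $2$. I would amplify this ``$2$'' into an arbitrary constant $T$.

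Fix the target factor $C \ge 1$ and set $T = \lfloor C \rfloor + 1$. I modify the reduction so that, in every clause gadget, the sole escape (the direct vote) is moved from level $2$ to level $T$, while preference levels $2, \dots, T-1$ are filled with ``dead'' options that cannot resolve the gadget whenever its clause is unsatisfied. Concretely, I pad each gadget voter's ballot to length $T$, inserting between its second preference and its final direct vote a list of $T-2$ fresh \emph{dummy} voters, and I give each dummy voter a ballot whose first preference delegates \emph{into} the clause gadget and whose own direct vote is likewise pushed to level $T$ (its remaining slots being delegations to other dummies or gadget voters of the same clause, keeping all ballot entries distinct and within $\textsc{Liquid} \subseteq \textsc{OrAnd}_2$). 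The design ensures that a dummy resolves at level $0$ exactly when its target gadget voter does, but stays undetermined as long as the gadget is cyclic.

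The two cases then separate cleanly. If $\Phi$ is satisfiable, the satisfying assignment resolves every clause gadget using only levels $\le 1$ (exactly as in \cref{lemma:hardness-min-max}); once a gadget voter is determined, each dummy pointing at it resolves at level $0$, so a certificate with $\max \vec{c} \le 1$ exists. If $\Phi$ is unsatisfiable, any certificate induces an assignment $y_i = c_{x_i} \in \{0,1\}$ that fails some clause $C$; in $C$'s sub-instance every option at levels $0, \dots, T-1$ of every gadget voter and every dummy is a delegation to a still-undetermined voter of the same sub-instance, so the first voter resolved by the sequential procedure must use a direct vote, which now lives at level $T$. Hence $\max \vec{c} \ge T$.

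Finally, a polynomial-time $C$-approximation would return a certificate of cost $\le C \cdot 1 < T$ on yes-instances and of cost $\ge T$ on no-instances, thereby deciding 3SAT; so none exists unless $\mathrm{P} = \mathrm{NP}$, and since $C$ was arbitrary, $\MinMax_{\textsc{OrAnd}_2}$ admits no constant-factor approximation. The instance has $O(Tk)$ voters for $k$ clauses, hence polynomial size for each fixed $T$. The main obstacle I anticipate is the dummy design: I must guarantee that the padded ``dead'' levels admit no sub-$T$ escape in the no-case (so that no cheap resolution leaks in) while leaving the yes-case untouched. This is precisely why the dummies delegate into the gadget rather than to the variable voters -- the latter are \emph{always} determined (to $y_j$), so routing dead options through them would open a cheap escape and collapse the gap.
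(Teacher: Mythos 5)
Your proposal is correct and follows essentially the same route as the paper: both amplify the gap of the NP-hardness reduction by padding each gadget voter's ballot with fresh dummy voters whose preferences delegate back into the gadget, so that the direct-vote escape is pushed to preference level roughly the target approximation factor, yielding cost $\le 1$ on yes-instances versus cost $\ge T$ (the paper's $k+1$) on no-instances. The only differences are cosmetic details of which dummies point where.
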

\begin{proof}
This can be proven by modifying the proof of \cref{lemma:hardness-min-max}. We will still introduce three voters $a, b, c$ per clause. However, instead of voters $a', b', c'$ we will introduce $k$ voters of each kind: $(a'_\ell)_{\ell = 1}^k, (b'_\ell)_{\ell = 1}^k, (c'_\ell)_{\ell = 1}^k.$ We set the ballots of $a, b, c$ as follows:
\begin{gather*}
B_a = x_k \lor c \succ a'_1 \succ \ldots \succ a'_k \succ 0 \\
B_b = x_i \lor a \succ b'_1 \succ \ldots \succ b'_k \succ 0 \\
B_c = x_j \lor b \succ c'_1 \succ \ldots \succ c'_k \succ 0
\end{gather*}
\noindent We set the ballots of $a'_\ell, b'_\ell, c'_\ell$ as follows for $1 \leq \ell \leq k$:
{\small\begin{gather*}
B_{a'_\ell} = x_k \lor c \succ a'_1 \succ \ldots \succ a'_{\ell - 1} \succ a \succ a'_{\ell + 1} \succ \ldots \succ a'_k \succ 0 \\
B_{b'_\ell} = x_i \lor a \succ b'_1 \succ \ldots \succ b'_{\ell - 1} \succ b \succ b'_{\ell + 1} \succ \ldots \succ b'_k \succ 0 \\
B_{c'_\ell} = x_j \lor b \succ c'_1 \succ \ldots \succ c'_{\ell - 1} \succ c \succ c'_{\ell + 1} \succ \ldots \succ c'_k \succ 0
\end{gather*}}
For this instance, similar reasoning to the proof of \cref{th:hardness-min-max} shows that, if $\Phi$ is satisfiable, there is a consistent certificate with $\max \vec{c} \leq 1,$ and otherwise all consistent certificates satisfy $\max \vec{c} = k + 1.$ A polynomial-time $\alpha$-approximation algorithm for $\MinMax_{\textsc{OrAnd}_2}$ could be used to produce a consistent certificate $\vec{c}$ with $\max \vec{c} \leq \alpha$ (which is less than $k + 1$ if $k$ is chosen large enough) whenever $\Phi$ is satisfiable, and hence can not exist unless P = NP.
\end{proof}

\subsection{Computational Dichotomies}

Given a Boolean function $f \colon \{0,1\}^k \to \{0,1\}$ and a partial assignment of values to its variables $\nu : \{0, 1\}^k \to \{0, 1, \bot\}$ we write $f(\nu)$ for the function obtained from $f$ by substituting in the values where $\nu_i \neq \bot.$ Then, we have the following lemma.

\begin{lemma}\label{lemma:o-donnell} Let $f \colon \{0,1\}^k \to \{0,1\}$ be a monotonic function. Denote its arguments by $x_1, \ldots, x_k$ and write $K = \{1, \ldots, k\}.$ Then, unless $f = \bigvee_{i \in \calI}x_i$ for some $\calI \subseteq K,$ there exist indices $p \neq q$ and a partial valuation $\nu$ assigning values to all arguments except $x_p$ and $x_q$ such that $f(\nu) = x_p \land x_q.$ The same holds changing $(\bigvee, \land)$ to $(\bigwedge, \lor).$
\end{lemma}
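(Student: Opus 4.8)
The plan is to phrase everything in terms of the \emph{minimal true sets} of $f$ (equivalently, its prime implicants). For $S \subseteq K$ write $\mathbb{1}_S \in \{0,1\}^k$ for the indicator vector of $S$; call $S$ a \emph{true set} if $f(\mathbb{1}_S) = 1$, and a \emph{minimal true set} if moreover no proper subset of $S$ is a true set. By monotonicity, $f(\mathbb{1}_S) = 1$ iff $S$ contains some minimal true set, so $f$ is fully determined by its antichain of minimal true sets. The first step I would record is the characterisation: $f = \bigvee_{i \in \calI} x_i$ holds precisely when every minimal true set is a singleton, the minimal true sets being then exactly $\{\{i\} : i \in \calI\}$, with $\calI = \varnothing$ corresponding to $f \equiv 0$ (the empty disjunction). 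Throughout I assume $f \not\equiv 1$; the constant $1$ admits no restriction to $x_p \land x_q$, but it is the empty conjunction and so is covered by the dual statement (see below).

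The crux is the case where $f$ is \emph{not} a pure disjunction, so by the characterisation there is a minimal true set $T$ with $|T| \geq 2$. I would fix distinct $p, q \in T$ and take the partial valuation $\nu$ with $\nu_i = 1$ for $i \in T \setminus \{p,q\}$ and $\nu_i = 0$ for $i \in K \setminus T$, leaving $x_p, x_q$ free; the claim is $f(\nu) = x_p \land x_q$. Evaluating the four settings of $(x_p,x_q)$ amounts to evaluating $f$ at $\mathbb{1}_{T'}$ for $T' \in \{\,T,\ T\setminus\{q\},\ T\setminus\{p\},\ T\setminus\{p,q\}\,\}$. Since $T$ is a true set, $f(\mathbb{1}_T) = 1$, giving value $1$ at $(1,1)$. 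For the other three, $T' \subsetneq T$; if $f(\mathbb{1}_{T'}) = 1$ then $T'$ would contain a minimal true set strictly inside $T$, contradicting minimality of $T$, so those three values are $0$. Hence $f(\nu) = x_p \land x_q$, as required.

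It then remains to handle the boundary case and the dual. If every minimal true set is a singleton (and $f \not\equiv 1$), then $f = \bigvee_{i \in \calI} x_i$ and there is nothing to prove. For the dual statement I would invoke the De Morgan symmetry $f \mapsto f^{\mathrm{dual}}$ with $f^{\mathrm{dual}}(x) = \neg f(\neg x)$: this is again monotone, one checks $f^{\mathrm{dual}} = \bigvee_{i\in\calI} x_i$ iff $f = \bigwedge_{i\in\calI} x_i$, and a restriction $\nu'$ witnessing $f^{\mathrm{dual}}(\nu') = x_p \land x_q$ transports, by complementing the fixed values, to $\nu$ with $f(\nu) = \neg\big(\neg x_p \land \neg x_q\big) = x_p \lor x_q$. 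Applying the primal argument to $f^{\mathrm{dual}}$ thus yields the claim whenever $f$ is not a pure conjunction (and $f \not\equiv 0$).

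The main obstacle is the verification in the second paragraph: the entire argument hinges on using \emph{minimality} of $T$ to force the three ``off'' corners of the $2\times 2$ square to evaluate to $0$ simultaneously, something a generic (non-minimal) true set of size $\geq 2$ would not guarantee. The one genuinely delicate point I would flag explicitly is the constant functions: $f \equiv 1$ is not a disjunction of variables yet has no AND-restriction, so the clean statement must route it through the dual as the empty conjunction, and symmetrically $f \equiv 0$ must be routed through the primal; stating the characterisation with the empty $\bigvee$/$\bigwedge$ conventions is what makes this bookkeeping come out consistently.
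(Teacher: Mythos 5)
Your proof is correct and follows essentially the same route as the paper's: your minimal true sets are exactly the terms of the minimal monotone DNF used there, the partial valuation you choose (ones on $T\setminus\{p,q\}$, zeros outside $T$) is identical, and the dual is handled the same way via $\neg f(\neg x)$. The only difference is cosmetic (set language versus formula language), plus your explicit flagging of the constant functions $f\equiv 0,1$ — an edge case the paper's statement and proof quietly gloss over but which is harmless in its applications, since constants belong to $\LIQUID\subseteq\textsc{Or}\cap\textsc{And}$.
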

\begin{proof} We prove the first part, as the second follows by applying the first to $f' = \overline{f(\overline{x_1}, \ldots, \overline{x_k})}.$ Write $f$ as a minimal monotone formula in DNF. Namely, let $C_1, \ldots, C_\ell \subseteq K$ be such that $f = \bigvee_{i = 1}^\ell\bigwedge_{j \in C_i}x_j.$ By minimality, for no $i \neq j$ does it hold that $C_i \subseteq C_j.$ If $\ell = 0$ or $|C_i| = 1$ for all $i$ then we are in the ``unless'' case. Otherwise, without loss of generality assume $|C_1| > 1$ and take $p \neq q$ such that $p, q \in C_1.$ We construct $\nu$ by setting setting all variables for $K \setminus C_1$ to 0 and all variables for $C_1 \setminus \{p, q\}$ to 1. We now show $f(\nu) = x_p \land x_q.$ Note that $\bigwedge_{j \in C_i}x_j$ evaluates to 0 under $\nu$ for all $i > 1$ because $C_i \nsubseteq C_1,$ from which there exists $b \in C_i \setminus C_1 \subseteq K \setminus C_1$ implying $\nu_b = 0.$ It remains to see that $\bigwedge_{j \in C_1}x_j = x_p \land x_q$ under $\nu,$ which is immediate given that $\nu$ sets the other variables for $C_1$ to 1.
\end{proof}

As an example, $\textsc{Maj}(x, y, 0) = x \land y$ and $\textsc{Maj}(x, y, 1) = x \lor y.$ Note that the two ``unless'' cases above correspond to $f \in \textsc{Or}$ and $f \in \textsc{And}.$ We are now ready to prove our full dichotomies.

\thmdich*

\begin{proof} Follows from \cref{lemma:dich-minsum,lemma:dich-minmax} below.
\end{proof}

\begin{lemma}\label{lemma:dich-minsum} Assume $\calF \subseteq \textsc{Mon}.$ If $\calF = \LIQUID,$ $\MinSum_\calF$ is poly-time computable. Otherwise, 
$\MinSum_\calF$ is NP-hard and hard to $n^{1 - \epsilon}$-approximate even for ballots of size at most 2.
\end{lemma}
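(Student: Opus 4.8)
The plan is to split along the two cases of the statement. For the easy direction, observe that $\calF = \LIQUID$ is precisely the classic model, where a \MinSum certificate is a minimum-cost arborescence of the delegation graph; hence \cref{thm:efficient-minsum-minmax} already supplies an $O(n \log n + m)$ algorithm, settling poly-time computability.

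For the hard direction, I would reduce the problem for an arbitrary non-trivial $\calF$ to the two base hardness results already established for $\textsc{Or}_2$ and $\textsc{And}_2$ (\cref{lemma:sumOrPolyHardness,lemma:sumAndPolyHardness}). Since $\LIQUID \subseteq \calF$ and $\calF \neq \LIQUID$, there is some $f \in \calF$ that is neither a projection nor a constant, and such a monotonic $f$ must depend on at least two of its arguments (a monotonic function depending on at most one argument is a projection or a constant). I would then invoke \cref{lemma:o-donnell} to extract a binary connective from $f$ by fixing all but two of its inputs: if $f$ is not a pure disjunction, the lemma yields a partial valuation $\nu$ with $f(\nu) = x_p \land x_q$, giving access to $\textsc{And}_2$; and if $f$ is a pure disjunction it has at least two arguments, so substituting $0$ into all but two positions yields $x_p \lor x_q$, giving access to $\textsc{Or}_2$. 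In either case one of the two base hardness instances can be embedded.

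The key technical step is realising the substitution $\nu$ inside a \MinSum instance without disturbing the reduction. I would do this with shared \emph{constant voters}: introduce a fixed pool of at most $r - 2$ voters (where $r$ is the arity of $f$, a constant depending only on $\calF$) casting the direct votes prescribed by $\nu$, and wherever the base reduction uses a ballot entry $u \land v$ (respectively $u \lor v$) substitute instead the entry $f$ applied to $u$, $v$, and these constant voters in the appropriate positions. Permutation-invariance of $\calF$ makes every such instantiation legal, and since the constant voters' votes are forced and resolved first (at cost $0$) in any valid order, the resulting entry is \emph{extensionally} exactly $u \land v$ (respectively $u \lor v$); thus consistency, certificate costs, and the satisfiability correspondence all transfer verbatim from \cref{lemma:sumAndPolyHardness} (respectively \cref{lemma:sumOrPolyHardness}).

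The remaining obstacle is purely bookkeeping for the inapproximability bound, and it is the part I would be most careful about. The constant voters add only $O(1)$ agents overall, since they can be shared across all gadgets---the distinctness requirement in the anonymity assumption binds only within a single function instance, not across instances---and every ballot remains of size at most $2$, as a single function entry is merely replaced by another single function entry. Consequently the agent count stays $\Theta(X)$ for the base-instance size $X$, and the $n^{1-\epsilon}$ gap-amplification analysis of \cref{lemma:sumOrPolyHardness} carries through unchanged up to adjusting constants, yielding the claimed NP-hardness and $n^{1-\epsilon}$-inapproximability even for ballots of size at most $2$.
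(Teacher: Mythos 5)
Your proposal is correct and follows essentially the same route as the paper: the easiness part via \cref{thm:efficient-minsum-minmax}, and the hardness part by using \cref{lemma:o-donnell} to extract a binary $\land$ or $\lor$ from some $f \in \calF \setminus \LIQUID$ and realising the partial valuation $\nu$ with constant voters so that the $\textsc{And}_2$/$\textsc{Or}_2$ reductions (\cref{lemma:sumOrConstantFactorHardness,lemma:sumOrPolyHardness,lemma:sumAndPolyHardness}) carry over verbatim. Your extra bookkeeping (one constant voter per fixed position of $\nu$, and the check that the agent count and ballot sizes are unaffected so the $n^{1-\epsilon}$ gap amplification survives) is if anything slightly more careful than the paper's one-line ``simulate the proof'' phrasing.
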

\begin{proof} The easiness part follows from previous work (and also from \cref{thm:efficient-minsum-minmax}). For the hardness part, assume $\LIQUID \subsetneq \calF$ and consider some $f \in \calF \setminus \LIQUID.$ Notice that $\LIQUID = \textsc{Or} \cap \textsc{And},$ so $f \in (\calF \setminus \textsc{Or}) \cup (\calF \setminus \textsc{And}).$ First, consider the case $f \in \calF \setminus \textsc{Or}.$ By \cref{lemma:o-donnell}, let $\nu$ be a partial valuation assigning values to all variables of $f$ except say $x_p$ and $x_q$ such that $f(\nu) = x_p \land x_q.$ Assuming we introduce constant voters \underline{zero} and \underline{one} with $B_{\text{\underline{zero}}} = 0$ and $B_{\text{\underline{one}}} = 1$ if not already present, we can now simulate the proof of \cref{lemma:sumAndPolyHardness} replacing 0/1 in $\nu$ by \underline{zero}/\underline{one} and $\land$ by $f(\nu).$ The case $f \in \calF \setminus \textsc{And}$ is analogous by using the other half of \cref{lemma:o-donnell}, and 
\cref{lemma:sumOrConstantFactorHardness,lemma:sumOrPolyHardness} instead of \cref{lemma:sumAndPolyHardness}. 
\end{proof}

\begin{lemma}\label{lemma:dich-minmax} Assume $\calF \subseteq \textsc{Mon}.$ If $\calF \subseteq \textsc{Or}$ or $\calF \subseteq \textsc{And},$ $\MinMax_\calF$ is poly-time computable. Otherwise, $\MinMax_\calF$ is NP-hard even for ballots of size at most 3, and hard to constant-factor approximate.
\end{lemma}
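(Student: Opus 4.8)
The plan is to prove this as a dichotomy in two cases, mirroring the structure of \cref{lemma:dich-minsum}. For the easiness direction, I would observe that $\MinMax_\calF$ is simply $\MinMax$ restricted to delegations drawn from $\calF$, so if $\calF \subseteq \textsc{Or}$ (resp. $\calF \subseteq \textsc{And}$), then every instance of $\MinMax_\calF$ is already a legal instance of $\MinMax_\textsc{Or}$ (resp. $\MinMax_\textsc{And}$), and I can invoke the polynomial-time algorithm of \cref{thm:minmax-Boolean-poly-time} verbatim. No new work is needed here.

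For the hardness direction, the key is that the negation of the easiness hypothesis is exactly ``$\calF \not\subseteq \textsc{Or}$ and $\calF \not\subseteq \textsc{And}$'', so I can fix a witness $f \in \calF \setminus \textsc{Or}$ and a witness $g \in \calF \setminus \textsc{And}$ (possibly $f = g$). Using the remark after \cref{lemma:o-donnell} that its two ``unless'' cases coincide with membership in $\textsc{Or}$ and $\textsc{And}$, applying the first half of \cref{lemma:o-donnell} to $f$ yields a partial valuation $\nu$ with $f(\nu) = x_p \land x_q$, and applying the second half to $g$ yields a partial valuation $\nu'$ with $g(\nu') = x_{p'} \lor x_{q'}$. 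Introducing constant voters \underline{zero} and \underline{one} to realise the fixed arguments of $\nu$ and $\nu'$, I can then simulate a binary $\land$-gate by a single ballot entry $f(\nu)$ and a binary $\lor$-gate by a single ballot entry $g(\nu')$, placing them on arbitrary agents via the agent-permutation-invariance of $\calF$.

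With both gates in hand, I would reuse the reductions behind \cref{th:hardness-min-max}---namely \cref{lemma:hardness-min-max} for NP-hardness and \cref{lemma:inapprox-min-max} for inapproximability---essentially unchanged, replacing every binary $\land$ by $f(\nu)$ and every binary $\lor$ by $g(\nu')$. Each such replacement swaps one preference-level entry for another, so ballot sizes are preserved at $3$, and the added constant voters have empty ballots, so they do not affect the bound. Since the simulated gates are extensionally equal to genuine $\land$ and $\lor$, the entire correctness analysis (a consistent certificate with $\max \vec{c} \le 1$ exists iff $\Phi$ is satisfiable, together with the gap amplification of the approximation variant) transfers verbatim, giving NP-hardness for ballots of size at most $3$ and hardness of constant-factor approximation.

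The main obstacle I anticipate is not a calculation but a structural point: unlike the \MinSum dichotomy, where simulating \emph{either} $\land$ or $\lor$ already suffices because each of $\MinSum_{\textsc{Or}_2}$ and $\MinSum_{\textsc{And}_2}$ is hard in isolation, here $\MinMax_\textsc{Or}$ and $\MinMax_\textsc{And}$ are individually \emph{easy}, so I genuinely need both gates inside the same instance. The argument therefore hinges on extracting two (possibly distinct) witnesses---one escaping $\textsc{Or}$ and one escaping $\textsc{And}$---and I would take care to confirm that agent-invariance of $\calF$ plus the availability of both constant voters lets me instantiate the two gates independently and as many times as the reductions require.
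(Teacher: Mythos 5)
Your proposal is correct and follows essentially the same route as the paper: the easiness part is a direct invocation of \cref{thm:minmax-Boolean-poly-time}, and the hardness part extracts two witnesses $f \in \calF \setminus \textsc{Or}$ and $f' \in \calF \setminus \textsc{And}$, applies \cref{lemma:o-donnell} to each to simulate binary $\land$ and $\lor$ (via constant voters and permutation invariance), and then reruns the reductions of \cref{th:hardness-min-max}. Your observation that both gates are genuinely needed here---in contrast to the \MinSum{} dichotomy---is exactly the point the paper's (much terser) proof relies on.
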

\begin{proof} The easiness part is \cref{thm:minmax-Boolean-poly-time}. For the hardness part, there exist $f \in \calF \setminus \textsc{Or}$ and $f' \in \calF \setminus \textsc{And}.$ Applying \cref{lemma:o-donnell} to $f$ and $f'$ we can simulate binary $\land$ and $\lor$ in the proof of \cref{th:hardness-min-max} similarly to the proof of \cref{lemma:dich-minsum}.
\end{proof}

\subsection{Efficient Computation of \MinSum and \MinMax Certificates}

\thmefficientminsumminmax*
\begin{proof} Follows from \cref{lemma:minsum-efficient,lemma:minmax-efficient} below.
\end{proof}

\begin{lemma}\label{lemma:minsum-efficient} A \MinSum certificate can be computed in time $O(n \log n + m).$
\end{lemma}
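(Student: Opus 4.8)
The plan is to observe that a \MinSum certificate is exactly a minimum-cost $r$-arborescence of the modified delegation graph $G$ (each agent contributes one out-edge, and the chosen edges must form a spanning tree directed towards $r$), and then to compute one via a carefully implemented Chu--Liu/Edmonds algorithm, matching the running time of the fastest known minimum-arborescence algorithm. The first step is the reduction: transposing $G$ turns the task into the standard minimum-weight arborescence (optimum branching) problem rooted at $r$, in which each non-root vertex must receive exactly one tree in-edge. Here it is worth recording the structural features we will exploit: $|V| = n + 1$, the out-edges of each agent $a$ carry the distinct integer weights $0, 1, \ldots, k_a$ and are \emph{already given in sorted order}, and $\sum_{a}(k_a + 1)$ accounts for the $m$ term. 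Recovering the certificate $\vec{c}$ from the arborescence is then immediate, since the weight of the chosen edge of $a$ is precisely $c_a$.

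Next I would spell out the algorithm. Each non-root vertex tentatively selects its cheapest incident tree-edge (in our setting the weight-$0$ first preference); if the selected edges are acyclic we already have an optimal arborescence of cost $0$. Otherwise each cycle is contracted into a supernode, and every edge entering a contracted cycle has its weight reduced by the weight of the edge it would replace inside the cycle; one then recurses on the contracted instance and finally expands the contractions, breaking each cycle at the unique vertex whose external edge was used. Correctness is the classical exchange argument for minimum arborescences and does not need to be reproved.

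The efficiency claim is where the work lies. The plan is to maintain, for each currently active component, a meldable priority queue of its candidate incoming edges supporting find-min, delete-min and meld, together with a single lazy additive offset per queue that realises the uniform reweighting of a contraction in $O(1)$ time (so no per-edge decrease-key is ever performed explicitly), and to track the contractions with a disjoint-set-union structure. Because the per-vertex edge lists arrive pre-sorted, the initial queues are built in $O(m)$ total \emph{without any extra sorting pass}, which is exactly what keeps the additive term at $m$ rather than $m\log n$. There are $O(n)$ contraction steps; each edge is inserted and melded a bounded number of times charged to the $O(m)$ term, and the $O(n\log n)$ term comes from the $O(n)$ delete-min operations on Fibonacci-heap-based queues.

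The main obstacle is the amortised accounting of the contraction phase: one must show that the total number of melds and inserts is $O(m)$ and the number of delete-mins is $O(n)$, and verify that the lazy additive offsets faithfully implement the reweighting across an arbitrary sequence of nested contractions. This is precisely the content of the Gabow--Galil--Spencer--Tarjan implementation of minimum arborescences, which I would invoke; the only parts genuinely specific to our setting are the certificate/arborescence correspondence and the observation that the sorted input edge lists eliminate the sorting overhead, together yielding the stated $O(n\log n + m)$ bound.
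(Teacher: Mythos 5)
Your proposal is correct and follows essentially the same route as the paper, which simply observes that a \MinSum certificate corresponds to a minimum-cost arborescence of the delegation graph and invokes the Gabow--Galil--Spencer--Tarjan implementation of Chu--Liu/Edmonds to obtain the $O(n \log n + m)$ bound. The extra detail you supply about meldable heaps and lazy offsets is a faithful account of that algorithm's internals (and the sorted-edge-list remark is not actually needed, since the GGST bound holds without it), but it does not constitute a different argument.
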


\begin{proof} Direct consequence of applying the algorithm of \citet{gabow_min_cost_arb} to compute a minimum cost arborescence. This is essentially a clever implementation of the Chu–Liu-Edmonds algorithm, the latter running in time $O(nm).$ 
\end{proof}

We leave open whether the running time can be improved using the fact that edge weights are small non-negative integers (i.e., less than $n$). Note that for instances with average ballot size $\Omega(\log n),$ this is already asymptotically optimal. 

\begin{lemma}\label{lemma:minmax-efficient} A \MinMax certificate can be computed in time $O(n + m)$.
\end{lemma}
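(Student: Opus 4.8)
The plan is to show that a \MinMax certificate is exactly a minimum bottleneck arborescence rooted at $r$ in the modified delegation graph, and to compute one in linear time via a Dijkstra variant specialised to small integer weights. Recall that an arborescence using only edges of weight at most $x$ exists iff every agent can reach $r$ along such edges; hence the optimal bottleneck $w^*$ is the smallest $x$ for which this holds. Equivalently, I define for each vertex $v$ its \emph{bottleneck distance} $b(v)$, the minimum over all $v$--$r$ paths of the largest weight on the path; then $w^* = \max_{a \in N} b(a)$, and any edge of weight at most $w^*$ lying on a bottleneck-optimal path can serve as the parent edge of its tail in an optimal arborescence.

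To compute all $b(v)$ simultaneously, I would work on the transposed graph and run a single-source minimax computation from $r$: this is the usual Dijkstra relaxation with $\max$ in place of $+$, settling vertices in nondecreasing order of $b$ and, when settling $v$ through a transposed edge $p \to v$ of weight $w$ (i.e.\ the edge $v \to p$ of the original graph), recording $p$ as the parent of $v$. A generic priority queue yields only $O((n+m)\log n)$; the key to linear time is that all weights are integers in $\{0,\dots,W\}$ with $W = \ell \le n-1$, so the priority queue can be replaced by an array of $W+1$ buckets scanned in increasing order. Each vertex is settled once and each edge relaxed once, while the bucket scan contributes only $O(W)$, for a total of $O(n+m+W) = O(n+m)$.

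It then remains to argue that the recorded parent edges form a minimum bottleneck arborescence and to read off the certificate. Acyclicity and the fact that every vertex reaches $r$ follow because each non-root vertex $v$ receives a single parent edge $v \to p$ with $p$ settled strictly earlier, so following parents strictly decreases the settling index until $r$ is reached; the weight $w$ of this edge satisfies $\max(b(p),w) = b(v)$, hence $w \le b(v) \le w^*$, so the bottleneck of the constructed arborescence is at most $w^* = \max_a b(a)$, and equals $w^*$ by its minimality. For each agent $a$, the chosen parent edge identifies the ballot position to set as $c_a$, yielding the certificate; this postprocessing, together with building the transposed graph, is $O(n+m)$.

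The main obstacle is shaving the logarithmic factor: correctness of the minimax relaxation and of the predecessor arborescence is standard, but attaining strictly $O(n+m)$ hinges on the bucket implementation and on the bound $W \le n-1$, together with bookkeeping that keeps both the bucket scan and the transposition within linear time. As a byproduct this establishes the purely graph-theoretic claim advertised earlier, namely that a minimum bottleneck arborescence with a given root can be found in $O(n+m+W)$ time for integer edge weights in $[0,W]$.
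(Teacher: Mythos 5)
Your proposal is correct, and it reaches the same $O(n+m+W)=O(n+m)$ bound as the paper, but by a genuinely different mechanism. The paper also works on the transposed graph, but it avoids tentative distances altogether: it buckets the \emph{edges} by weight, sweeps the threshold $x=0,1,2,\ldots$, inserts the weight-$x$ edges, and whenever a newly inserted edge leaves the current root-reachable set $S$ it launches a depth-first search (restricted to $V\setminus S$ and to edges inserted so far) to absorb the newly reachable vertices; since every vertex is discovered once and every edge is examined at most once across all traversals, the total is $O(n+m)$, and the DFS tree edges directly give the arborescence. You instead bucket the \emph{vertices} by tentative bottleneck distance and run Dijkstra over the $(\min,\max)$ semiring with a Dial-style bucket queue; the crucial observation making this linear --- that bottleneck distances are bounded by $W$ rather than $nW$, so only $W+1$ buckets and an $O(W)$ monotone scan are needed --- is sound, as is the optimality argument via $w^*=\max_{a}b(a)$ being a lower bound on the bottleneck of any arborescence. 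The two algorithms compute the same quantity (your $b(v)$ is exactly the threshold at which $v$ joins $S$ in the paper's sweep), and both yield the advertised purely graph-theoretic $O(n+m+W)$ result; the paper's version is slightly leaner in that it needs no decrease-key bookkeeping, while yours makes the per-vertex bottleneck distances explicit. The only loose phrase in your write-up is that ``any edge of weight at most $w^*$ lying on a bottleneck-optimal path can serve as the parent edge of its tail'' --- what actually guarantees acyclicity is the predecessor structure of the Dijkstra run (parents are settled strictly earlier), which you do supply in the following paragraph, so no gap remains.
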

\begin{proof}
For convenience, we work on the transposed graph. Write $E_x = \{e \in E \mid w(e) = x\}$ for the set of edges of weight $x$ and $E_{\leq x} = \cup_{x' = 0}^{x}E_{x'}.$ An arborescence using edges with weights at most $x$ exists iff all vertices can reach the root $r$ in $G_{\leq x} = (V, E_{\leq x});$ i.e., $G_{\leq x}$ admits an arborescence. 
As edge weights are between $0$ and $n - 1,$ this already gives an $O((n + m) \log n)$ algorithm:~perform binary search for the smallest $0 \leq x < n$ such that $G_{\leq x}$ admits an arborescence and report some arborescence of $G_{\leq x}.$

To improve on this, compute the sets $(E_x)_{0 \leq x < n}$ and then proceed starting with $x = 0$ and increase $x$ by one until all vertices in $G_{\leq x}$ can reach the root. To do this efficiently, maintain a set $S$ of nodes that can reach the root at any point in time. Initially, $S = \{r\},$ and we stop whenever $S = V.$ Moreover, maintain the current graph $G_{\leq x}$ using adjacency lists and add the new edges $E_{x + 1}$ to it one by one whenever $x$ is incremented. When adding an edge $(u, v),$ add it to $u$'s adjacency list and check whether $u \in S$ and $v \notin S.$ If so, launch a graph traversal (e.g., depth-first search) from $v$ restricted to vertices in $V \setminus S$ and edges already in the adjacency lists, adding to $S$ any newly discovered vertices. Because the graph traversals will enter each vertex exactly once and hence each edge $e \in E$ will be considered at most once in the traversals, the total time complexity is $O(n + m).$
\end{proof}

\subsection{Control for \MinSum}

\begin{algorithm}[t]
\caption{
Fulkerson's Algorithm}\label{alg:fulkerson}
\textbf{Input}: Directed graph $G = (V, E)$ rooted at $r \in V,$ weight function $w : E \to \mathbb{R}_{\geq 0}.$\\
\textbf{Output}: Set of tight edges $E' \subseteq E$ and laminar family $\calL$ of subsets of $V \setminus \{r\}.$
\begin{algorithmic}[1]
\State $E', \calL \gets \varnothing$
\While{there exists a strongly connected component  $L$ of $G' = (V, E')$ such that $r \notin L$ and $\rho(L) \cap E' = \varnothing$}
\State Pick one such $L$ arbitrarily. \label{line:nondet}
\State $\calL \gets \calL \cup \{L\}$
\State $w_\mathit{min} \gets \min_{e \in \rho(L)}w(e)$
\State $w(e) \gets w(e) - w_\mathit{min}$ for all $e \in \rho(L).$
\State $E' \gets E' \cup \{e \in \rho(L) \mid w(e) = 0\}$
\EndWhile
\State \textbf{return} $E'$ and $\calL.$
\end{algorithmic}
\end{algorithm}

Fulkerson's primal-dual algorithm for computing minimum cost arborescences \citep{fulkerson,kamiyama_survey,utke2023anonymous,blocking_arborescences} is given in \cref{alg:fulkerson}.\footnote{Technically, only the first phase of the algorithm. The second phase builds a min-cost arborescence similarly to \cref{lemma:recursive} below.} To prove our result for control of \MinSum, we will need a number of key properties of the output $(E', \calL)$ of the algorithm known from previous works. The most prominent one is \cref{lemma:fulkerson-charact}, stated in the main text, which we repeat here for convenience.

\lemmafulkersoncharact*

This gives a combinatorial characterization of minimum cost arborescences without mentioning edge weights. It can be shown that the output of the algorithm does not depend on the nondeterministic choice of $L$ made on \cref{line:nondet}, so it is meaningful and convenient to talk about ``the output'' of the algorithm, rather than ``an output,'' but our proofs can all in principle be written without using this fact. We will also need the following property, which follows by directly examining the algorithm.

\begin{lemma}\label{lemma:scc} For any $L \in \calL,$ the subgraph induced by $L$ is strongly connected with respect to tight edges.
\end{lemma}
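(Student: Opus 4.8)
The plan is to exploit two elementary facts about Fulkerson's algorithm: that every set added to $\calL$ is, at the moment of its addition, a strongly connected component of the current tight-edge graph, and that the tight-edge set $E'$ never shrinks during the execution.

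First I would fix an arbitrary $L \in \calL$ and focus on the iteration of the while loop in which $L$ is selected on \cref{line:nondet}. By the loop guard, at that moment $L$ is a strongly connected component of $G' = (V, E')$, where here $E'$ denotes the tight-edge set as it stands at the start of that iteration. By the definition of a strongly connected component, for any two vertices $u, v \in L$ there exists a directed $u$--$v$ path all of whose vertices lie inside $L$: any closed walk witnessing the mutual reachability of $u$ and $v$ passes only through vertices mutually reachable with them, which by maximality of the component all belong to $L$. In particular, every edge of such a path is a tight edge with both endpoints in $L$. Hence the subgraph induced by $L$ is already strongly connected with respect to the tight edges present at this iteration.

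Next I would observe that $E'$ is monotonically non-decreasing over the whole run: the only instruction that modifies $E'$ is the union $E' \gets E' \cup \{e \in \rho(L) \mid w(e) = 0\}$, so tight edges are only ever added, never removed. Consequently the edge set of the subgraph induced by $L$ can only grow as the algorithm continues. Since augmenting the edge set on a fixed vertex set cannot destroy strong connectivity, the internal paths found above survive to termination, and the subgraph induced by $L$ is strongly connected with respect to the final $E'$ returned by the algorithm, as claimed.

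I expect no real obstacle here; the single point needing a little care is the observation that the paths witnessing strong connectivity of the component $L$ use only vertices of $L$, so that they remain valid once attention is restricted to the subgraph \emph{induced} by $L$. Given that, the statement follows immediately from the monotonicity of $E'$.
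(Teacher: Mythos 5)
Your proof is correct and matches the paper's intended argument: the paper simply states that the lemma ``follows by directly examining the algorithm,'' and your elaboration---that $L$ is a strongly connected component of the tight-edge graph at the moment it is added to $\calL$, that the witnessing paths stay inside $L$, and that $E'$ only grows thereafter---is exactly the direct examination being alluded to. No gaps.
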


We begin with the following auxiliary result.

\begin{lemma}\label{lemma:recursive} For any $L \in \calL$ and $v \in L,$ there exists a $v$-arborescence of the induced subgraph $G[L]$ using only tight edges and satisfying condition (ii) for all $L' \in \calL$ such that $L' \subsetneq L.$ Moreover, it can be computed in polynomial time.
\end{lemma}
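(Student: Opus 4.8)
The plan is to prove the statement by induction on the size of $L$ (equivalently, on its height in the laminar forest of $\calL$), using contraction of $L$'s children together with the strong-connectivity guarantee of \cref{lemma:scc}. In the base case $L$ is minimal in $\calL,$ so it has no proper laminar subsets and condition (ii) is vacuous; I then only need a $v$-arborescence of $G[L]$ using tight edges. By \cref{lemma:scc}, $G[L]$ is strongly connected with respect to tight edges, so every vertex reaches $v$ along such edges, and a reverse breadth-first search from $v$ on the transpose of the tight-edge subgraph produces the desired arborescence in polynomial time.

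For the inductive step, let $L'_1, \dots, L'_p = \mathit{ch}_\calL(L)$ be the children of $L$ and let $U = L \setminus \bigcup_i L'_i$ be its free vertices. I would contract each child $L'_i$ to a super-node, obtaining a graph $H$ on $U \cup \{\ell_1, \dots, \ell_p\}$ retaining only tight edges. Since $G[L]$ is strongly connected with respect to tight edges (\cref{lemma:scc}), so is its quotient $H.$ Let $\hat v$ be the image of $v$ (that is, $v$ itself if $v \in U,$ or $\ell_j$ if $v \in L'_j$). Building an in-arborescence $A$ of $H$ with sink $\hat v$ (again by reverse search) yields, for every super-node other than $\hat v,$ exactly one outgoing tight edge, whose tail is a concrete vertex $u_i \in L'_i$ (or a free vertex). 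I then expand: for each non-root child $L'_i$ I recurse on $G[L'_i]$ with sink $u_i,$ and for the child $L'_j$ containing $v$ I recurse with sink $v.$ The final edge set $T_L$ is the union of the edges of $A,$ the chosen edges of free vertices, and all recursively produced arborescences.

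It remains to verify the two properties. That $T_L$ is a $v$-arborescence of $G[L]$ follows because every vertex other than $v$ has out-degree exactly one (one edge per non-sink super-node, one per non-root free vertex, and out-degree one inside each child by induction), while acyclicity and reachability to $v$ follow by combining the acyclicity of $A$ with the inductive hypothesis inside each child. For condition (ii) the crucial point is that the crossing edges telescope: each child $L'_i$ other than the one containing $v$ has exactly its single edge $(u_i, w_i) \in A$ leaving it, so $\abs{\rho(L'_i) \cap T_L} = 1;$ and for a laminar set $L'' \subsetneq L'_i,$ the inductive hypothesis already supplies its unique crossing edge \emph{unless} $L''$ contains the recursion sink $u_i,$ in which case the crossing edge is precisely $(u_i, w_i),$ since that edge leaves $L'_i \supseteq L''$ and has tail $u_i \in L''.$

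I expect the main obstacle to be exactly this telescoping for the sets containing the local sink. Inside $T_L,$ the child $L'_j \ni v$ — and every laminar set squeezed between $v$ and $L'_j$ — has \emph{no} edge leaving it, because $v$ is the sink; the ``missing'' unit of $\rho$ for each such set is furnished by the single edge leaving $L$ that is added one level up in the enclosing construction of the minimum-cost arborescence (its tail is $v,$ so it crosses every laminar set contained in $L$ and containing $v$). The delicate part of the write-up is thus to phrase condition (ii) relative to this enclosing edge and check that, once it is accounted for, every laminar set is crossed exactly once; the remaining checks (strong connectivity of the quotient, out-degrees, acyclicity, and the polynomial running time from a constant amount of work per contraction level) are routine.
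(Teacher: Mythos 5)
Your proposal is correct and follows essentially the same route as the paper's proof: contract the laminar children of $L$ (the paper sidesteps your ``free vertices'' by assuming all singletons belong to $\calL$), use \cref{lemma:scc} to get strong connectivity of the quotient, extract an in-arborescence with sink the image of $v$, and recurse into each child at the vertex where its unique outgoing edge originates. The subtlety you flag is real and is in fact glossed over by the paper: for any $L' \in \calL$ with $v \in L' \subsetneq L$ the constructed $v$-arborescence has $\abs{\rho(L') \cap T} = 0$ rather than $1$, so condition (ii) holds as literally stated only for the sets not containing the sink, with the missing crossing edge supplied---exactly as you describe---by the single edge with tail $v$ leaving $L$ at the enclosing level of the construction in \cref{thm:manipulation-minsum}.
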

\begin{proof} For clarity of exposition, remove all non-tight edges from $G$ and restrict our attention to $G[L].$ Moreover, also for clarity, and without affecting the correctness of any of our previous claims, assume that $\calL$ contains all singleton sets $\{v\}$ for $v \in V \setminus \{r\}.$

We construct the required arborescence recursively. First, enumerate $\mathit{ch}_\calL(L) = \{L_1, \ldots, L_k\}.$ Note that $L = \cup_{i = 1}^kL_i.$ Without loss of generality, assume $v \in L_1.$ For $1 \leq i \leq k$ contract $L_i$ to a single new node called $x_i,$ keeping track of the correspondence between new edges and old edges.\footnote{This can introduce multi-edges, we keep them and their costs.} This gives a new graph with vertex set $\{x_1, \ldots, x_k\}.$ By \cref{lemma:scc}, the subgraph induced by $L$ is strongly connected, and this is not altered by contractions, so the new graph is also strongly connected. As a result, a graph traversal started from $x_1$; e.g., depth-first search; can be used to find an arborescence rooted at $x_1$ in the contracted graph. This arborescence corresponds to edges $(a_i, b_i)_{i = 2}^k$ in the original graph $G[L],$ where $a_i \in L_i.$ We make these edges part of the answer arborescence. To get the remaining edges, recurse inside each $L_i$ to get an arborescence in $G[L_i]$ rooted at $a_i$ with the required properties, and add it to the answer arborescence.
\end{proof}

We can now show our result for control of \MinSum by building on the proof idea for \MinMax.

\thmmanipulationminsum*
\begin{proof} Run Fulkerson's algorithm, and let $(E', \calL)$ be the output. The minimum-cost arborescences are precisely those satisfying conditions (i) and (ii) in \cref{lemma:fulkerson-charact}. For readability, assume we remove all edges in $E \setminus E'$ from the graph. Then, we are interested in arborescences satisfying condition (ii) only. Consider the set $S$ of vertices that can reach 1 in $G.$ Note that, in all arborescences, vertices in $(V \setminus \{r\}) \setminus S$ will be descendants of 0, so, if we can find an arborescence satisfying (ii) where all vertices in $S$ are descendants of 1, then the conclusion follows. We next show this last fact.

First, note that for any $L \in \calL,$ either $L \subseteq S,$ or $L \subseteq (V \setminus \{r\}) \setminus S.$ To see this, note that if $u, v \in L$ are such that $u \in S$ and $v \in (V \setminus \{r\}) \setminus S,$ then, by \cref{lemma:scc}, there would be a path from $v$ to $u,$ implying that $v \in S,$ which can not be the case. To construct the arborescence, the approach is largely similar to the one taken in the proof of \cref{lemma:recursive}. Contract all sets in $\mathit{rt}_\calL$ into single nodes and let $x$ be the node corresponding to 1. Because any $L \in \mathit{rt}_\calL$ is strongly connected and satisfies either $L \subseteq S$ or $L \subseteq (V \setminus \{r\}) \setminus S,$ it follows that the nodes in the contracted graph that can reach $x$ are precisely those corresponding to $L$'s for which $L \subseteq S.$ Then, construct an arborescence in the contracted graph by using (this time purposefully) depth-first search on the transposed graph starting from $r$ and recursing to $x$ before the node corresponding to 0. Then, as before, expand back the contracted nodes, marking the edges in the original graph corresponding to edges in the arborescence to be included in the answer arborescence. Finally, recurse inside each $L \in \mathit{rt}_\calL$ with roots chosen as before using \cref{lemma:recursive} to complete the arborescence. It can be checked that the output is a min-cost arborescence where agents in $S \setminus \{1\}$ vote for 1, from which implicitly we also get that $N_1 = S \setminus \{1\}.$
\end{proof}

\subsection{Axiomatic Results}

We will need the following auxiliary lemma:

\begin{lemma}\label{lemma:construct-monot} Consider an unravelling procedure $F,$ two arbitrary collections of ballots $\vec{B}, \vec{B'}$ and some $d \in D.$ Assume that for some $\vec{X} \in F(\vec{B})$ and all $\vec{X'} \in F(\vec{B'})$ it does not hold that $\vec{X} \leq_d \vec{X'}.$ Then, there exists a monotonic aggregation function $\agg : D^N \to D$ such that $d \in \agg(F(\vec{B}))$ and $d \notin \agg(F(\vec{B'}).$
\end{lemma}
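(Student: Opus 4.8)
The plan is to produce, for the fixed witness $\vec{X} \in F(\vec{B})$, a single monotonic aggregation function that pins the output of $\vec{X}$ to $d$ while forcing the output of every member of $F(\vec{B'})$ to be $1-d$. The natural candidate is the indicator of the principal filter of $\vec{X}$ with respect to the order $\leq_d$: I would set $\agg(\vec{Y}) = d$ precisely when $\vec{X} \leq_d \vec{Y}$, and $\agg(\vec{Y}) = 1-d$ otherwise. Concretely, for $d = 1$ this is the conjunction $\bigwedge_{a : X_a = 1} Y_a$, and for $d = 0$ it unfolds to the disjunction $\bigvee_{a : X_a = 0} Y_a$, so in both cases $\agg$ is an easily recognisable monotone Boolean function.

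The first thing to verify is that $\agg$ is monotonic with respect to the standard coordinate-wise order (as required of aggregation functions). The cleanest way is to observe that the set $\{\vec{Y} : \agg(\vec{Y}) = 1\}$ is upward-closed under $\leq$ in either case: for $d = 1$ it is $\{\vec{Y} : \vec{X} \leq \vec{Y}\}$, which is upward-closed by transitivity of $\leq$; for $d = 0$ it is $\{\vec{Y} : \vec{Y} \not\leq \vec{X}\}$, which is upward-closed because $\vec{Y} \leq \vec{Y'}$ and $\vec{Y'} \leq \vec{X}$ would force $\vec{Y} \leq \vec{X}$. A function whose $1$-set is upward-closed is monotone, so $\agg$ qualifies.

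The two desired conclusions are then immediate. Since $\leq_d$ is reflexive, $\vec{X} \leq_d \vec{X}$ holds, so $\agg(\vec{X}) = d$; because $\vec{X} \in F(\vec{B})$, this yields $d \in \agg(F(\vec{B}))$. On the other side, the hypothesis says that $\vec{X} \leq_d \vec{X'}$ fails for every $\vec{X'} \in F(\vec{B'})$, so by construction $\agg(\vec{X'}) = 1-d$ for each such $\vec{X'}$, giving $d \notin \agg(F(\vec{B'}))$. I do not expect a genuine obstacle here; the only mild care needed is tracking which coordinate-wise direction $\leq_d$ encodes for the two values of $d$ and confirming monotonicity in the $d = 0$ case, where the ``flipped'' order must still produce a monotone function under the standard order (which it does, as the $1$-set remains upward-closed).
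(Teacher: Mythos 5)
Your proof is correct and matches the paper's argument: both construct the monotone indicator of the principal filter of $\vec{X}$ under $\leq_d$ (the conjunction $\bigwedge_{a : X_a = 1} Y_a$ for $d=1$, the dual disjunction for $d=0$) and observe that it evaluates to $d$ on $\vec{X}$ but to $1-d$ on every element of $F(\vec{B'})$ by hypothesis. The only cosmetic difference is that the paper spells out $d=1$ and declares $d=0$ analogous, whereas you treat both cases uniformly.
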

\begin{proof} We consider the case $d = 1,$ as the other case is fully analogous. Write $S = \{a \in N \mid X_a = 1\},$ then we define the required monotonic aggregation function $\agg = \bigwedge_{a \in S}a.$ Note that for any input $\vec{Y},$ it holds that $\agg(\vec{Y}) = 1$ iff $\vec{X} \leq \vec{Y}.$ Hence, it follows that $\agg(\vec{X}) = 1$ and $\agg(\vec{X'}) = 0$ for all $\vec{X'} \in F(\vec{B'}).$ As a result, $1 \in \agg(F(\vec{B}))$ and $1 \notin \agg(F(\vec{B'})),$ as desired. 
\end{proof}

\lemmaequivalpointwiseincrease*

\begin{proof} Assume $\vec{B}, a, d$ are such that (iii) and (iv) hold. We want to show that (i) and (ii) also hold for any monotonic aggregation function $\agg.$

First, for (i), assume $d \in \agg(F(\vec{B})).$ By definition, there exists $\vec{X} \in F(\vec{B})$ such that $\agg(\vec{X}) = d.$ By (iii), there exists $\vec{X'} \in F(\vec{B'})$ such that $\vec{X} \leq_d \vec{X'}.$ By monotonicity, $\agg(\vec{X}) \leq_d \agg(\vec{X'}),$ so $d \leq_d \agg(\vec{X'}).$
Hence, $\agg(\vec{X'}) = d,$ implying that $d \in \agg(F(\vec{B'})),$ as required.

Then, for (ii), write $d' = 1 - d$ and assume $d' \in \agg(F(\vec{B'})),$ then we want to show that $d' \in \agg(F(\vec{B})).$ Since $d' \in \agg(F(\vec{B'})),$ let $X' \in F(\vec{B'})$ be such that $d' = \agg(\vec{X'}).$ By (iv), there exists $\vec{X} \in F(\vec{B})$ such that $\vec{X} \leq_d \vec{X'}.$ By monotonicity, $\agg(\vec{X}) \leq_d \agg(\vec{X'}),$ so $\agg(\vec{X}) \leq_d d'.$ Since $d' = 1 - d$ this means that $\agg(\vec{X}) = d',$ so $d' \in \agg(F(\vec{B})),$ as required.

Conversely, now assume $\vec{B}, a, d$ are such that one of (iii) or (iv) fails. We want to show that either (i) or (ii) fails for some monotonic aggregation function $\agg.$

First, assume (iii) fails, then it follows that (i) also fails by directly applying \cref{lemma:construct-monot}.

Then, assume (iv) fails. Consider some $\vec{X'} \in F(\vec{B'})$ such that for all $\vec{X} \in F(\vec{B})$ it does not hold that $\vec{X} \leq_d \vec{X'}.$ Write $d' = 1 - d,$ and note that $\vec{X} \leq_d \vec{X'} \iff \vec{X'} \leq_{d'} \vec{X}.$ Hence, applying \cref{lemma:construct-monot} with $d$ replaced by $d'$ and with the roles of $(\vec{B}, \vec{X})$ and $(\vec{B'}, \vec{X'})$ reversed, we get the existence of a monotone aggregation function $\agg$ such that $d' \in \agg(F(\vec{B'}))$ and $d' \notin \agg(F(\vec{B})),$ failing (ii), as required.
\end{proof}

\thcastmax*

\begin{proof} Consider agents $N = \{a, a', \text{\underline{zero}}, u_1, \ldots, u_{n-3}\}$ casting ballots $\vec{B}$ as follows:~$B_a = a' \succ 1, B_{a'} = a \succ 1, B_\text{\underline{zero}} = 0, B_{u_i} = \text{\underline{zero}} \succ 1$ for all $1 \leq i \leq n - 3.$ A certificate $\vec{c}$ is consistent iff at least one of $c_{a} = 1$ and $c_{a'} = 1$ holds, and, since all ballots have size at most two, all consistent certificates are \MinMax certificates. As a result, one can check that $\vec{X} \in \MinMax(\vec{B})$ iff $X_{a} = X_{a'} = 1.$

Now, consider ballots $\vec{B'}$ where the only difference is $a$ switches their vote to $B'_a = 1.$ Here, the unique \MinMax certificate has everyone delegating to their first preferences, leading to the corresponding concrete votes $\vec{X'} = (1, 1, 0, \ldots)$ from which $\MinMax(\vec{B'}) = \{\vec{X'}\}.$

Since $\vec{X} = (1, 1, 1, \ldots) \in \MinMax(\vec{B})$ and $\vec{X} \nleq \vec{X'},$ by \cref{lemma:equival-pointwise-increase}, $\MinMax$ fails cast monotonicity. For instance, this is the case for any odd $n \geq 5$ and the majority aggregation function $\agg.$ By the same reasoning, since $\MinMax^1(\vec{B}) = \vec{X}$ and $\MinMax^1(\vec{B'}) = \vec{X'},$ $\MinMax^1$ also fails cast monotonicity. To show the same also holds for $\MinMax^0$ consider the same instance with the roles of 0 and 1 inverted. 
\end{proof}

To show that \MinSum and its variants satisfy cast monotonicity, we require the following technical lemma about Fulkerson's algorithm concerning delegation graphs differing only in edges exiting a single node; we find it of independent interest.

\begin{lemma}\label{lemma:parallel-fulkerson-executions}
For each vertex $a \in V \setminus \{r, 0, 1\}$ in the delegation graph there exists a set $E'_a \subseteq E$ such that, no matter how the edges exiting $a$ and their costs are altered, the set of tight edges $E'$ output by Fulkerson's algorithm satisfies: (i) $E_a' \subseteq E';$ (ii) for any $(x, y) \in E' \setminus E'_a$ vertex $x$ can reach $a$ but not $r$ along edges in $E'_a;$ (iii) if there is a single edge exiting $a$, namely $(a, d)$ for some $d \in D,$ then  $E' = E_a' \cup \{(a, d)\}.$
\end{lemma}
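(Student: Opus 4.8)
The plan is to identify $E'_a$ with the tight edges produced during an initial phase of Fulkerson's algorithm in which we deliberately avoid the edges leaving $a$, and then argue that everything done in this phase is insensitive to how those edges are configured. Concretely, I would exploit the fact that the output $(E',\calL)$ is independent of the nondeterministic choice of which component $L$ to pick: in \emph{phase 1}, repeatedly process strongly connected components $L$ with $r \notin L$ and $L \neq \{a\}$ (equivalently, treat both $r$ and $a$ as roots) until the only remaining processable component is $\{a\}$; in \emph{phase 2}, process $\{a\}$ and continue as usual. Set $E'_a$ to be the tight-edge set at the end of phase 1. The crucial observation is that while no edge out of $a$ is tight, $a$ has no tight out-edge, so its component is the singleton $\{a\}$, and the SCCs of the current tight graph are exactly $\{a\}$ together with the SCCs of the tight graph restricted to $V \setminus \{a\}$; the latter, the values $w_{\min}$, and the set of edges that reach weight $0$ all depend only on edges whose tail is not $a$. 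Since phase 1 never processes $\{a\}$ — the unique component containing $a$ — it never reduces or tightens an edge in $\rho(\{a\})$, and an induction on its steps shows that the whole phase, and hence $E'_a$, is identical no matter how the edges out of $a$ are altered.

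Property~(i) then follows at once: for any configuration of $a$'s edges, phase 1 is a legitimate partial execution of the real algorithm (each of its steps ignores $\rho(\{a\})$ and is therefore unaffected by the extra options there), tight edges are never deleted, and the final output is choice-independent, so $E'_a \subseteq E'.$ To prepare for~(ii), I would record the structure of $(V, E'_a)$ at the end of phase 1: a component $L$ with $r \notin L$, $L \neq \{a\}$ and no tight out-edge would still be processable, contradicting termination; hence the only sinks of the SCC-condensation of $(V, E'_a)$ are $\{a\}$ and $\{r\}$, and following tight edges from any vertex reaches $a$ or $r$ along $E'_a$.

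For~(ii), let $R$ be the set of vertices that reach $r$ along $E'_a$. I claim no vertex of $R$ ever gains a new out-edge in phase 2. Indeed, if $x \in R$ lay in a component $L'$ processed in phase 2, then, since $r \notin L'$, the path from $x$ to $r$ (using only $E'_a \subseteq E'$) must leave $L'$, giving $\rho(L') \cap E' \neq \varnothing$ and contradicting processability. Every edge added in phase 2 has its tail in the component being processed, so each $(x,y) \in E' \setminus E'_a$ has $x \notin R$; as $x$ nonetheless reaches $a$ or $r$ along $E'_a$, it must reach $a$ but not $r$, which is exactly~(ii). For~(iii), suppose the only edge out of $a$ is $(a,d)$ with $d \in \{0,1\}$. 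Since $a \neq 0,1$, the weight-$0$ edge $d \to r$ is processed in phase 1, so $d \in R$. Processing $\{a\}$ makes $(a,d)$ tight, yielding $E'_a \cup \{(a,d)\}$; this single edge can only enlarge the component containing $a$ (any new cycle must traverse $(a,d)$, hence pass through $a$), and because $d \in R$ while $r$ is outside that component, the component has a tight out-edge towards $r$ and is not processable, whereas all other components are unchanged and were already non-processable. Hence the algorithm halts with $E' = E'_a \cup \{(a,d)\}.$

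I expect the main obstacle to be the rigorous justification that phase 1 is genuinely independent of the edges leaving $a$ — the assertion that tightening edges out of $a$ cannot affect the SCC structure, the weight reductions, or the tightness decisions associated with any component avoiding $a$. This is precisely where the singleton-component observation and the choice-independence of Fulkerson's output carry the argument; once it is secured, properties~(i)--(iii) reduce to the reachability bookkeeping sketched above.
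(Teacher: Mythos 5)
Your proposal is correct and follows essentially the same route as the paper's proof: the paper likewise defines $E'_a$ as the output of a modified run of Fulkerson's algorithm that never processes a component containing $a$ (equivalent to your phase~1, since $a$'s tight component stays the singleton $\{a\}$ until an edge out of $a$ is tightened), invokes choice-independence of the output to get (i), uses termination to show every vertex reaches $a$ or $r$ along $E'_a$, and derives (ii) and (iii) from the structure of the supplementary steps exactly as you do.
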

\begin{proof} 
Consider running Fulkerson's algorithm with the following modification:~change the loop condition to additionally require $a \notin L$ and never pick $L$ with $a \in L$ in \cref{line:nondet}. Let $(E'_a, \calL_a)$ be the respective output. Note that, irrespective of the edges exiting node $a$ and their weights, if $(E', \calL)$ is the output of the full algorithm, then $E'_a \subseteq E'$ and $\calL_a \subseteq \calL,$ implying (i). To see this, recall that the output of the full algorithm does not depend on the nondeterministic choices made on \cref{line:nondet}, so one can arrange for the full algorithm to be carried out by first executing the modified algorithm (which does not depend on edges exiting $a$) and the performing any supplementary steps.

Moreover, note that, with respect to edges in $E'_a,$ every vertex can reach either $r$ or $a$. To see this, assume that this was not the case for some vertex $x \in V \setminus \{r, a\}.$ Then, all vertices reachable from $x$ also fail the property. In particular, let $C$ be a sink strongly connected component; i.e., $\rho(C) \cap E'_a = \varnothing;$ reachable from $x,$ then it follows that $r, a \notin C.$ Hence, $C$ satisfies the conditions to execute one more iteration of the while loop of the modified algorithm with $L = C,$ contradicting the algorithm's termination. As a result, assuming the full algorithm is carried out by first executing the modified algorithm and then performing any necessary supplementary steps, the additional steps will always select $L$ satisfying $a \in L$ and $L \subseteq V_a,$ where we wrote $V_a \subseteq V \setminus \{r\}$ for the set of vertices that can reach $a$ but not $r$ along edges in $E'_a.$ To see why this is the case, note that for all additional steps, the only strongly connected component excluding $r$ and satisfying $\rho(L) \cap E' = \varnothing$ will be that of $a$ as all vertices can reach either $a$ or $r$ along tight edges. Moreover, the process terminates the first time such an $L$ is selected and an edge $(x, y) \in \rho(V_a)$ becomes tight, as then all vertices will be able to reach $r.$ This implies (ii), namely that all edges $(x, y)$ added during supplementary steps satisfy $x \in V_a.$

To see (iii) also holds, note that if $(a, d)$ is the only edge exiting $a,$ then, after executing the modified algorithm, the full algorithm will terminate in the next step by making this edge tight. This is because the strongly connected component of $a$ after terminating the modified algorithm consists solely of $a$ since no previous steps have made any edges exiting $a$ tight.
\end{proof}

\thmminsumiscastmonotonic*
\begin{proof} Follows from \cref{lemma:minsum-resolute-is-cast-monotonic,lemma:min-sum-is-cast-monotonic} below.
\end{proof}

\begin{lemma}\label{lemma:minsum-resolute-is-cast-monotonic} $\MinSum^0$ and $\MinSum^1$ are cast-monotonic.     
\end{lemma}
\begin{proof} Since both rules are resolute, we show this by verifying condition (iii) of  \cref{lemma:equival-pointwise-increase}. Consider any collection of ballots $\vec{B},$ any agent $a \in N$ and, without loss of generality, $d = 1.$ Write $\vec{B'}$ for $\vec{B}$ with $B_a$ changed to a direct vote for $1$. We want to show that $\MinSum^p(\vec{B}) \leq \MinSum^p(\vec{B'})$ for $p \in \{0, 1\}.$ Let $E_\vec{B}'$ and $E_\vec{B'}'$ be the tight edges output by Fulkerson's algorithm executed on the delegation graphs of $\vec{B}$ and $\vec{B'}$ respectively. Note that the two delegation graphs only differ in the edges exiting $a \in N = V \setminus \{r, 0, 1\},$ so apply \cref{lemma:parallel-fulkerson-executions} to get a set $E'_a \subseteq E_\vec{B}' \cap E_\vec{B'}'.$ Moreover, write $V_a \subseteq V \setminus \{r\}$ for the set of vertices that can reach $a$ but not $r$ along edges in $E'_a.$ Then any $(x, y) \in (E_\vec{B}' \cup E_\vec{B'}') \setminus E'_a$ satisfies $x \in V_a.$ Finally, since $(a, 1)$ is the only edge exiting $a$ for $\vec{B'},$ we have $E'_{\vec{B'}} = E'_a \cup \{(a, 1)\}.$

First, we show that $\MinSum^1(\vec{B}) \leq \MinSum^1(\vec{B'}).$ By the last part of \cref{thm:manipulation-minsum}, this amounts to showing that for any $x \in N,$ if 1 is reachable from $x$ along edges in $E_\vec{B}',$ then it is also reachable from $x$ along edges in $E'_{\vec{B'}} = E'_a \cup \{(a, 1)\}.$ Assume a path $P$ from $x$ to 1 along edges in $E_\vec{B}'$ exists. If $P$ only uses edges in $E'_a,$ then the conclusion is immediate. Otherwise, $P$ uses some edge in $E_\vec{B}' \setminus E'_a.$ Let $(i, j) \in E_\vec{B}' \setminus E'_a$ be the first such edge encountered along $P$ if traversed from $x$ to 1. This implies that $i \in V_a,$ so there is a path along edges in $E_a'$ from $i$ to $a.$ Combining the fragment of $P$ before edge $(i, j),$ the path from $i$ to $a$ along edges in $E_a'$ and the edge $(a, 1),$ we get a path from $x$ to $1$ using only edges in $E'_{\vec{B'}} = E'_a \cup \{(a, 1)\},$ as desired.

Next, we show that $\MinSum^0(\vec{B}) \leq \MinSum^0(\vec{B'}).$ By the last part of \cref{thm:manipulation-minsum} with the roles of 0 and 1 inverted, this amounts to showing that for any $x \in N,$ if 0 is reachable from $x$ along edges in $E_\vec{B'}' = E'_a \cup \{(a, 1)\},$ then it is also reachable from $x$ along edges in $E'_{\vec{B}}.$ Assume a path $P$ from $x$ to 0 along edges in $E_\vec{B'}' = E'_a \cup \{(a, 1)\}$ exists. By construction of the delegation graph, $P$ can not use the edge $(a, 1),$ so it really is a path along edges in $E'_a.$ Since $E'_a \subseteq E_\vec{B}',$ we get the conclusion.
\end{proof}

\begin{lemma}\label{lemma:min-sum-is-cast-monotonic} \MinSum satisfies cast monotonicity.
\end{lemma}
\begin{proof} In short, this follows by combining the results for the resolute variants in \cref{lemma:minsum-resolute-is-cast-monotonic} with \cref{coro:min-max-elements}.

We show this by verifying the conditions (iii) and (iv) of \cref{lemma:equival-pointwise-increase}. Consider any collection of ballots $\vec{B},$ any agent $a \in N$ and, without loss of generality, $d = 1.$ Write $\vec{B'}$ for $\vec{B}$ with $B_a$ changed to a direct vote for $1$. We want to show that:~(iii) for any $\vec{X} \in \MinSum(\vec{B})$ there is $\vec{X'} \in \MinSum(\vec{B'})$ such that $\vec{X} \leq \vec{X'};$ (iv) for any $\vec{X'} \in \MinSum(\vec{B'})$ there is $\vec{X} \in \MinSum(\vec{B})$ such that $\vec{X} \leq \vec{X'}.$ 

For (iii), by \cref{coro:min-max-elements}, any $\vec{X} \in \MinSum(\vec{B})$ satisfies $\vec{X} \leq \MinSum^1(\vec{B}).$ By \cref{lemma:minsum-resolute-is-cast-monotonic} and \cref{lemma:equival-pointwise-increase} for resolute rules, $\MinSum^1(\vec{B}) \leq \MinSum^1(\vec{B'}).$ By transitivity, $\vec{X} \leq \MinSum^1(\vec{B'}),$ as required.

For (iv), by \cref{coro:min-max-elements}, any $\vec{X'} \in \MinSum(\vec{B'})$ satisfies $\MinSum^0(\vec{B'}) \leq \vec{X'}.$ By \cref{lemma:minsum-resolute-is-cast-monotonic} and \cref{lemma:equival-pointwise-increase} for resolute rules, $\MinSum^0(\vec{B}) \leq \MinSum^0(\vec{B'}).$ By transitivity, $\MinSum^0(\vec{B}) \leq \vec{X'},$ as required.
\end{proof}

\thcastsumfailsandor*
\begin{proof} We show this for $\textsc{Or}_2,$ as the case of $\textsc{And}_2$ is analogous. 
   Consider agents $N = \{a, b, c, d, e, f, \text{\underline{zero}}\}$ with ballots $B_a = c \succ d \succ 1, B_b = \text{\underline{zero}} \succ 1, B_c = a \lor b \succ d \succ 1, B_d = a \lor b \succ c \succ 1, B_e = b \succ 1, B_f = b \succ 1$ and $B_\text{\underline{zero}} = 0.$ First, note that the certificate $\vec{c} = (0, 1, 0, 0, 0, 0, 0)$ of sum 1 is consistent, leading to concrete votes $\vec{X} = (1, 1, 1, 1, 1, 1, 0).$ Moreover, one can check that any consistent certificate $\vec{c'}$ satisfying $c'_b = 0$ requires $c'_a + c'_c + c'_d \geq 2.$ Hence, $\vec{c}$ is the unique \MinSum certificate with corresponding concrete votes $\vec{X},$ from which $\MinSum_\calF(\vec{B}) = \{\vec{X}\}.$

   Now, consider ballots $\vec{B'}$ where the only difference is $a$ switching their vote to $B'_a = 1.$ Here, the unique \MinSum certificate has everyone delegating to their first preferences, leading to the corresponding concrete votes $\vec{X'} = (1, 0, 1, 1, 0, 0, 0),$ from which $\MinSum_\calF(\vec{B'}) = \{\vec{X'}\}.$

   Since $\vec{X} \nleq \vec{X'}$ this fails cast monotonicity by \cref{lemma:equival-pointwise-increase}. For instance, this is the case for the majority aggregation function $\agg.$
   The results concerning $\MinSum^p_\mathcal{F}$ for $p \in \{0, 1\}$ follow since $\MinSum_\calF$ is resolute for $\vec{B}, \vec{B'}$
\end{proof}

\subsection{Axiomatic Dichotomies}

One can combine \cref{thm:min-sum-is-cast-monotonic,th:cast-sum-fails-and-or} with reasoning similar to that in the proof of \cref{lemma:dich-minsum} to get that $\calF = \LIQUID$ is the only class of monotonic functions for which $\MinSum_\mathcal{F}$ and its biased variants satisfy cast monotonicity.\footnote{Negations do not help either:~the beginning of \cref{section:Complexity} gives an example inline.} This gives the following interesting axiomatic dichotomy.

\begin{corollary} Assume $\calF \subseteq \textsc{Mon}.$ If $\calF = \LIQUID,$\footnote{Recall that we implicitly assume $\LIQUID \subseteq \calF.$} then $\MinSum_\calF, \MinSum^0_\calF, \MinSum^1_\calF$ satisfy cast monotonicity. Otherwise, none of them satisfies cast monotonicity.
\end{corollary}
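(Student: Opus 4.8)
The plan is to prove the two directions separately, reusing the results already established for the two extremes of the spectrum. For the ``if'' direction, observe that when $\calF = \LIQUID$ the enhanced model collapses to the classic model, so that $\MinSum_\calF, \MinSum^0_\calF, \MinSum^1_\calF$ coincide with the classic rules $\MinSum, \MinSum^0, \MinSum^1$. Cast monotonicity then follows immediately from \cref{thm:min-sum-is-cast-optimal}, with no further work required.

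For the ``otherwise'' direction I would mirror the argument used in \cref{lemma:dich-minsum}. Assuming $\LIQUID \subsetneq \calF \subseteq \textsc{Mon}$, fix some $f \in \calF \setminus \LIQUID$. The crucial observation is the identity $\LIQUID = \textsc{Or} \cap \textsc{And}$, which forces $f \in (\calF \setminus \textsc{Or}) \cup (\calF \setminus \textsc{And})$; that is, at least one of ``$f$ is not a pure disjunction'' or ``$f$ is not a pure conjunction'' must hold. In the first case \cref{lemma:o-donnell} supplies indices $p \neq q$ and a partial valuation $\nu$ fixing every argument of $f$ other than $x_p, x_q$ with $f(\nu) = x_p \land x_q$; symmetrically, in the second case it yields $f(\nu) = x_p \lor x_q$.

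The remaining step is to transplant the concrete counterexamples from the proof of \cref{th:cast-sum-fails-and-or} into $\calF$. I would introduce the constant voters $\text{\underline{zero}}$ and $\text{\underline{one}}$ (if not already present), then replace each occurrence of a binary $\land$ (resp.\ $\lor$) in that counterexample by $f$ with its non-free arguments pinned to these constant voters according to $\nu$, routing the two free slots $x_p, x_q$ to the two agents the gadget wishes to combine; permutation-invariance of $\calF$ guarantees this is an admissible ballot. Since $f(\nu)$ is extensionally equal to $x_p \land x_q$ (resp.\ $x_p \lor x_q$), the concrete-vote computations, hence the two witnessing profiles $\vec{B}, \vec{B'}$ and their unique \MinSum certificates, behave exactly as in \cref{th:cast-sum-fails-and-or}. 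The constant voters merely contribute fixed, identical coordinates to the resulting collections of concrete votes, so the witnessing inequality $\vec{X} \nleq_d \vec{X'}$ (which, via \cref{lemma:equival-pointwise-increase}, certifies failure of cast monotonicity) is inherited unchanged. Resoluteness of $\MinSum_\calF$ on $\vec{B}, \vec{B'}$ then propagates the failure to $\MinSum^0_\calF$ and $\MinSum^1_\calF$ as well.

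The main obstacle I anticipate is bookkeeping rather than conceptual: one must check that substituting $f(\nu)$ for the binary connective does not create new consistent certificates of equal or smaller cost (which could restore monotonicity or spoil resoluteness), and that the auxiliary constant voters do not disturb the cost comparison. Both concerns reduce to verifying that the added constant-vote coordinates are forced identically in $\vec{B}$ and $\vec{B'}$ and that the simulated connective is accessed only through its two free arguments, so the cost accounting of the original gadget carries over verbatim.
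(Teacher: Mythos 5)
Your proposal is correct and follows essentially the same route as the paper, which proves this corollary exactly by combining \cref{thm:min-sum-is-cast-optimal} for the $\calF = \LIQUID$ case with \cref{th:cast-sum-fails-and-or} and the simulation argument of \cref{lemma:dich-minsum} (via \cref{lemma:o-donnell}) for the other direction. Your additional bookkeeping remarks about the pinned constant voters and resoluteness match the checks implicit in the paper's sketch.
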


The same dichotomy can also be proven for \LexiMin by noting that the instance constructed in the proof of \cref{th:cast-sum-fails-and-or} still applies.

\subsection{Note on a Previous Hardness Proof} 

It is worth noting that, upon closer inspection, the proof of Colley et al.~(\citeyear{grandi}) that $\MinSum_\textsc{And}$ is NP-hard reduces from Feedback Arc Set. If a variant of Feedback Arc Set where each node has out-degree at most $2$ were to be used, then NP-hardness for $\MinSum_{\textsc{And}_2}$ would follow, but not its hardness of approximation. In the present paper, we get to this result through a different reduction that also enables proving the hardness of approximation. A proof that this variant of Feedback Arc Set is NP-hard is relatively difficult to identify in the literature: we were only able to identify one in unrefereed sources,\footnote{\url{https://cs.stackexchange.com/questions/117291/is-finding-the-minimum-feedback-arc-set-on-graph-with-two-outgoing-arcs-for-each}} although it is mentioned without proof in refereed works.

\end{document}